\documentclass[11pt,a4paper]{article}

\usepackage[margin=1.1in]{geometry}
\usepackage[utf8]{inputenc}
\usepackage[T1]{fontenc}
\usepackage{mathptmx}
\usepackage{amsmath,amssymb,amsthm,mathtools}
\usepackage{booktabs}
\usepackage{microtype}
\usepackage{xcolor}
\usepackage{algorithm}
\usepackage{algpseudocode}
\usepackage{enumitem}
\usepackage[numbers,compress,sort]{natbib}
\usepackage{hyperref}

\allowdisplaybreaks

\newtheorem{theorem}{Theorem}[section]
\newtheorem{lemma}[theorem]{Lemma}
\newtheorem{corollary}[theorem]{Corollary}
\newtheorem{proposition}[theorem]{Proposition}
\theoremstyle{definition}
\newtheorem{definition}[theorem]{Definition}
\newtheorem{problem}[theorem]{Problem}
\theoremstyle{remark}
\newtheorem{remark}[theorem]{Remark}

\newcommand{\E}{\mathbb{E}}
\newcommand{\R}{\mathbb{R}}
\newcommand{\F}{\mathcal{F}}
\newcommand{\OPT}{\mathrm{OPT}}
\newcommand{\Tr}{\mathrm{Tr}}

\newcommand{\eps}{\varepsilon}
\newcommand{\poly}{\mathrm{poly}}
\newcommand{\supp}{\mathrm{supp}}
\newcommand{\Topk}{\mathrm{Top}_k}
\newcommand{\Sg}{\mathrm{Sg}}
\DeclareMathOperator{\Bern}{Bern}

\title{A PTAS for Non-Adaptive Stochastic Top-$k$ Sum\\
under General Combinatorial Constraints\thanks{A large language model was used to assist with drafting and copy-editing. The author verified all scientific claims, proofs, and references.}}

\author{%
  Yu Liu \\
  \texttt{liuyujyyz@gmail.com}
}

\begin{document}
\maketitle

\begin{abstract}
We study non-adaptive selection of a feasible set $S$ so as to maximize the expected sum of the $k$ largest realized values among independent nonnegative discrete random variables.
The same objective arises when hiring a team of $k$ workers or when computing VCG welfare in an $\ell$-unit auction.
The main setting is a fixed-dimensional nonnegative packing family: the natural LP has $d=O(1)$ packing inequalities with binary coefficients.
No single algorithm achieves a constant factor on every membership family (already at $k=1$).
Given an $\alpha$-approximate max-sum oracle, a decreasing surplus search yields ratio $\alpha/((1+\alpha)(1+\eps))$ for every $k\ge 1$ (cuts included).
Every fixed-$d$ packing family already has a deterministic max-sum PTAS, hence inherits that constant.
Unlike schemes for $\E[\max]$, which encode a set by a void probability, the decoder uses two additive approximations to the occupancy functional $p\mapsto\E[\min(k,N(p))]$: type-histograms when $k=O(1/\eps^2)$, and a three-dimensional mixture-quantile type when $k=\Omega(1/\eps^2)$.
Those signatures are realized by a packing LP rather than by exact-sum, after enumerating $n^{f(d,1/\eps)}$ heavy items.
The result is a PTAS for every $k\ge 1$ on every fixed-$d$ packing family, including binary one- and two-dimensional knapsack.
In this packing setting the scheme is essentially optimal as a generic guarantee: there is no FPTAS that works for every such $\F$ unless $P=NP$, and no EPTAS unless $W[1]=FPT$ (two-dimensional knapsack is a witness, already at $k=1$).
A separate boundary is query-weight exact-sum, which includes DAG paths and matchings and is incomparable with fixed-$d$ packing.
That oracle also yields a PTAS for every $k$, so $d$-dimensional packing is a useful taxonomy, not a partition of every family that admits a PTAS.
The minimization problem $\min_{S\in\F}\E[\Topk(S)]$ uses the same signatures: an upward min-sum search yields a constant factor, query-weight exact-sum again yields a PTAS, and every fixed-$d$ covering family admits a PTAS by realizing those signatures in the covering LP (fractionals are raised to $1$).
Two-dimensional covering knapsack rules out a generic FPTAS on that class.
\end{abstract}

\section{Introduction}
\label{sec:intro}

A recruiter must choose a feasible team before any interview outcome is observed.
Each candidate $i$ has an independent, known discrete law $X_i$ for the value they will deliver, and the team's output is the sum of the $k$ strongest realized performances: weaker members are slack, not a linear bonus.
The same functional appears as VCG welfare in an $\ell$-unit auction when the platform first shortlists a feasible set of bidders and only then runs the auction~\citep{kleinberg2018,psomas2020,gravin2024}.
We study this problem in the fully non-adaptive model: the algorithm returns $S\in\F\subseteq 2^{[n]}$ with no access to realizations.
Write $\Topk(S)$ for the sum of the $k$ largest coordinates of $(X_i)_{i\in S}$ (pad with zeros if $|S|<k$) and $\OPT=\max_{S\in\F}\E[\Topk(S)]$.

When $k=1$ the objective is the expected maximum, or maximum-element.
That problem has been studied in depth.
It is already NP-hard under cardinality~\citep{goel2010,psomas2020}.
It admits a PTAS under a query-weight exact-sum oracle~\citep{chen2016,liu2017}, and an EPTAS under cardinality~\citep{psomas2020,segev2021,segev2024}.

Several of those conclusions have been pushed from a single maximum toward Top-$k$, but only under a cardinality (or $\ell$-unit) constraint.
Cardinality-constrained welfare in position auctions is a linear combination of order statistics and now has a PTAS~\citep{gravin2024}.
\citet{segev2021} sketch an extension of their ProbeMax scheme from one maximum to $\ell$-unit welfare.
Adaptive probing yields a PTAS for the expected sum of the $k$ largest probed values when $k=O(1)$ and at most $m$ variables may be opened sequentially~\citep{fu2018}.
The $k$-th order statistic itself (MAX-$k$, not the Top-$k$ sum) is already hard for $k=2$ under cardinality~\citep{psomas2020}.

Under a general family $\F$, those conclusions do not extend to $\E[\Topk]$ in any natural way.
A PTAS for the maximum is not a PTAS for the sum of the $k$ largest: $\Topk$ is not a maximum over an expanded ground set.
The existing $k=1$ algorithms match a void probability $\prod_i(1-q_i)$~\citep{chen2016} or a single truncated-mean scale~\citep{liu2017}.
Arbitrary $k$ changes the stochastic object.
By layer cake the objective is an integral of the truncated occupancy functional $p\mapsto\E[\min(k,N(p))]$, and matching one void probability does not control that functional.
Naive truncation of large atoms likewise destroys lotteries that carry $\Theta(\OPT)$ of the objective.
The $k=1$ analyses therefore do not yield a PTAS for $k\neq 1$.
Cardinality schemes~\citep{segev2021,gravin2024} and the adaptive constant-$k$ PTAS~\citep{fu2018} remain tied to a cardinality (or probe-budget) constraint: they do not give a decoder for a general packing or covering family, and they do not treat growing $k$ in the non-adaptive packing model.

The algorithmic difference is the sketch.
We keep the surplus scale and the mean-preserving fold from the $k=1$ theory, and we replace the void-probability signature by two additive approximations of $p\mapsto\E[\min(k,N(p))]$: an occupancy histogram with Poissonization when $k=O(1/\eps^2)$, and a three-dimensional surplus type at a mixture quantile when $k=\Omega(1/\eps^2)$.
Either sketch interfaces with combinatorial feasibility in the same way a one-dimensional type would: by query-weight exact-sum, or by a basic feasible solution of a fixed-dimensional packing or covering LP after enumerating $n^{f(d,1/\eps)}$ heavy items.
The two ranges meet at $k=\Theta(1/\eps^2)$.

The main line of the paper is a four-rung ladder on a single class: fixed-dimensional nonnegative packing (Definition~\ref{def:packing}).
If max-sum on an arbitrary $\F$ admits an $\alpha$-approximation, a decreasing surplus search yields ratio $\alpha/((1+\alpha)(1+\eps))$ for every $k\ge 1$, including cuts via Goemans--Williamson~\citep{goemans1995}.
Every fixed-$d$ packing family already has a deterministic max-sum PTAS~\citep{frieze1984,ibarra1975}, so it inherits that constant automatically.
On packing the two occupancy sketches are realized by the natural LP (Section~\ref{sec:packing}), so every fixed-$d$ packing family admits a PTAS for every $k\ge 1$, including binary one- and two-dimensional knapsack.

In this packing setting the generic guarantee is essentially optimal.
Already at $k=1$, sparse lotteries encode deterministic max-sum into $\E[\Topk]$.
There is no single constant-factor algorithm on every membership family~\citep{zuckerman2006}, and no single PTAS on every family that already has constant-factor max-sum~\citep{papadimitriou1991,hastad2001}.
On packing itself there is no FPTAS that works for every $\F$ unless $P=NP$, and no EPTAS unless $W[1]=FPT$: two-dimensional knapsack lies in the class and supplies the witness~\citep{magazine1984,kulik2010}.
A finer scheme on a concrete packing family must use extra structure of that $\F$; cardinality for $k=1$ already has an EPTAS.

Fixed-$d$ packing is a classification, not a partition of every family that admits a PTAS.
Query-weight exact-sum (Definition~\ref{def:exact-sum}) is an incomparable sufficient condition: DAG paths and matchings have it and are not fixed-$d$ packing; binary knapsack is packing and does not have it.
Section~\ref{sec:exactsum} records that the same signatures, realized by exact-sum rather than by the packing LP, yield a PTAS on every query-weight family.
The two boundaries meet on cardinality and on unary knapsack; neither contains the other.

Minimization of the same functional is the covering counterpart.
An upward min-sum search yields a constant factor (Theorem~\ref{thm:min-transfer}), as for $\E[\max]$ in~\citet{liu2017}.
Every fixed-$d$ covering family already has a deterministic min-sum PTAS (the covering analogue of~\citet{frieze1984}), hence inherits that constant.
The same two occupancy sketches, realized in the covering LP by raising at most $O(d)$ fractionals to $1$, give a PTAS for every $k\ge 1$ (Theorem~\ref{thm:covering}), including binary covering knapsack.
Minimization is therefore the covering counterpart of the same decoder, not a second algorithm.
Two-dimensional covering knapsack rules out a generic FPTAS on that class (Theorem~\ref{thm:min-limits}).
Query-weight exact-sum is again a second boundary (Corollary~\ref{cor:min-allk}): $s$--$t$ paths, spanning trees, and assignment are neither packing nor covering.

\paragraph{Contributions.}
\begin{enumerate}[leftmargin=1.4em,itemsep=2pt]
\item There is no single algorithm that achieves a constant factor on every membership family $\F$ (already at $k=1$). Given an $\alpha$-approximate max-sum oracle, a decreasing surplus search yields ratio $\frac{\alpha}{(1+\alpha)(1+\eps)}$ for every $k\ge 1$ (Theorem~\ref{thm:transfer}). Every fixed-$d$ packing family has a max-sum PTAS, hence sits on this rung.
\item There is no single PTAS on every family that already has constant-factor max-sum (already at $k=1$; Theorem~\ref{thm:noptas}). On every fixed-$d$ packing family the problem admits a PTAS for every $k\ge 1$, by realizing two approximations of $p\mapsto\E[\min(k,N(p))]$---occupancy histograms or a mixture-quantile type---in the packing LP (Theorem~\ref{thm:packing}). Those signatures are the difference from the $k=1$ void-probability schemes.
\item There is no FPTAS that works for every fixed-$d$ packing family unless $P=NP$, and no EPTAS unless $W[1]=FPT$ (Theorem~\ref{thm:limits}; two-dimensional knapsack). In the packing setting a generic PTAS is therefore essentially optimal.
\item Query-weight exact-sum is a second, incomparable PTAS boundary (Theorems~\ref{thm:main} and~\ref{thm:quantile}, Corollary~\ref{cor:allk}, Section~\ref{sec:exactsum}).
\item The same ladder holds for $\min_{S\in\F}\E[\Topk(S)]$: a min-sum transfer (Theorem~\ref{thm:min-transfer}), a PTAS on every fixed-$d$ covering family (Theorem~\ref{thm:covering}), a matching FPTAS lower bound (Theorem~\ref{thm:min-limits}), and a query-weight PTAS (Corollary~\ref{cor:min-allk}).
\end{enumerate}
Proofs are in the appendix.

\section{Related work}
\label{sec:related}

\paragraph{Expected maxima.}
Selecting a feasible set to maximize $\E[\max_{i\in S}X_i]$ is the $k=1$ case of our problem.
\citet{goel2010} and later \citet{psomas2020} established NP-hardness already under cardinality.
\citet{chen2016} gave the first PTAS for general $\F$ via Bernoulli decomposition of the maximum and signatures that preserve the void probability.
The truncation scale, the geometric max-sum search, and the mean-preserving fold for $\E[\max]$ were developed in~\citet{liu2017}.
Those arguments are for $k=1$: the signature is a void probability, which determines $\E[\min(1,N)]$ and nothing more.
They do not yield a PTAS for $\E[\Topk]$ when $k\neq 1$, and the occupancy and mixture-quantile sketches below are not a reuse of that scalar at a larger $k$.
\citet{psomas2020} and \citet{segev2021,segev2024} obtained EPTASes under cardinality (non-adaptive ProbeMax), which are stronger than Theorem~\ref{thm:packing} when $k=1$ and $\F$ is a cardinality constraint.

\paragraph{Order statistics, teams, and auctions.}
\citet{kleinberg2018} study team performance with test scores.
\citet{psomas2020} distinguish the expected maximum from the $k$-th order statistic and prove strong hardness for the latter.
VCG welfare in an $\ell$-unit auction is exactly $\E[\Topk]$; GSP equilibria match VCG welfare in position auctions~\citep{edelman2007,varian2007}.
\citet{gravin2024} give a Poisson relaxation and, with brute force for small $k$, a PTAS for cardinality-constrained position-auction welfare, together with a sketched $\ell$-unit extension of~\citet{segev2021}.

\paragraph{Adaptive probing and stochastic DP.}
In the adaptive ProbeMax / ProbeTop-$k$ model the algorithm opens variables sequentially and may stop~\citep{gupta2016probing,fu2018}.
\citet{fu2018} give a PTAS for the adaptive expected Top-$k$ \emph{sum} under a cardinality probe budget, for constant $k$.
That scheme does not apply to a general family $\F$, does not produce a non-adaptive decoder, and does not approximate $p\mapsto\E[\min(k,N(p))]$ for growing $k$.

\paragraph{Minimizing $\E[\Topk]$.}
The opposite problem $\min_{S\in\F}\E[\Topk(S)]$ is the stochastic Top-$k$ norm (a $k$-sum / $k$-centrum criterion).
\citet{ibrahimpur2020} give approximation algorithms for stochastic min-norm load balancing and spanning trees, by reducing a symmetric norm to a small collection of expected Top-$k$ norms.
Deterministic $k$-sum is classical in location and robust routing: one minimizes the $k$ largest distances or the $k$ largest edge delays on a path or tree.
Those applications live on covering families and on bases / paths, not on downward-closed packing.
An upward min-sum search supplies a constant-factor scale $W$ (Theorem~\ref{thm:min-transfer}).
Fixed-$d$ covering realizes the signatures in the covering LP (Theorem~\ref{thm:covering}); query-weight exact-sum is a second PTAS boundary (Corollary~\ref{cor:min-allk}).

\paragraph{Stochastic combinatorial optimization.}
Thresholded surplus, Poisson-binomial tails, and Le Cam approximation are standard tools in stochastic packing and expected-utility maximization~\citep{li2011,lecam1960}.
Submodular maximization~\citep{nemhauser1978,sviridenko2004,calinescu2011} already gives a $(1-1/e)$-approximation because $S\mapsto\E[\Topk(S)]$ is monotone submodular (Appendix~\ref{app:submod}); a PTAS is a different guarantee.

Table~\ref{tab:landscape} records the comparison.
The main guarantee is a PTAS for $\E[\Topk]$ on every fixed-$d$ packing family, for every $k\ge 1$ (Theorem~\ref{thm:packing}), above a constant-factor scheme on every family with approximate max-sum (Theorem~\ref{thm:transfer}).
On packing there is no generic FPTAS or EPTAS (Theorem~\ref{thm:limits}).
Query-weight exact-sum is a second PTAS boundary (Corollary~\ref{cor:allk}).
The minimization counterparts are Theorems~\ref{thm:min-transfer},~\ref{thm:covering} and~\ref{thm:min-limits}, and Corollary~\ref{cor:min-allk}.

\begin{table}[t]
\caption{Where the main packing ladder and the exact-sum boundary sit.
``Card.'' means a cardinality (or $\ell$-unit shortlist) constraint.
Fu et al.\ is adaptive probing, not the non-adaptive model.}
\label{tab:landscape}
\centering
\small
\begin{tabular}{@{}llll@{}}
\toprule
Paper & Objective & Feasibility & Guarantee \\
\midrule
\citet{chen2016} & $\E[\max]$ & exact-sum $\F$ & PTAS \\
\citet{psomas2020} & $\E[\max]$ / $k$-th OS & card. & EPTAS / hard \\
\citet{segev2021} & ProbeMax & card.\ (sketched $\ell$-unit) & EPTAS \\
\citet{gravin2024} & position welfare & card. & PTAS \\
\citet{fu2018} & adaptive $\E[\Topk]$, $k=O(1)$ & card.\ probe budget & PTAS \\
this paper & $\E[\Topk]$ (any $k$) & max-sum $\F$ & const.\ (not uniform) \\
this paper & $\E[\Topk]$ (any $k$) & fixed-$d$ packing & PTAS; no uniform FPTAS/EPTAS \\
this paper & $\E[\Topk]$ (any $k$) & query-wt.\ exact-sum & PTAS \\
this paper & $\min\E[\Topk]$ (any $k$) & min-sum $\F$ & const.\ (not uniform) \\
this paper & $\min\E[\Topk]$ (any $k$) & fixed-$d$ covering & PTAS; no uniform FPTAS \\
this paper & $\min\E[\Topk]$ (any $k$) & query-wt.\ exact-sum & PTAS \\
\bottomrule
\end{tabular}
\end{table}

\section{Technical overview}
\label{sec:overview}

Both sketches share a surplus scale and a mean-preserving fold with the $\E[\max]$ theory; they differ from that theory in the additive signature, and they differ from each other in whether the signature is an occupancy histogram or a mixture-quantile triple.
Realization is then by a packing LP or by exact-sum.

\paragraph{A scale from truncated means.}
For a threshold $T$, the surplus of a set is $\sum_{i\in S}\E[(X_i-T)_+]$.
There is always a $T$ at which the maximum surplus over $\F$ equals $kT$.
Pathwise, $\Topk(x)\le kT+\sum_i(x_i-T)_+$, so $\OPT\le 2kT$ with no independence.
Independence is used only on the matching lower bound: at the surplus-maximizing set, layer cake plus the Poisson-binomial inequality $\E[\min(k,N)]\ge k\mu/(k+\mu)$ produce $\E[\Topk]\ge kT$.
Thus $W:=kT$ satisfies $W\le\OPT\le 2W$, and every later error budget is a fraction of $W$.
The $k=1$ case is the truncation argument of~\citet{liu2017}.
If max-sum is available only approximately, the same comparison is run as a decreasing search on $T$: halt at the first height where the oracle set has surplus greater than $kT$.
The output is at least $kT_0$ and every feasible set is at most $kT_0(1+\eps)(1+1/\alpha)$, which is ratio $\alpha/((1+\alpha)(1+\eps))$ independently of $k$.

\paragraph{Fold, then grid.}
Atoms larger than $M=W/\eps$ cannot be discarded.
A single lottery $B(H,p)$ with mean $\Theta(W)$ and $H\gg M$ has $\E[\Topk]=\Theta(W)$, but $\min\{X,M\}$ has expectation $\Theta(\eps W)$.
The correct operation is a mean-preserving contraction of each survival function, as for $\E[\max]$ in~\citet{liu2017}: for $t<M$ one adds $\tau_i/M$ to $p_i(t)$, where $\tau_i=\E[(X_i-M)_+]$.
The mass added on $[0,M)$ equals the tail mass $\tau_i$ whenever the survival stays at most $1$.
Lipschitzness of $p\mapsto\E[\min(k,N(p))]$ turns that identity into an $O(c^2\eps)\,W$ change of the objective.
A grid of spacing $\delta=\eps W$ then costs at most $k\delta=k\eps W$ pathwise and leaves $h=1/\eps^2$ layers.
Layer cake becomes a sum of $h$ occupancy terms, and a per-layer error of $O(\eps^2)$ sums to $O(\eps)\,W$ because $\delta h=W/\eps$.
The grid term is the source of the factor $k$ in the relative error $O((c^2+k)\eps)$.

\paragraph{Signatures for occupancy, not for the void probability.}
This is the point at which the algorithm parts company with the $\E[\max]$ literature.
\citet{chen2016} sketch a set by quantities that determine $\Pr[N=0]=\prod(1-q_i)$.
For $\Topk$ the layer-$r$ contribution is $f(p)=\E[\min(k,N(p))]$, the whole truncated occupancy distribution, not a single void probability.
We split probabilities at $\eta=\eps^2/(k\Lambda)$ with $\Lambda=2k+C_0\log(k/\eps)$.
Big coordinates are few on unsaturated layers (Chernoff: once the mean exceeds $\Lambda$, one may replace $f$ by $k$), so they can be histogrammed into $B=O(k\Lambda^2/\eps^4)$ buckets.
Tiny coordinates are replaced by a Poisson of the same total mass, at Le Cam cost $k\sum p_i^2\le k\eta\mu\le\eps^2$.
Additivity of the sketch forces the tiny-mass grid to have width $\gamma=\eps^2/n$: $n$ independent rounding errors of size $\gamma$ sum to $\eps^2$.
That $n$ in the denominator produces $n^{O(h)}=n^{O(1/\eps^2)}$ signatures from the tiny counts, independently of $k$; the histograms contribute the extra factor $(k/\eps)^{O(k\Lambda^2/\eps^6)}$.

\paragraph{Two ways to realize a signature.}
A signature is an integer vector of magnitude $M=\mathrm{poly}(n,1/\eps)$ (or $n^{O(1/\eps^2)}$ for occupancy counts).
On a fixed-$d$ packing family the decoder is realized approximately by the packing LP, after enumerating $n^{f(d,1/\eps)}$ heavy items (Section~\ref{sec:packing}).
On a query-weight family the same vector is realized exactly by one mixed-radix exact-sum call, time $\poly(\langle\F\rangle)\,M^{O(1)}$ (Section~\ref{sec:exactsum}).
The combinatorial structure of $\F$ never enters the stochastic analysis.
Both boundaries inherit the constant-factor transfer: packing via a deterministic max-sum PTAS, exact-sum via Proposition~\ref{prop:es-to-ms}.
Cuts have the transfer and neither boundary.

\paragraph{A second sketch when occupancy concentrates.}
The $k$ in the relative error of Theorem~\ref{thm:main} is an artifact of the uniform $\eps W$-grid, and the $k^{\mathrm{poly}(k)}$ factor is an artifact of unsaturated-layer histograms.
Both become unnecessary once $k\ge C/\eps^2$.
Pathwise $\Topk(x)=\min_{t\ge 0}\{kt+\sum_i(x_i-t)_+\}$, so the surplus proxy $A_T=kT+\sum_i(X_i-T)_+$ always dominates $\Topk$.
Every set $S$ has a mixture quantile $T_S$ at which $\sum_{i\in S}\Pr[X_i>T_S]\le k\le\sum_{i\in S}\Pr[X_i\ge T_S]$.
The expected one-sided occupancy deviation is $O(\sqrt{k})$.
Thinning the indicators below $T$ and a leave-one-out identity for the unbounded surplus above $T$ give
\[
\E[A_{T_S}(S)]
\le
\bigl(1+O(k^{-1/2})\bigr)\,\E[\Topk(S)].
\]
Thus $k=\Omega(1/\eps^2)$ already yields a $(1-O(\eps))$ proxy.
The unknown $T_{S^*}$ equals one of $O(\eps^{-1}\log(n/\eps))$ grid heights after a multiplicative rounding, and each height is tried.
At each threshold the triple $(\Pr[X_i>T],\Pr[X_i\ge T],\E[(X_i-T)_+])$ is a three-dimensional additive type.
A single truncated mean at a fixed $T$, as in the $k=1$ surplus search, does not give this comparison: the mixture quantile is chosen so that occupancy sits at $k$, and the resulting three-dimensional type is what the packing or covering LP realizes.
The occupancy scheme at accuracy $\eps\leftarrow\Theta(\eps/k)$ is already a PTAS for $k\le C/\eps^2$, so the two sketches together are a PTAS for every $k$, once a realization primitive is supplied.
The mixture-quantile proxy is the new large-$k$ object: it is not a void probability at a larger $k$, and the same three-dimensional type can sit on any additive realization primitive (a fixed-dimensional LP or exact-sum).

\paragraph{Hardness by sparse lotteries.}
The matching negative statements use a single encoding, already at $k=1$.
Independent copies $X_i=B(w_i,q)$ satisfy $q(1-q)^{n-1}w(S)\le\E[\max(S)]\le q\,w(S)$, so a $\rho$-approximation for $\E[\Topk]$ is a $\rho(1-q)^{n-1}$-approximation for max-sum on the same $\F$.
Choosing $q$ smaller than the target gap transfers independent-set inapproximability and max-cut $\mathrm{APX}$-hardness, as uniform lower bounds below packing.
The same encoding on two-dimensional knapsack---a packing family---rules out a generic FPTAS or EPTAS on the packing class itself.
The construction does not require $k>1$.

\section{Model and main theorems}
\label{sec:model}

Let $X_1,\dots,X_n$ be independent nonnegative discrete random variables, each given by an explicit list of atoms and masses.
Let $\F\subseteq 2^{[n]}$ be a down-set, an up-set, or an arbitrary membership family.
The main maximization ladder is Theorem~\ref{thm:transfer} (constant factor from max-sum), Theorem~\ref{thm:noptas} (no generic PTAS from max-sum alone), Theorem~\ref{thm:packing} (PTAS on fixed-$d$ packing), and Theorem~\ref{thm:limits} (no generic FPTAS or EPTAS on that same packing class).
The minimization ladder is Theorem~\ref{thm:min-transfer}, Theorem~\ref{thm:covering}, and Theorem~\ref{thm:min-limits}.
Query-weight exact-sum is a second sufficient condition for a PTAS in both directions, developed in Section~\ref{sec:exactsum}.
The integer $k\ge 1$ is part of the input; every bound below is written with $k$ kept explicit.

\begin{problem}[Non-adaptive stochastic Top-$k$ sum]
\label{pr:topk}
Input: the laws of $X_1,\dots,X_n$, a family $\F$, an integer $k\ge 1$, and $\eps\in(0,1/2)$.
Output: $S\in\F$ with $\E[\Topk(S)]\ge(1-O(\eps))\OPT$.
\end{problem}

\begin{problem}[Non-adaptive stochastic Top-$k$ sum, minimization]
\label{pr:min-topk}
Input: the same data as Problem~\ref{pr:topk}.
Output: $S\in\F$ with $\E[\Topk(S)]\le(1+O(\eps))\,\OPT_{\min}$, where $\OPT_{\min}=\min_{S'\in\F}\E[\Topk(S')]$.
\end{problem}

\begin{definition}[Max-sum and min-sum]
\label{def:max-sum}
Given nonnegative weights $w_i$, max-sum returns $S\in\F$ maximizing $\sum_{i\in S}w_i$ (or an $\alpha$-approximate such set, $\alpha\le 1$).
Min-sum returns a minimizer (or an $\alpha$-approximate such set, $\alpha\ge 1$).
\end{definition}

\begin{definition}[Fixed-dimensional packing]
\label{def:packing}
A family $\F$ is a $d$-dimensional packing family if
\[
\F=\Bigl\{S\subseteq[n]:\sum_{i\in S}a_{ji}\le b_j\text{ for all }j=1,\dots,d\Bigr\}
\]
with $a_{ji},b_j\ge 0$ encoded in binary and with $d$ independent of $n$.
The natural LP is $\{x\in[0,1]^n:Ax\le b\}$, where \(A=(a_{ji})\) and \(b=(b_j)\).
One- and two-dimensional knapsack are the cases $d=1$ and $d=2$.
Cardinality is the case $d=1$ with $a_{1i}=1$.
\end{definition}

\begin{definition}[Fixed-dimensional covering]
\label{def:covering}
A family $\F$ is a $d$-dimensional covering family if
\[
\F=\Bigl\{S\subseteq[n]:\sum_{i\in S}a_{ji}\ge b_j\text{ for all }j=1,\dots,d\Bigr\}
\]
with $a_{ji},b_j\ge 0$ encoded in binary and with $d$ independent of $n$.
The natural LP is $\{x\in[0,1]^n:Ax\ge b\}$.
One- and two-dimensional covering knapsack are the cases $d=1$ and $d=2$.
A lower cardinality bound $|S|\ge m$ is the case $d=1$ with $a_{1i}=1$.
The family is an up-set: if $S\in\F$ and $S\subseteq T$ then $T\in\F$.
\end{definition}

\begin{definition}[Query-weight exact-sum]
\label{def:exact-sum}
Let $\langle\F\rangle$ denote the bit length of the encoding of $\F$ (the ground set, the combinatorial description, and any numeric parameters written in binary).
Given integer \emph{query} weights $w_i$ of magnitude $M=\max_i|w_i|$ and a target $t$, decide whether some $S\in\F$ satisfies $\sum_{i\in S}w_i=t$ (and return one such $S$ if so).
A vector target is encoded by mixed radix.
The oracle is \emph{query-weight pseudopolynomial} if it runs in time $\poly(\langle\F\rangle)\,M^{O(1)}$.
The magnitude $M$ is that of the query, not of numeric parameters of $\F$: a knapsack DP in time $O(nCM)$ has this form only when the capacity $C$ is already bounded by $\poly(\langle\F\rangle)$ (unary capacities, or $C=\poly(n)$).
\end{definition}

Cardinality, DAG $s$--$t$ paths, and several tree / matching families admit query-weight exact-sum (subset-sum or path DP in the query magnitude).
Binary knapsack and two-dimensional knapsack do not, unless their capacities are encoded in unary.
General cuts do not: exact-sum is already $\mathrm{NP}$-hard for $0$-$1$ weights.

\begin{proposition}[Exact-sum implies a max-sum FPTAS]
\label{prop:es-to-ms}
Assume the query weights are nonnegative.
If $\F$ admits query-weight exact-sum, then deterministic max-sum on $\F$ admits an FPTAS: for every $\eps\in(0,1)$ one can return $S\in\F$ with
\[
w(S)\ge(1-\eps)\max_{S'\in\F}w(S')
\]
in time $\poly(\langle\F\rangle,n,1/\eps)$.
\end{proposition}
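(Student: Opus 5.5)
We follow the classical Ibarra--Kim scaling argument, with the exact-sum oracle standing in for the knapsack dynamic program. The plan has three steps: guess the heaviest element of an optimal set, round weights down to a polynomial grid, and then scan all reachable targets with exact-sum calls.

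\emph{Guessing the heaviest element.} Let $w^*=\max_{S'\in\F}w(S')$ and fix an optimizer $S^*$. For each candidate $j\in[n]$ we guess that $j$ is the heaviest element of $S^*$. We restrict to the items $U_j=\{i:w_i\le w_j\}$ by giving every $i\notin U_j$ a query weight that makes it unusable. Concretely, we give each such $i$ a large weight $L$, or a weight in a separate mixed-radix digit, and ask for that digit to be $0$. We also require $j$ itself to be selected, through another mixed-radix digit that counts $j$ (query weight $1$ in that digit, target digit $1$). Both digits have magnitude $O(n)$, so encoding them keeps the query magnitude polynomial. For the correct guess $j$ we have $w_j\le w^*\le n w_j$.

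\emph{Rounding.} Set $\mu=\eps w_j/n$ and $\tilde w_i=\lfloor w_i/\mu\rfloor$ for $i\in U_j$. Then $\tilde w_i\le n/\eps$. Any set has rounded total at most $n^2/\eps$, so the query magnitude is $M=\poly(n,1/\eps)$ even after the mixed-radix digits are added.

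\emph{Scanning targets.} For every target $t\in\{0,\dots,\lfloor n^2/\eps\rfloor\}$, combined with the fixed digits, we call exact-sum. Among all returned sets over all $j$ and $t$ we output the one with the largest true weight $w(S)$.

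\emph{Correctness.} For the right $j$, the set $S^*$ satisfies all digit constraints and has some rounded sum $t^*$. The oracle therefore returns some $S\in\F$ with $\tilde w(S)=t^*$, and that $S$ uses only items from $U_j$. Hence
\[
w(S)\ge\mu\tilde w(S)=\mu\tilde w(S^*)\ge w(S^*)-|S^*|\mu\ge w^*-\eps w_j\ge(1-\eps)w^*.
\]
\emph{Running time.} There are $n\cdot\poly(n/\eps)$ oracle calls, each costing $\poly(\langle\F\rangle)M^{O(1)}$. The total is $\poly(\langle\F\rangle,n,1/\eps)$.

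The main obstacle is enforcing the restriction to $U_j$ and the membership of $j$ using only the oracle. The oracle in Definition~\ref{def:exact-sum} takes a single weight vector plus a vector target via mixed radix, so these side constraints must be encoded as extra digits. We must check that the digits do not overflow. They do not: the per-digit sums are bounded by $n$, by $n$, and by $n^2/\eps$ respectively, so choosing each radix larger than its digit's bound makes the encoding exact while keeping $M$ polynomial. If vector targets were not allowed, a fallback would be to scale the main digit by $n^2/\eps+1$ and place the indicator digits above it. This is the same mixed-radix trick written as a single integer.
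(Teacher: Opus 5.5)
Your proof is correct and is essentially the paper's argument. Round weights down to multiples of $\mu=\eps/n$ times a lower bound on $\OPT_w$, so every rounded sum is at most $n^2/\eps$; scan all targets with exact-sum; and conclude from $w(S)\ge\mu\tilde w(S)=\mu\tilde w(S^*)\ge w(S^*)-n\mu$.

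The only difference is the normalization. The paper discards elements lying in no member of $\F$ (each detectable by one exact-sum call with weight $\mathbf{1}_{\{i\}}$ and target $1$) and uses $W=\max_i w_i\le\OPT_w$ directly. That makes your guess of the heaviest element of $S^*$ and the extra mixed-radix digits unnecessary. In your version, also skip guesses with $w_j=0$ and treat $w^*=0$ separately, since $\mu=\eps w_j/n$ must be positive.
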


\begin{theorem}[Constant factor]
\label{thm:transfer}
Let $k\ge 1$ and $\eps\in(0,1/2)$.
\begin{enumerate}[leftmargin=1.4em,itemsep=2pt]
\item
No single polynomial algorithm achieves a constant factor on every membership family $\F$.
Independent sets of an undirected graph are a witness, already for $k=1$.
\item
If max-sum on $\F$ admits an $\alpha$-approximation ($\alpha\le 1$), Algorithm~\ref{alg:transfer} returns $S\in\F$ with
\[
\E[\Topk(S)]
\ge
\frac{\alpha}{(1+\alpha)(1+\eps)}\,\OPT
\]
after $O\bigl(\eps^{-1}\log(\mathrm{range})\bigr)$ oracle calls, where $\mathrm{range}$ is the ratio of the largest atom to the smallest positive truncated mean that can occur.
\end{enumerate}
\end{theorem}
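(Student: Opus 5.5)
For part 1 we reduce from maximum independent set through the sparse-lottery encoding of the technical overview. Given a graph on $n$ vertices, let $\F$ be its family of independent sets and set $X_i=B(w_i,q)$ with $w_i=1$: the value is $1$ with probability $q$ and $0$ otherwise. Take $k=1$. For every $S$,
\[
q(1-q)^{|S|-1}|S|\le \E[\max(S)]=1-(1-q)^{|S|}\le q|S|.
\]
Choose $q=1/n^2$, so that $(1-q)^{n-1}\ge 1-1/n$. Then $\E[\max(S)]$ equals $q|S|$ up to a factor $1-1/n$. Any polynomial algorithm with ratio $\rho$ on every membership family therefore yields a $\rho(1-1/n)$-approximation to maximum independent set. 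That is impossible for any constant $\rho$ unless $P=NP$~\citep{zuckerman2006}. The encoding is polynomial, since the atoms are $0$ and $1$ with a rational $q$.

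For part 2, Algorithm~\ref{alg:transfer} is the decreasing surplus search. Let $\sigma_T(S)=\sum_{i\in S}\E[(X_i-T)_+]$. We run over a geometric grid $T_j=T_{\max}(1+\eps)^{-j}$, starting from $T_{\max}$ equal to the largest atom, where every surplus is $0$. At each height we call the oracle with weights $w_i=\E[(X_i-T)_+]$, obtaining $S_T$, and we halt at the first $T_0$ with $\sigma_{T_0}(S_{T_0})>kT_0$.

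\emph{Upper bound.} The pathwise inequality $\Topk(x)\le kT+\sum_i(x_i-T)_+$ holds for every $T$ because the right side is $\min_t$ over convex terms evaluated at $t=T$. Hence $\OPT\le kT+\max_S\sigma_T(S)$ for every $T$. At the previous grid height $T'=(1+\eps)T_0$ the search did not halt, so $\sigma_{T'}(S_{T'})\le kT'$. The oracle guarantee then gives $\max_S\sigma_{T'}(S)\le kT'/\alpha$. Therefore
\[
\OPT\le kT'(1+1/\alpha)=kT_0(1+\eps)(1+\alpha)/\alpha.
\]
If the search halts at the very first height, the surplus there is $0$, so it cannot halt; the base case is consistent.

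\emph{Lower bound.} We must show $\E[\Topk(S_{T_0})]\ge kT_0$ whenever $\mu:=\sigma_{T_0}(S)\ge kT_0$, writing $S=S_{T_0}$. By layer cake,
\[
\E[\Topk(S)]\ge \int_{T_0}^\infty \E[\min(k,N(t))]\,dt+T_0\,\E[\min(k,N(T_0))],
\]
where $N(t)=\#\{i\in S:X_i>t\}$. This is the main obstacle. The plan is to use the Poisson-binomial inequality $\E[\min(k,N)]\ge k\,\E N/(k+\E N)$ together with concavity.

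An alternative route avoids the integral. Pathwise, $\Topk\ge \min\{k,N(T_0)\}\,T_0+\bigl(\text{sum of the top }k\text{ excesses over }T_0\bigr)$. Combining this with convexity of the pathwise minimum suggests proving
\[
\E[\Topk(S)]\ge \frac{k\mu}{k+\mu/T_0}\cdot\frac{1}{T_0}\cdot T_0\cdot(\cdots).
\]
We would first attempt the following route. The function $g(t)=\E[\min(k,N(t))]$ is nonincreasing, and its integral over $(T_0,\infty)$ dominates $\int\min(k,\E N(t))$ only up to that inequality. So we bound
\[
\int_0^\infty g\ge \int_0^\infty \frac{k\,m(t)}{k+m(t)}\,dt,
\]
where $m(t)=\E N(t)$ and $\int_{T_0}^\infty m=\mu\ge kT_0$. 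The function $x\mapsto kx/(k+x)$ is concave and increasing, and $m$ is nonincreasing. Minimizing $\int_0^\infty\phi(m)$ subject to $\int_{T_0}^\infty m\ge kT_0$ and monotonicity should give at least $kT_0$: on $[0,T_0]$ we have $m\ge m(T_0)$, and the tail contributes at least $\min$ over profiles. Checking this extremal calculation carefully, including the case where the Poisson-binomial bound is too weak and the saturation $\phi\to k$ must be used, is where we expect the real work.

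Given the lower bound, the ratio is
\[
\frac{kT_0}{kT_0(1+\eps)(1+\alpha)/\alpha}=\frac{\alpha}{(1+\alpha)(1+\eps)}.
\]
The number of oracle calls is the number of grid points between the largest atom and the smallest positive truncated mean, namely $O(\eps^{-1}\log\mathrm{range})$.
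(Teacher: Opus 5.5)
Part (i) and the upper-bound half of part (ii) are correct and match the paper's proof. Part (i) uses the same lottery encoding with $q=1/n^2$ and Zuckerman's bound. The upper bound uses the same non-halting test at $T_0(1+\eps)$, combined with the $\alpha$-guarantee and the pathwise sandwich. For the sandwich, convexity is beside the point: the reason is $x_i\le T+(x_i-T)_+$ applied to each of the $k$ largest coordinates.

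\textbf{The gap} is the lower bound $\E[\Topk(S_{T_0})]\ge kT_0$. You set up an extremal problem over nonincreasing profiles and then stop, and the ``alternative route'' display with $(\cdots)$ is not an inequality. This step is the only place independence is used, and it is what ties the output's value to $kT_0$, so part (ii) is not established as written. Your worry that saturation might be needed is unfounded: the Poisson-binomial bound alone suffices.

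The missing idea is to use monotonicity of $m$ on the tail; your first display already contains everything else. Write $g(t)=\E[\min(k,N(t))]$, $m(t)=\E[N(t)]$, $m_0=m(T_0)$, and $\mu=\int_{T_0}^\infty m(t)\,dt=\sigma_{T_0}(S)\ge kT_0$. Independence makes $N(t)$ Poisson-binomial, so $g(t)\ge km(t)/(k+m(t))$. For $t\ge T_0$ one has $m(t)\le m_0$, hence $k/(k+m(t))\ge k/(k+m_0)$. Therefore
\[
\E[\Topk(S)]\ge\int_{T_0}^\infty g(t)\,dt+T_0\,g(T_0)\ge\frac{k\mu}{k+m_0}+\frac{kT_0m_0}{k+m_0}\ge\frac{k(kT_0+T_0m_0)}{k+m_0}=kT_0 .
\]

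The paper reaches the same conclusion differently, through Lemma~\ref{lem:tail} under the equality $\int_T^\infty m=kT$. It rewrites the claim as $\int_0^{T}km/(k+m)\ge\int_T^\infty m^2/(k+m)$ and compares the infimum of $m$ before $T$ with its supremum after $T$. A strict surplus is then handled by moving to a larger root $T'$ (Lemma~\ref{lem:one-set}). The direct bound above covers $\mu\ge kT_0$ in one step, with no shift.
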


Independence is used only for the lower bound on the output set.
Every fixed-$d$ packing family admits a PTAS for deterministic max-sum~\citep{frieze1984,ibarra1975}, hence a constant-factor max-sum oracle, so part~(ii) applies to the whole class.
If $\F$ instead admits query-weight exact-sum, Proposition~\ref{prop:es-to-ms} yields a max-sum FPTAS, so part~(ii) applies on that class as well.
The reduction in that proposition rounds weights to magnitude $O(n/\eps)$ and therefore does not place exact max-sum in $\mathrm{P}$ for binary weights of bit length $L$.

\begin{theorem}[No generic PTAS]
\label{thm:noptas}
No single PTAS works on every membership family that already has a constant-factor max-sum approximation.
Cuts are a witness: they have a $0.878$-approximation for max-sum~\citep{goemans1995}, but unless $P=NP$ there is no PTAS for $\E[\Topk]$ on every cut family, already for $k=1$.
\end{theorem}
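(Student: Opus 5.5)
The plan is a gap-preserving reduction from Max-Cut via the sparse-lottery encoding from the technical overview, at $k=1$. Given a graph $G=(V,E)$ with $n=|E|$, the ground set is the edge set. The family $\F$ consists of all cuts $\delta(U)$, $U\subseteq V$; membership is trivial to test, and max-sum on $\F$ is weighted Max-Cut, which has the $0.878$-approximation of \citet{goemans1995}. So $\F$ satisfies the hypothesis of the statement. For each edge $e$ we set $X_e=B(w_e,q)$, meaning $X_e=w_e$ with probability $q$ and $0$ otherwise, independently, with unit weights $w_e=1$. Max-Cut is APX-hard \citep{papadimitriou1991,hastad2001}: there is a constant $c<1$ such that distinguishing cut value $\ge m$ from $\le cm$ is NP-hard.

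The key step is the sandwich
\[
q(1-q)^{n-1}w(S)\le \E[\max_{e\in S}X_e]\le q\,w(S).
\]
The upper bound is the union bound $\max\le\sum$. For the lower bound, the event that exactly $X_e$ fires has probability at least $q(1-q)^{n-1}$, and these events are disjoint across $e$. Consequently, if $\OPT_{\max\text{-}\mathrm{sum}}=w^*$ and an algorithm returns $S$ with $\E[\max(S)]\ge(1-\eps)\OPT$, then
\[
q\,w(S)\ge \E[\max(S)]\ge (1-\eps)\,q(1-q)^{n-1}w^*,
\]
so $w(S)\ge(1-\eps)(1-q)^{n-1}w^*$.

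Next, choose $q$ so that $(1-q)^{n-1}\ge 1-\eps$, for example $q=\eps/n$. A PTAS for $\E[\Topk]$ at $k=1$ with accuracy $\eps$ then yields a $(1-\eps)^2$-approximate cut. This runs in polynomial time for each fixed $\eps$, since $q$ has $O(\log(n/\eps))$ bits and each $X_e$ has two atoms. Taking $\eps$ small enough that $(1-\eps)^2>c$ would solve the gap problem, and hence would give P$=$NP. So no algorithm that is a PTAS simultaneously on all cut families exists unless P$=$NP, which proves the statement for $k=1$.

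The main obstacle is fixing the quantifier structure and the exact notion of "every cut family." The witness must be the class of instances whose family is given as the cut family of an input graph, with arbitrary independent discrete laws. A single algorithm that is a PTAS on that whole class would be a PTAS for Max-Cut through the reduction above. Beyond that, I would only check that the factor $(1-q)^{n-1}$ is not lost to rounding and that unit weights suffice, since the APX-hardness of unweighted Max-Cut already gives the gap. If one wants the statement for every $k$, the same encoding likely works with $q\le\eps/n$. In that case $\Pr[\text{at least two fire}]\le (nq)^2$, so $\E[\Topk(S)]=q\,w(S)(1\pm O(\eps))$. I would note this as optional, since the theorem only claims $k=1$.
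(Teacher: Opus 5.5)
Your proof is correct and is essentially the paper's own: the same unit-weight sparse-lottery encoding $X_e=B(1,q)$ on the cut family, the sandwich $q(1-q)^{n-1}w(S)\le\E[\max(S)]\le q\,w(S)$, and transfer of Max-Cut inapproximability (the $16/17$ bound). The only difference is bookkeeping: the paper fixes $q=1/n^2$ and $\eps=1/100$ and needs $n\ge 20$ to beat $16/17$, whereas your choice $q=\eps/n$ gives $(1-q)^{n-1}\ge 1-\eps$ with no lower bound on $n$.
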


A constant-factor max-sum oracle is therefore not a sufficient generic hypothesis for a PTAS.

\begin{theorem}[PTAS on packing]
\label{thm:packing}
Let $d\ge 1$ be a fixed integer.
If $\F$ is a $d$-dimensional packing family, Problem~\ref{pr:topk} admits a PTAS for every $k\ge 1$, running in time $n^{f(d,1/\eps)}$ for a function $f$ independent of $n$.
\end{theorem}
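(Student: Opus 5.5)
We prove Theorem~\ref{thm:packing} in three stages: fix a scale, round and sketch each item by an additive type, then realize a guessed sketch of the optimum in the packing LP.

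\textbf{Scale.} Run the surplus search of Theorem~\ref{thm:transfer}(ii) with the deterministic max-sum PTAS of~\citep{frieze1984} as oracle. This gives $W$ with $W\le\OPT\le cW$ for a constant $c$, and every later error is charged to $\eps W$. Fold each tail above $M=W/\eps$ into $[0,M)$ by the mean-preserving contraction of the overview, and place atoms on a grid of spacing $\delta$. Layer cake then writes $\E[\Topk(S)]$, up to $O(\eps)W$, as $\delta\sum_{r\le h}f_r(S)$ with $f_r(S)=\E[\min(k,N_r(S))]$, where $N_r(S)$ is a Poisson-binomial count with parameters $p_i(r\delta)$, $i\in S$.

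\textbf{Two regimes.} For $k\le C/\eps^2$, use the occupancy sketch at accuracy $\eps'=\Theta(\eps/k)$. Coordinates $p_i(r\delta)\ge\eta$ go into a bounded histogram; saturated layers are replaced by $k$ via Chernoff. Coordinates below $\eta$ are recorded only by their total mass, which is Poissonized at Le Cam cost $\eps'^2$. The number of big coordinates on unsaturated layers is bounded by $O(\Lambda/\eta)$, a function of $1/\eps$ alone once $k\le C/\eps^2$.

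For $k\ge C/\eps^2$, use the mixture-quantile proxy. Guess $T$ among the $O(\eps^{-1}\log(n/\eps))$ grid heights. Each item then has the three-dimensional type $(\Pr[X_i>T],\Pr[X_i\ge T],\E[(X_i-T)_+])$, and $\E[A_T(S)]\le(1+O(\eps))\E[\Topk(S)]$ whenever the occupancy sums bracket $k$. In both regimes the objective is a function of a constant number $D=D(1/\eps)$ of additive linear functionals of $S$, up to $O(\eps)W$ error and a bounded number of exceptional items.

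\textbf{Realization.} For packing we cannot hit these signatures exactly, since binary capacities block exact-sum, so we realize them approximately.
\begin{enumerate}
\item Guess the set $H$ of heavy items of $S^*$ by brute force, $|H|\le g(d,1/\eps)$. Heavy means the item carries a big histogram coordinate on some layer, or is among the top items in each linear functional. This costs $n^{g}$ guesses.
\item Discard every remaining item that is heavier than the least heavy guessed item in any functional.
\item Guess the rounded value of each of the $D$ functionals on $S^*\setminus H$, to multiplicative $1\pm\eps$ or additive $\eps^2/n$ precision. This gives polynomially many guesses, with exponent depending on $\eps$.
\item Write the LP $\{x\in[0,1]^n : Ax\le b-A\mathbf 1_H,\ \text{functionals of }x\ \text{within the guessed windows}\}$.
\item Take a basic feasible solution. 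It has at most $d+2D$ fractional coordinates; round them down.
\end{enumerate}
Rounding preserves packing feasibility because $\F$ is a down-set. Each dropped item is light, so it changes each functional by at most a $\eps/(d+2D)$ fraction of its value (additively $\eps^2$ for the tiny-mass coordinates). The Lipschitz bound on $p\mapsto\E[\min(k,N(p))]$ turns these per-coordinate changes into $O(\eps)W$ total loss. Monotonicity of $\Topk$ in $S$ means dropping items can only lower the value, so we must also bound that loss. This is exactly why heaviness is defined per functional.

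\textbf{Main obstacle.} The delicate step is the rounding. A basic solution may satisfy a guessed window only with fractional mass, and the occupancy functional is not monotone in each histogram coordinate in a way that tolerates shortfall. I would first make every window a one-sided lower bound (a "$\ge$" constraint). For the maximization, dropping items only decreases the $p$-vector, which $f$ penalizes Lipschitzly, and the LP windows ensure at least the guessed mass remains up to one light item per fractional coordinate. I would then check that the heavy-item threshold can be chosen so that $(d+2D)$ light items cost $O(\eps)W$ while keeping $|H|$ a function of $(d,\eps)$ only. Settling this threshold is what determines $f(d,1/\eps)$.
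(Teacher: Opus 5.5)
\textbf{Verdict: same architecture as the paper, but the small-$k$ realization has a genuine gap.} Your outline matches the paper's proof: the scale from Algorithm~\ref{alg:transfer} with a packing max-sum PTAS, the fold, the split at $k\approx C/\eps^2$, enumeration of heavy items of $S^*$ (with the other heavies discarded from the universe, as in your step 2), a residual LP with $d$ packing rows plus signature rows, and zeroing the fractionals of a basic feasible solution. Your large-$k$ case is essentially the paper's. The paper enumerates only surplus-heavies: each item's occupancy coordinates are at most $1$, so dropping $d+6$ items is an $O(1)$ window slack in Lemma~\ref{lem:c1}.

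The gap is in the small-$k$ realization, which you leave unsettled. As specified, it fails in two places.

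\emph{Saturated layers.} Defining ``heavy'' as ``carries a big histogram coordinate on some layer'' does not give $|H|\le g(d,1/\eps)$. On a layer where $S^*$ is saturated it may contain $\Theta(n)$ items with $p_{i,r}=1/2$. Heaviness must be restricted to layers where $S^*$ is unsaturated, so you must guess that set. The paper guesses $R_+$, the layers with mean at least $\Lambda_+$, using $2^h$ guesses. Residual items can then have coordinates near $1$ on $R_+$. Dropping $F$ fractionals can lower such a layer's mean by $F$ and destroy saturation. Your sketch replaces these layers by $k$ and never constrains them in the LP. The paper adds covering rows $\sum_i p_{i,r}x_i\ge(\Lambda_+-\mu_r(H))_+$ with $\Lambda_+=\Lambda+F_{\max}$, so each raised layer still has mean at least $\Lambda$ after the drop.

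\emph{Threshold too coarse.} The threshold $\eta=\eps^2/(k\Lambda)$ inherited from the sketch does not survive rounding. The $\Theta(h)$ fractionals of a BFS can move a layer's tiny mass by $\Theta(h\eta)=\Theta(1/(k\Lambda))$. The per-layer budget is $\eps^2$, because $\delta h=W/\eps$. So this fails badly for small $k$, e.g.\ $k=1$. Your one-sided windows are harmless for maximization, but they do not address either point.

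\textbf{What settles your threshold question.} The observation you are missing is that the threshold and the LP dimension are not circular. Once every item with a coordinate above the threshold on a non-raised layer is enumerated or discarded, the LP carries no histogram coordinates at all. It has one tiny-mass slab per non-raised layer and one covering row per raised layer. That is at most $F_{\max}=d+2h$ non-box rows, whatever the threshold. You can therefore take
\[
\eta_{\mathrm{LP}}=\frac{\eps^2}{2(k\Lambda_++F_{\max})}.
\]
This gives both $F_{\max}\eta_{\mathrm{LP}}\le\eps^2/2$ (drop cost) and $k\eta_{\mathrm{LP}}\Lambda_+\le\eps^2/2$ (Le Cam cost). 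Meanwhile $|H|\le h\Lambda_+/\eta_{\mathrm{LP}}$ stays a function of $(d,1/\eps)$ when $k=O(1/\eps^2)$.

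\textbf{The comparison step.} Comparing the realized residual with $S^*\setminus H$ is not the Lipschitz step you invoke. The two residuals consist of different items, so the aligned $\ell_1$ bound of Lemma~\ref{lem:f-lip} does not apply. Instead, Poissonize both residuals on top of the common count $N_H$ (Lemma~\ref{lem:lecam}), then move the Poisson mean (Lemma~\ref{lem:pois-mean}). This is Lemma~\ref{lem:occ-robust}.
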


Section~\ref{sec:packing} constructs the scheme from the occupancy decoder of Theorem~\ref{thm:main} when $k\le C/\eps^2$ and the quantile type of Theorem~\ref{thm:quantile} when $k\ge C/\eps^2$.
The scale $W$ is obtained from Algorithm~\ref{alg:transfer} with a packing max-sum PTAS, not from an exact surplus maximizer.
In the theorems of this section $\eps$ is the requested accuracy; the fold, the grids, and the LPs use a smaller internal accuracy $\eps_{\mathrm{int}}=\Theta(\eps/k)$ (Corollary~\ref{cor:ptas}), written $\eps$ again in Sections~\ref{sec:packing} and~\ref{sec:covering}.

\begin{theorem}[PTAS on covering]
\label{thm:covering}
Let $d\ge 1$ be a fixed integer.
If $\F$ is a $d$-dimensional covering family, Problem~\ref{pr:min-topk} admits a PTAS for every $k\ge 1$, running in time $n^{f(d,1/\eps)}$ for a function $f$ independent of $n$.
\end{theorem}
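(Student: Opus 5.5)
We mirror the packing scheme of Theorem~\ref{thm:packing} with the inequalities reversed, and check that each step survives the change of direction.

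\textbf{Scale.} We first obtain a constant-factor scale $W$ with $W\le \OPT_{\min}\le cW$ from Theorem~\ref{thm:min-transfer}, using the covering analogue of the Frieze--Clarke min-sum PTAS as the oracle. The upward surplus search gives $W=kT$ up to constants, and every later additive error is charged as a multiple of $\eps W$. Since every set in an up-set can only grow, we should note the following: the relevant comparison set $S^*$ is an optimal minimizer. All stochastic estimates (fold, grid, Le Cam, Chernoff saturation, mixture-quantile proxy) are statements about a single fixed set and its signature. They are therefore direction-free and apply verbatim to $S^*$ and to any candidate set.

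\textbf{Signatures.} Split into $k\le C/\eps^2$, where we use the occupancy histogram signature of Theorem~\ref{thm:main} at internal accuracy $\Theta(\eps/k)$, and $k\ge C/\eps^2$, where we use the three-dimensional mixture-quantile type of Theorem~\ref{thm:quantile} at each of the $O(\eps^{-1}\log(n/\eps))$ candidate heights. In both cases the key property is the same. Any set whose (rounded) signature matches, or dominates in the correct direction, that of $S^*$ has $\E[\Topk]\le \OPT_{\min}+O(\eps)W$. For the quantile case this uses $\E[\Topk(S)]\le \E[A_T(S)]$ pathwise for every $T$, together with $\E[A_{T_{S^*}}(S^*)]\le(1+O(\eps))\OPT_{\min}$. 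It therefore suffices to find a feasible $S$ whose additive surplus triple at the guessed $T$ is at most that of $S^*$ up to $O(\eps)W$. Because $A_T$ is monotone in the surplus coordinate, we only need one-sided control there. The occupancy coordinates are needed only to certify that $T$ is admissible. For the occupancy case we instead demand equality of the bucketed counts up to $O(\eps^2)$ per layer, which the Lipschitz bound of $f(p)=\E[\min(k,N(p))]$ converts into $O(\eps)W$.

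\textbf{Realization in the covering LP.} This is the main step. We guess the signature of $S^*$ and enumerate the $n^{f(d,1/\eps)}$ heavy items. An item is heavy if a single coordinate of its type exceeds $\eps^{c}$ times the signature scale, or if it belongs to a big-probability histogram bucket, whose count is small. We fix these heavy items in the solution. We then write the LP with $x\in[0,1]^n$ over the light items, subject to $Ax\ge b-A\mathbf 1_{\text{heavy}}$ and the $D=O_{\eps}(1)$ signature equalities or inequalities. We take a basic feasible solution, which has at most $d+D$ fractional coordinates, and raise every fractional coordinate to $1$. Covering feasibility is preserved because $\F$ is an up-set and $A\ge0$, which is the reason for rounding upward rather than down. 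Each raised light item perturbs every signature coordinate by at most its light threshold, so the total drift is $(d+D)\eps^{c}\le\eps^2$ per coordinate for $c$ large in terms of $D$. The obstacle is that rounding up increases the objective, unlike packing where rounding down only loses. We handle this by the additive error bound above together with monotonicity of $\Topk$ in $S$: each added light item raises $\E[\Topk]$ by at most $\E[X_i]$ restricted to its signature contribution, which is already $O(\eps^2 W)$ by lightness of the surplus coordinate. Items with large unbounded tail mass are heavy by definition, so they are never among the rounded ones. Making this "light implies small marginal increase" claim hold simultaneously for the folded tails is where I expect the real work. The plan is to define lightness through $\tau_i=\E[(X_i-M)_+]$ and $\E[(X_i-T)_+]$ against $\eps^{c}W$.

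\textbf{Running time.} Summing over guesses gives total time $n^{f(d,1/\eps)}\cdot n^{O(1/\eps^2)}$ times polynomial LP time.
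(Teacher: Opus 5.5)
Your outline follows the paper's route: scale from Theorem~\ref{thm:min-transfer} with a covering min-sum PTAS, fold and grid, occupancy histograms for $k\le C/\eps^2$ and the mixture-quantile triple for $k\ge C/\eps^2$, heavy-item enumeration, and a covering LP whose BFS fractionals are raised to $1$. But the step you call ``direction-free'' is exactly the one the minimization proof cannot borrow from packing, and it leaves a genuine gap.

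Lemma~\ref{lem:fold} assumes $\E[\Topk(X)]\le cW$ \emph{for the set being folded}. That assumption is what gives $\Sigma\le 2cW$ in Lemma~\ref{lem:tailmass} and $\mu(t)\le 2c\eps$ for $t\ge M$. It holds for $S^*$, but for the decoder's output $S$ it is precisely the unknown. So knowing that the folded, gridded value of $S$ is at most $\OPT_{\min}+O(\eps)W$ says nothing about $\E[\Topk(S)]$ under the original laws without a separate argument. In general the original value can exceed the folded one by a factor of order $H/M$ (many lotteries $B(H,p)$ with $H\gg M$). Making each individual $\tau_i$ light does not help, because the integral light items chosen by the LP, not only the raised fractionals, can jointly carry tail mass $\Theta(W)$.

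The paper supplies the missing direction as Lemma~\ref{lem:fold-back}: if the folded value is $U\le(1+4\eps)W$, then $\E[\Topk(X)]\le U/(1-4\eps)$. The proof charges the tail mass to the folded objective itself. One has $\int_0^M\Pr[\bar N<k]\,dt\ge M-U$, and the capped mass is at most $(1+4\eps)\eps\,\Sigma$ because each $\lvert C_i\rvert\le U$; together these give $U\ge I_{\mathrm{low}}+(1-4\eps)\Sigma$. This is also why the paper normalizes $W:=\E[\Topk(S_0)]\ge\OPT_{\min}$. With your normalization $W\le\OPT_{\min}\le cW$, the same argument works with $4\eps$ replaced by $O(c\eps)$.

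A fix closer to your $\tau_i$ idea is to use truncated rather than folded laws:
\begin{itemize}
\item for the output, the pathwise bound $\Topk(x)\le\Topk(\min(x,M))+\sum_i(x_i-M)_+$;
\item for $S^*$, the tail-integral bound $\int_M^\infty\E[\min(k,N(t))]\,dt\ge(1-2c\eps/k)\Sigma$;
\item one extra one-sided LP coordinate $\sum_{i\in S}\tau_i$, with tail-heavy items enumerated.
\end{itemize}
That is, control the sum of the $\tau_i$, not each one.

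Two further steps also need repair.
\begin{itemize}
\item Matching aggregate tiny masses and then invoking the Lipschitz bound of $f$ is not valid. Lemma~\ref{lem:f-lip} compares item-aligned parameter vectors, but the LP's light residual and that of $S^*$ consist of different items. You need Poissonization with every residual coordinate at most $\eta_{\mathrm{LP}}$ (Lemmas~\ref{lem:lecam} and~\ref{lem:pois-mean}, packaged as Lemma~\ref{lem:occ-robust}). Lipschitzness is legitimate only for the raising step, where items are added to a fixed set.
\item Declaring every member of a big-probability bucket heavy is enumerable only on unsaturated layers. On saturated layers $S^*$ may contain arbitrarily many such items, so you must guess the saturated set $R_+$ and define heaviness only off $R_+$, as the paper does. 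Your bound ``each raised item adds at most $\E[X_i]$'' then fails for items with $p_{i,r}$ near $1$ on raised layers. Account per layer instead: a raised item adds at most $\eta_{\mathrm{LP}}$ to each non-raised mean, and on a raised layer $f(p_{\cdot,r}(S^*))\ge k-O(\eps^2)$ already, so the output cannot exceed $S^*$ there by more than $O(\eps^2)$.
\end{itemize}

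By contrast, your one-sided treatment of the quantile case is sound and simpler than the paper's window argument. Since $\Topk\le A_T$ pathwise, Lemma~\ref{lem:c1} is needed only for $S^*$. So for $k\ge C/\eps^2$ it suffices, on the original laws, to run a covering min-sum PTAS with weights $\E[\Tr(X_i,T)]$ at every atom height $T$ (and at $T=0$), with no occupancy coordinates and no fold.
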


Section~\ref{sec:covering} is the covering counterpart of Section~\ref{sec:packing}: the same signatures, a min-sum PTAS for the scale $W$, and a covering LP in which fractionals are raised to $1$.
Folding is not symmetric (Lemma~\ref{lem:fold-back}): a set that is cheap after the fold need not have been cheap originally.

\begin{theorem}[Occupancy scheme]
\label{thm:main}
Let $k\ge 1$ be an integer and $\eps\in(0,1/2)$.
Write $\Lambda:=2k+C_0\log(k/\eps)$ with $C_0$ as in Lemma~\ref{lem:pb-chernoff}.
If $\F$ admits query-weight exact-sum, there is an algorithm that returns $S\in\F$ with
\[
\E[\Topk(S)]
\ge \OPT-O((k+1)\eps)\,W
\ge\bigl(1-O((k+1)\eps)\bigr)\OPT,
\]
where $W$ is the surplus scale of Theorem~\ref{thm:surplus},
in time
\[
n^{O(1/\eps^2)}\,(k\Lambda/\eps)^{O(k\Lambda^2/\eps^6)}
\]
times the cost of one query-weight exact-sum call of the same magnitude.
\end{theorem}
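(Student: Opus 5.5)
We follow the pipeline of the technical overview: scale, fold, grid, layer-cake decomposition, additive signature, exact-sum realization. Accuracy is tracked throughout as an additive error that is a multiple of $W$.

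\emph{Step 1: scale.} Compute $T$ with maximum surplus equal to $kT$, using Theorem~\ref{thm:surplus}, and set $W=kT$, so that $W\le\OPT\le 2W$. With only exact-sum available, we either run the decreasing search of Algorithm~\ref{alg:transfer} with the FPTAS of Proposition~\ref{prop:es-to-ms}, or guess $W$ up to a factor $1+\eps$ from $O(\eps^{-1}\log(\mathrm{range}))$ candidates. A constant-factor $W$ is all the later error budgets need.

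\emph{Step 2: fold.} Set $M=W/\eps$ and replace each $X_i$ by its mean-preserving fold $\tilde X_i$: add $\tau_i/M$ to the survival function on $[0,M)$ and delete the mass above $M$. We then show that, for every $S$ simultaneously, $|\E[\Topk(S)]-\E[\Topk(\tilde S)]|=O(\eps)W$. Layer cake writes $\E[\Topk]=\int_0^\infty f(p(t))\,dt$ with $f(p)=\E[\min(k,N(p))]$. Since $f$ is $1$-Lipschitz in $\ell_1$ (adding one indicator changes $\min(k,N)$ by at most one), the change is at most the mass moved. That mass is $2\sum_{i\in S}\tau_i$, and the surplus at $M\ge T$ bounds it by $O(\eps)\cdot$surplus$(T)$, which is $O(\eps W)$. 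One subtlety is capping the survival at $1$; an atom with $\tau_i/M>1-p_i(t)$ only helps, because the loss is controlled the same way.

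\emph{Step 3: grid.} Round atoms down to multiples of $\delta=\eps W$. This costs at most $k\delta$ pathwise and leaves $h=1/\eps^2$ layers $r$ with probabilities $p_i^{(r)}=\Pr[\tilde X_i\ge r\delta]$. Then $\E[\Topk]=\delta\sum_{r\le h}f(p^{(r)})$, and it suffices to approximate each $f(p^{(r)})$ to additive $O(\eps^2)$.

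\emph{Step 4: signature per layer.} Split coordinates at $\eta=\eps^2/(k\Lambda)$. If $\sum_i p_i^{(r)}\ge\Lambda$, Lemma~\ref{lem:pb-chernoff} gives $f\ge k-\eps^2$, so we record only a saturation flag. This needs care, since the flag must be additive; we use capped counts, with big coordinates truncated at $\Lambda/\eta$ and the tiny mass capped at $\Lambda$. Otherwise there are at most $\Lambda/\eta$ big coordinates. We histogram them into buckets of geometric/additive width $\eps^2/(k\Lambda)$ in probability, so $B=O(k\Lambda^2/\eps^4)$ buckets. Replacing each big coordinate by its bucket representative moves $p$ by $\ell_1$ error at most $\eps^2$, and Lipschitzness absorbs it. The tiny coordinates are replaced by a Poisson with the same total mass $\mu$; Le Cam gives error at most $2\sum p_i^2\le 2\eta\Lambda\le\eps^2$ in total variation, times $k$ because $\min(k,\cdot)$ is bounded by $k$. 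This is why $\eta$ carries the $1/k$. The tiny mass is rounded to multiples of $\gamma=\eps^2/n$ per coordinate, so it is an integer at most $n\Lambda/\eps^2$ per layer. The signature is therefore the integer vector (bucket counts, rounded tiny mass) over all $h$ layers, and it is additive over $i$. The number of signatures is $n^{O(h)}\cdot(\Lambda/\eta)^{O(hB)}$, which is of the claimed form $n^{O(1/\eps^2)}(k\Lambda/\eps)^{O(k\Lambda^2/\eps^6)}$.

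\emph{Step 5: realize.} For each signature $\sigma$, encode the per-item signature vectors as mixed-radix query weights and make one exact-sum call. This has magnitude polynomial in the signature count, matching the stated time. Every feasible $S$ whose signature equals $\sigma$ has value $V(\sigma)\pm O((k+1)\eps)W$, where $V(\sigma)$ is computed by evaluating the bucketed Poisson-binomial occupancies by DP. We output the best realized $\sigma$. Because $S^*$ realizes its own signature, this loses only twice the per-signature error.

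The main obstacle is Step 4: keeping the saturation truncation additive and verifying that the rounding errors in the tiny mass, summed over at most $n$ items, stay below $\eps^2$ per layer. If capping the big counts interacts badly with additivity, I would instead enumerate the saturation pattern (which layers are saturated; there are only $h$ of them, and saturation is monotone in $r$, so the pattern is a single index) and impose count caps only on unsaturated layers.
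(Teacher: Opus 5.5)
Your pipeline is the paper's: scale, fold, grid, tiny/big split with Chernoff saturation, Poissonized tiny mass, bucketed big coordinates, and mixed-radix exact-sum. However, Step~2 fails as written. The moved mass $2\sum_{i\in S}\tau_i$ is not $O(\eps W)$, because $\tau_i=\E[(X_i-M)_+]$ is not an $\eps$-fraction of $\E[(X_i-T)_+]$. For the lottery $B(H,p)$ with $pH=\Theta(W)$ and $H\gg M$ of Remark~\ref{rem:naive}, one has $\tau=p(H-M)=\Theta(W)$. What is small at height $M$ is the occupancy $\mu(M)\le 2c\eps$, which is a tail probability. The tail mass only satisfies $\Sigma\le 2cW$ (Lemma~\ref{lem:tailmass}). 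So $1$-Lipschitzness bounds the change only by $2\Sigma=O(W)$, which is useless.

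The missing idea is a cancellation that uses the fact that $f$ has slope close to $1$ on both sides of the move. Above $M$ occupancy is sparse, so Lemma~\ref{lem:pb} puts the deleted contribution $\int_M^\infty\E[\min(k,N)]\,dt$ in $[(1-2c\eps/k)\Sigma,\Sigma]$. Below $M$, write $d_i=\bar p_i-p_i$. The added contribution $\int_0^M\Delta$ is at most $\Sigma$ by Lipschitzness. It is at least $(1-O(c\eps))\Sigma$ by concavity, $\Delta(t)\ge\bigl(\sum_i d_i(t)\bigr)\Pr[\bar N(t)<k]$, combined with $\int_0^M\Pr[\bar N\ge k]\,dt\le\E[\Topk(\bar X)]=O(cW)=O(c\eps)M$.

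The cap is not harmless either: it removes compensating mass, so it hurts exactly this lower bound. You need that the cap binds only where $p_i(t)>1-\tau_i/M$. That set has measure $O(\eps M)$ because $\E[\Topk]\le cW$, so the lost mass is $O(\eps)\Sigma$. The two copies of $\Sigma$ then cancel up to $O(c\eps)\Sigma=O(c^2\eps)W$, which is Lemma~\ref{lem:fold}.

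Step~4 needs two smaller repairs.

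First, the bucket width. With an additive width $\eps^2/(k\Lambda)$ and up to $\Lambda/\eta=k\Lambda^2/\eps^2$ big coordinates, the $\ell_1$ rounding error can reach $\Lambda$, not $\eps^2$. Take width $\beta=\eps^2\eta/(2\Lambda)$, which is what your count $B=O(k\Lambda^2/\eps^4)$ actually corresponds to. Alternatively, use geometric buckets of ratio $1+\eps^2/\Lambda$.

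Second, saturation. Capped counts are not additive, and your fallback has its own problem if it simply omits the guessed saturated layers from the query. It then does not force the realized set to be saturated there. If no item of $S^*$ survives past its last saturated layer, every feasible set with the same property (e.g.\ $\emptyset$, when feasible) matches every queried coordinate. You need an additive certificate. One option is to keep uncapped counts and read saturation off the proxy mean $\gamma\sigma_r+\sum_b n_{r,b}\hat p_b$, as Definition~\ref{def:sg} does. Another, using your monotonicity in $r$, is to add one rounded-total-mass coordinate of range $\poly(n/\eps)$ at the last saturated layer.
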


\begin{corollary}[When the scheme is a PTAS]
\label{cor:ptas}
Let $\eps':=\eps/(C(k+1))$ for a large enough absolute constant $C$.
Theorem~\ref{thm:main} at accuracy $\eps'$ returns a $(1-O(\eps))$-approximation in time
\[
n^{O(k^2/\eps^2)}\,(k\Lambda'/\eps)^{O\bigl(k^7\Lambda'^2/\eps^6\bigr)},
\]
where $\Lambda':=2k+C_0\log(k/\eps')$.
For every fixed $\eps\in(0,1/2)$ this time is polynomial in $n$ whenever $k=O(1/\eps^2)$: the $n$-exponent is then $O(1/\eps^6)$ and the remaining factor depends only on $\eps$.
\end{corollary}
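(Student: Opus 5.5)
The plan is to instantiate Theorem~\ref{thm:main} with $\eps'$ in place of $\eps$ and verify two things: the error bound, and the running-time bound.

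\textbf{Accuracy.} Theorem~\ref{thm:main} at accuracy $\eps'$ returns $S\in\F$ with
\[
\E[\Topk(S)]\ge \OPT-C_1(k+1)\eps' W ,
\]
where $C_1$ is the absolute constant hidden in its $O(\cdot)$. Substituting $\eps'=\eps/(C(k+1))$ gives an additive loss of $(C_1/C)\,\eps W$. The surplus scale of Theorem~\ref{thm:surplus} satisfies $W\le\OPT$, so the loss is at most $(C_1/C)\eps\,\OPT$. This is $O(\eps)\OPT$ for any fixed $C$. Choosing $C\ge C_1$ also gives $\eps'<1/2$, so the theorem's hypothesis holds. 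The only care needed here is that $W$ in Theorem~\ref{thm:main} is the scale defined independently of the accuracy parameter. The fold and grid inside the theorem use $\eps'$, but the bound $W\le\OPT\le 2W$ does not depend on it.

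\textbf{Running time.} Substitute $1/\eps'=C(k+1)/\eps$ into
\[
n^{O(1/\eps'^2)}\,(k\Lambda'/\eps')^{O(k\Lambda'^2/\eps'^6)} .
\]
\begin{itemize}
\item The first factor becomes $n^{O(k^2/\eps^2)}$, using $(k+1)^2=O(k^2)$ for $k\ge1$.
\item In the second factor, the base is $k\Lambda'/\eps'=O(k^2\Lambda'/\eps)$. Since $\log(k^2\Lambda'/\eps)=O(\log(k\Lambda'/\eps))$, the base can be rewritten as $(k\Lambda'/\eps)$ raised to a constant power, and that constant is absorbed into the exponent.
\item The exponent is $k\Lambda'^2/\eps'^6=O(k\cdot k^6\Lambda'^2/\eps^6)=O(k^7\Lambda'^2/\eps^6)$, which matches the statement.
\end{itemize}
The cost of one exact-sum call at the resulting magnitude is polynomial in $n$ and $1/\eps'$. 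It is absorbed into the first factor, or left as a multiplicative oracle term, as in the theorem.

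\textbf{Polynomiality for $k=O(1/\eps^2)$.} Fix $\eps$ and let $k\le c/\eps^2$.
\begin{itemize}
\item The $n$-exponent is $O(k^2/\eps^2)=O(1/\eps^6)$, a constant depending only on $\eps$.
\item $\Lambda'=2k+C_0\log(k/\eps')=O(1/\eps^2+\log(1/\eps))$ depends only on $\eps$. Note that $\Lambda'$ has no dependence on $n$, since $\eps'$ depends only on $\eps$ and $k$.
\item Hence the second factor is a function of $\eps$ alone.
\end{itemize}
The total time is therefore $g(\eps)\,n^{O(1/\eps^6)}$ times a polynomial oracle cost, which is polynomial in $n$ for every fixed $\eps$.

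No step is a genuine obstacle. The only point that needs attention is the bookkeeping that absorbs the $\log$ of the enlarged base into the exponent.
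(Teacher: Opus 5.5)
Your proposal is correct and follows the paper's proof, which is exactly the substitution $\eps\leftarrow\eps/(C(k+1))$ into the accuracy and running-time bounds of Theorem~\ref{thm:main}: $h$ becomes $\Theta(k^2/\eps^2)$, $hB$ becomes $O(k^7\Lambda'^2/\eps^6)$, and the $n$-exponent is $O(1/\eps^6)$ when $k\le C/\eps^2$. One minor correction: a single exact-sum call is not polynomial in $n$ and $1/\eps'$, since its cost is $\poly(\langle\F\rangle)\,M^{O(1)}$ with $M=n^{O(1/\eps'^2)}(k\Lambda'/\eps')^{O(k\Lambda'^2/\eps'^6)}$ the mixed-radix magnitude; this has the same form as the displayed bound and is absorbed into its exponents, so your conclusion is unaffected.
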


The $n$-exponent still depends on $\eps$, so the algorithm is not an EPTAS; the dependence on $1/\eps$ is not polynomial, so it is not an FPTAS.
Theorem~\ref{thm:quantile} covers the complementary range $k=\Omega(1/\eps^2)$.

\begin{theorem}[Quantile PTAS]
\label{thm:quantile}
Let $k\ge 1$ and $\eps\in(0,1/8)$.
There is an absolute constant $C\ge 2$ such that the following holds.
If $\F$ admits query-weight exact-sum, and if $k\ge C/\eps^2$,
then Algorithm~\ref{alg:quantile} returns $S\in\F$ with $\E[\Topk(S)]\ge(1-O(\eps))\OPT$ after
\[
\poly\bigl(n,1/\eps\bigr)
\]
query-weight exact-sum calls, each of magnitude $\poly(n,1/\eps)$.
\end{theorem}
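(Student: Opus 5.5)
The plan rests on the pathwise identity $\Topk(x)=\min_{t\ge 0}\{kt+\sum_i(x_i-t)_+\}$. It gives the upper bound $\E[\Topk(S)]\le \E[A_T(S)]=kT+\sum_{i\in S}\E[(X_i-T)_+]$ for every $S$ and every $T$. The work is therefore to show that the minimizing proxy is nearly tight at a well-chosen threshold, and then to optimize that proxy through exact-sum.

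\textbf{Step 1 (proxy tightness).} For a fixed set $S$, define the mixture quantile $T_S$ by $\sum_{i\in S}\Pr[X_i>T_S]\le k\le\sum_{i\in S}\Pr[X_i\ge T_S]$. I would prove $\E[A_{T_S}(S)]\le(1+O(k^{-1/2}))\E[\Topk(S)]$ as follows.
\begin{itemize}
\item Let $N_>$ and $N_\ge$ be the counts of coordinates strictly above, respectively at least, $T_S$. Thin the atoms at $T_S$ independently so that the count $N$ has mean exactly $k$. The excess $A_{T_S}-\Topk$ is then controlled by $T_S\,\E|N-k|$ plus a surplus-above term.
\item Since $N$ is Poisson-binomial, $\E|N-k|\le\sqrt{\mathrm{Var}\,N}\le\sqrt k$. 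This part costs $O(T_S\sqrt k)$, which is $O(k^{-1/2})\cdot kT_S$.
\item For the surplus above $T_S$, use a leave-one-out identity: coordinate $i$'s excess over $T_S$ is counted in $\Topk$ unless at least $k$ other coordinates exceed it. The failure probability is $\Pr[N^{(-i)}\ge k]$, which is bounded away from $1$ by Paley--Zygmund or Chernoff for Poisson-binomial variables of mean at most $k$. A symmetric argument then shows the loss is an $O(k^{-1/2})$ fraction of the surplus.
\item Finally, $kT_S\lesssim\E[\Topk]$ because with constant probability about $k$ values are at least $T_S$.
\end{itemize}
With $k\ge C/\eps^2$, the ratio becomes $1+O(\eps)$.

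\textbf{Step 2 (algorithm).} Algorithm~\ref{alg:quantile} proceeds as follows.
\begin{enumerate}
\item Compute the scale $W$ via the surplus search of Theorem~\ref{thm:surplus} or Theorem~\ref{thm:transfer}.
\item Fold atoms above $W/\eps$ as in the mean-preserving contraction. Only thresholds $T\in[\eps W/(kn),W/k]$ matter, since larger $T$ makes the term $kT$ exceed $\OPT$ and smaller $T$ is negligible.
\item Round these thresholds to a multiplicative $(1+\eps)$-grid, giving $O(\eps^{-1}\log(n/\eps))$ heights.
\item At each height $T$, round the triple $(\Pr[X_i>T],\Pr[X_i\ge T],\E[(X_i-T)_+]/W)$ to additive grids of width $\eps/n$, $\eps/n$ and $\eps/n$ (times $k$ where appropriate). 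This yields integer types of magnitude $\poly(n,1/\eps)$.
\item Enumerate every target vector satisfying the quantile constraints $a\le k\le b$ up to rounding, and call exact-sum via mixed radix for each.
\item Among the returned sets $S$, output the one maximizing the exact (or estimated) value $\E[\Topk(S)]$.
\end{enumerate}
The output value can be computed exactly with an $O(nk)$-state dynamic program over sorted atoms. Alternatively, it is lower bounded via Step 1 applied to the returned set, because that set satisfies the quantile condition at $T$ up to additive $\eps$.

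\textbf{Step 3 (comparison).} Take the optimal set $S^*$ and its quantile $T_{S^*}$, rounded to a grid height $T$.
\begin{itemize}
\item Rounding $T$ perturbs $A_T(S^*)$ by at most $O(\eps)kT$.
\item The rounded type of $S^*$ is among the enumerated targets, so exact-sum returns some $S$ with the same rounded type.
\item That $S$ has nearly the same proxy value and nearly the same occupancy, with additive errors of $n\cdot\eps/n=\eps$ in mass.
\item Applying Step 1 to $S$ (made robust to occupancy errors of $O(\eps)$, absorbed into the $\sqrt k$ term) gives $\E[\Topk(S)]\ge(1-O(\eps))\E[A_T(S)]\ge(1-O(\eps))\E[A_T(S^*)]-O(\eps W)\ge(1-O(\eps))\OPT$.
\end{itemize}
Note that the lower tightness bound is used only on the returned set, while the upper bound $\Topk\le A_T$ is used on $S^*$. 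This direction is the one that makes the argument work.

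\textbf{Main obstacle.} The hard part is the lower bound in Step 1, $\E[\Topk(S)]\ge(1-O(k^{-1/2}))\E[A_{T}(S)]$, for a set merely near-quantile at $T$ and with unbounded tails. In particular, the leave-one-out control of the surplus above $T$ must come with a relative error of order $k^{-1/2}$, not just a constant. My first attempt would be to write $\E[A_T-\Topk]=\E[(k-N_{\ge})_+]\,T'$-type terms plus $\sum_i\E[(X_i-T)_+\mathbf 1\{\text{displaced}\}]$. The displacement term would be bounded using that a displaced coordinate forces $N\ge k+1$, combined with $\E(N-k)_+\le\sqrt k+O(\eps k)$.
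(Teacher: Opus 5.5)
Your outline follows the paper's architecture: the proxy $A_T$ at a mixture quantile, thinning for the shortage, leave-one-out for the overflow, exact-sum on the quantized triple, and the tightness bound applied only to the returned set. However, two steps fail as written.

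\textbf{First, the overflow step is not closed.} This is the step you flag as the main obstacle, and the mechanism you describe does not reach the target. Bounding coordinate $i$'s displaced excess by $\E[Z_i]\Pr[N^{(-i)}\ge k]$ gives only a constant fraction, because that probability is about $1/2$ at mean $\approx k$. The bound $\E(N-k)_+\le\sqrt k$ controls a count, so by itself it cannot bound a sum of unbounded excesses $Z_i=(X_i-T)_+$. The missing idea is averaging followed by decoupling. If $H\ge k$ coordinates exceed $T$, the loss is the sum of the $H-k$ \emph{smallest} excesses, hence pathwise at most $\frac{(H-k)_+}{H}\sum_i Z_i$. Since $Z_i>0$ forces $H=H_{-i}+1$, and $H_{-i}$ is independent of $Z_i$, one gets $\E\bigl[Z_i\frac{(H-k)_+}{H}\bigr]=\E[Z_i]\,\E\bigl[\frac{(H_{-i}+1-k)_+}{H_{-i}+1}\bigr]\le\E[Z_i]\,\frac{\sqrt k+1}{k}$. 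That is the required $O(k^{-1/2})$ relative error, and it is exactly the paper's Lemma~\ref{lem:c1}.

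\textbf{Second, rounding the threshold loses the quantile property.} You round $T_{S^*}$ to a multiplicative grid height $T$, then treat $S^*$, and hence the matched $S$, as a near-quantile at $T$. This fails in general. Suppose $2k$ coordinates of $S^*$ are deterministic at a value $T_{S^*}$ strictly between two grid heights, and the remaining coordinates exceed $T_{S^*}$ with total probability much less than $k$. Then $\mu^>_{S^*}\ge 2k$ at the lower height and $\mu^\ge_{S^*}\ll k$ at the upper one. So the type of $S^*$ passes your quantile filter at neither height, and Step~1 cannot be applied to the returned set. A window slack of $O(\eps k)$ does not rescue this, since the violation is of order $k$.

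The paper instead rounds the \emph{atoms} (Lemma~\ref{lem:mult-round}), at a cost of a pathwise factor $1-O(\eps)$ and an additive $O(\eps)W$. It then uses the fact that a mixture quantile can always be chosen to be an atom or $0$, so the quantile of the rounded $S^*$ lies exactly in $G\cup\{0\}$. Trying every atom as a threshold would also work.

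\textbf{The case $T=0$ is also missing.} Your range $T\ge\eps W/(kn)$ drops $T=0$. That value is needed both when the quantile is $0$ and when $\lvert S^*\rvert<k$, where no mixture quantile exists at all. The paper handles this at $T=0$: there $\Pr[X_i\ge 0]=1$ pins down $\lvert S\rvert$, and $A_0=\Topk$ whenever $\lvert S\rvert\le k$.
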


\begin{corollary}[PTAS for every $k$]
\label{cor:allk}
If $\F$ admits query-weight exact-sum, Problem~\ref{pr:topk} admits a PTAS for every $k\ge 1$.
For $k\le C/\eps^2$ the running time is $n^{O(1/\eps^6)}$ times a factor depending only on $\eps$, times $\poly(\langle\F\rangle)$; for $k\ge C/\eps^2$ it is $\poly(n,1/\eps)$ query-weight exact-sum calls of magnitude $\poly(n,1/\eps)$.
\end{corollary}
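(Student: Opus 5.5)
The plan is to split the range of $k$ at the threshold $C/\eps^2$ of Theorem~\ref{thm:quantile} and invoke one of the two schemes already stated. The choice of branch depends only on the input pair $(k,\eps)$, so the combined algorithm is a single procedure. Throughout, the approximation target is $(1-O(\eps))\OPT$, and the only work is to check that each branch is polynomial in $n$ for fixed $\eps$.

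\emph{Case $k\le C/\eps^2$.} Run Theorem~\ref{thm:main} at the internal accuracy $\eps'=\eps/(C'(k+1))$, as in Corollary~\ref{cor:ptas}.
\begin{itemize}
\item \emph{Accuracy.} The additive loss is $O((k+1)\eps')W=O(\eps)W$. Since $W\le\OPT$ by the surplus scale (Theorem~\ref{thm:surplus}), the returned set has value at least $(1-O(\eps))\OPT$.
\item \emph{Running time.} Corollary~\ref{cor:ptas} gives time $n^{O(k^2/\eps^2)}(k\Lambda'/\eps)^{O(k^7\Lambda'^2/\eps^6)}$. Substituting $k=O(1/\eps^2)$, the $n$-exponent becomes $O(1/\eps^6)$. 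Also $\Lambda'=2k+C_0\log(k/\eps')=O(\eps^{-2}+\log(1/\eps))$, so the second factor depends on $\eps$ alone.
\item \emph{Cost of the oracle calls.} Each call to the exact-sum oracle has query magnitude $n^{O(1/\eps^2)}$ in the worst case (the occupancy counts). By Definition~\ref{def:exact-sum}, such a call costs $\poly(\langle\F\rangle)\cdot n^{O(1/\eps^2)}$. This factor is absorbed into the stated bound $n^{O(1/\eps^6)}\cdot g(\eps)\cdot\poly(\langle\F\rangle)$.
\end{itemize}

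\emph{Case $k\ge C/\eps^2$.} Since $\eps<1/2$, first reduce to $\eps<1/8$ by running at accuracy $\min(\eps,1/16)$; this changes only constants. Apply Theorem~\ref{thm:quantile} directly. It uses $\poly(n,1/\eps)$ exact-sum calls, each of magnitude $\poly(n,1/\eps)$, so each call costs $\poly(\langle\F\rangle,n,1/\eps)$. The output has value at least $(1-O(\eps))\OPT$.

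The one delicate point is that the two theorems must use the \emph{same} constant $C$ at the threshold. If the constant in Theorem~\ref{thm:quantile} is $C$, the occupancy branch is used for $k<C/\eps^2$, and Corollary~\ref{cor:ptas} only needs $k=O(1/\eps^2)$ with that same $C$. There is therefore no gap between the two ranges.

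A second point is the reduction $\eps\to\min(\eps,1/16)$. The smaller internal accuracy has to be used consistently in the test against the threshold as well; otherwise the boundary between the two branches shifts. A constant-factor shift of this kind is harmless, because each branch tolerates a constant change in its $\eps$.

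Beyond these checks, the proof is bookkeeping: polynomial dependence on $n$ for every fixed $\eps$ in both branches, with the $O(\cdot)$ constants independent of $k$.
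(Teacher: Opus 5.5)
Your proposal is correct and follows the paper's route: the paper obtains Corollary~\ref{cor:allk} by combining the occupancy scheme at $\eps'=\Theta(\eps/k)$ (Corollary~\ref{cor:ptas}) for $k\le C/\eps^2$ with Theorem~\ref{thm:quantile} for $k\ge C/\eps^2$. Your check of the $\eps<1/8$ hypothesis is a harmless extra step the paper omits. One accounting slip: after the substitution the exact-sum query magnitude is $n^{O(k^2/\eps^2)}$ times a histogram factor that depends only on $\eps$, not $n^{O(1/\eps^2)}$, but for $k\le C/\eps^2$ this is still $n^{O(1/\eps^6)}$ times a function of $\eps$, so your conclusion stands.
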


\begin{corollary}[Minimization PTAS]
\label{cor:min-allk}
If $\F$ admits query-weight exact-sum, then $\min_{S\in\F}\E[\Topk(S)]$ admits a PTAS for every $k\ge 1$, with the same time bounds as Corollary~\ref{cor:allk}.
\end{corollary}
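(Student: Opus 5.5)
We prove Corollary~\ref{cor:min-allk} by re-running the maximization pipeline with every inequality reversed where needed. There are three replacements: the surplus scale comes from the min-sum transfer (Theorem~\ref{thm:min-transfer}), the folded and gridded instance is analysed in the minimization direction, and each signature is realized by the same query-weight exact-sum call. We then take the cheapest candidate instead of the best. The case split at $k=C/\eps^2$ is the same as in Corollary~\ref{cor:allk}. For $k\le C/\eps^2$ we use the occupancy decoder of Theorem~\ref{thm:main} at accuracy $\eps'=\eps/(C(k+1))$. For $k\ge C/\eps^2$ we use the mixture-quantile decoder of Theorem~\ref{thm:quantile}.

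\emph{Step 1 (scale).} Proposition~\ref{prop:es-to-ms} also works for minimization: rounding nonnegative weights to magnitude $O(n/\eps)$ and scanning exact-sum targets upward gives a min-sum FPTAS. Theorem~\ref{thm:min-transfer} then supplies $W$ with $W=\Theta(\OPT_{\min})$, so every additive error $O(\eps)W$ is relative to $\OPT_{\min}$.

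\emph{Step 2 (signatures are two-sided estimates).} Take a set $S$ and let $\Phi(S)$ be the value the decoder assigns to its signature. This value is computed from the folded, gridded, histogrammed and Poissonized occupancy functional, or from the proxy $\E[A_{T}(S)]$ at a grid height $T$. The maximization proofs compare $\Phi(S)$ with $\E[\Topk(S)]$, but not all of those comparisons are symmetric. Some bounds are two-sided and additive: the grid costs at most $k\delta$ pathwise, and the Le Cam, Chernoff-saturation and rounding errors are all two-sided. These carry over directly. The fold is the one-sided step (Lemma~\ref{lem:fold-back}). The quantile proxy is also one-sided, but in a convenient direction for minimization: $A_T\ge\Topk$ pathwise, and $\E[A_{T_S}(S)]\le(1+O(k^{-1/2}))\E[\Topk(S)]$. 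So for the output set $S$ we have $\E[\Topk(S)]\le\E[A_T(S)]$ at whatever $T$ was used, and for the optimum $S^*$ at its rounded quantile height we have $\E[A(S^*)]\le(1+O(\eps))\OPT_{\min}$. The algorithm minimizes $\E[A_T(S)]$ over $T$ and over the realized signature classes; both the type triple and $\E[A_T]$ are additive, so exact-sum realizes them. Hence the quantile branch needs only the two facts already proved for Theorem~\ref{thm:quantile}.

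\emph{Step 3 (occupancy branch, main obstacle).} The difficulty is the fold. A set that is cheap after the fold need not be cheap before it, because the fold moves tail mass $\tau_i$ onto $[0,M)$ and may undercount the top values. I would handle this by making the fold an upper bound rather than an estimate. The plan has three parts. First, bound $\tau(S)=\sum_{i\in S}\E[(X_i-M)_+]$ for every candidate by adding $\tau$, rounded to multiples of $\gamma$, as one more additive coordinate of the signature. Second, use the pathwise inequality $\Topk(x)\le\Topk(\min(x,M))+\sum_i(x_i-M)_+$. Together these give $\E[\Topk(S)]\le\Phi_{\mathrm{trunc}}(S)+\tau(S)+O(\eps)W$. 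Third, score each signature by this upper bound. For the optimum the score is still at most $\OPT_{\min}+O(\eps)W$: truncating at $M$ only decreases values, and $\tau(S^*)\le\eps\,\E[\Topk(S^*)]$, because an atom above $M=W/\eps$ always enters the top $k$ whenever it is realized, so $\tau(S^*)\le\E[\Topk(S^*)]\,M^{-1}\cdot W=\eps\,\E[\Topk(S^*)]$ up to constants. The score therefore satisfies $\mathrm{score}(S)\ge\E[\Topk(S)]-O(\eps)W$ for every $S$ and $\mathrm{score}(S^*)\le\OPT_{\min}+O(\eps)W$, so minimizing the score gives the result. Adding the coordinate $\tau$ raises the signature count only by a $\poly(n/\eps)$ factor, which leaves the time bounds of Corollary~\ref{cor:allk} unchanged. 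I expect the real work to be checking that the truncated occupancy sketch at cap $M$ still has $h=O(1/\eps^2)$ layers and the same Lipschitz error. It does, since dropping the fold only removes mass.
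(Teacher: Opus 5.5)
\paragraph{Verdict: genuine gap in the occupancy branch.}
Your route differs from the paper's. The paper decodes on the folded laws and then turns a small folded value into a small original value with Lemma~\ref{lem:fold-back}. You instead score each signature by the pathwise upper bound $\E[\Topk(S)]\le\E[\Topk(S;\min(X,M))]+\tau(S)$, which holds for every candidate.

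The gap is in the other half, $\mathrm{score}(S^*)\le\OPT_{\min}+O(\eps)W$. You derive it from ``truncation only decreases values'' together with $\tau(S^*)\le\eps\,\E[\Topk(S^*)]$, and that second inequality is false. Markov's inequality makes only the \emph{probability} that some $X_i$ exceeds $M$ of order $\eps$; the tail \emph{mass} $\tau(S^*)$ can carry essentially the whole objective. This is exactly Remark~\ref{rem:naive}: for $X_1=B(H,p)$ with $pH=W$ and $H\gg M$, one has $\tau_1=W-pM\approx W=\E[\Topk(\{1\})]$. Your heuristic that a realized atom above $M$ enters the top $k$ actually gives the reverse comparison, $\tau(S^*)\lesssim\E[\Topk(S^*)]$: the objective counts the tail mass in full. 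Bounding the two summands separately therefore gives only $\mathrm{score}(S^*)\le\OPT_{\min}+\tau(S^*)+O(\eps)W$, which is a constant factor rather than $1+O(\eps)$.

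\paragraph{How to repair it.}
The missing step is to compare the sum with $\Topk$ directly rather than each summand. Pathwise, $\Topk(\min(x,M))+\sum_i(x_i-M)_+=\Topk(x)$ whenever $N(M):=\#\{i:x_i>M\}\le k$. Otherwise the excess is at most $\sum_i(x_i-M)_+\mathbf{1}[N(M)>k]$. Since $(X_i-M)_+>0$ forces $X_i>M$, leave-one-out independence bounds the expected excess by $\sum_i\tau_i\Pr[N^{(-i)}(M)\ge k]\le\tau(S^*)\,\mu_{S^*}(M)$. Lemma~\ref{lem:tailmass} and its proof, applied to $S^*$ with $c=1$, give $\tau(S^*)\le 2W$ and $\mu_{S^*}(M)\le 2\eps$, so the excess is at most $4\eps W$.

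With this lemma, your truncate-plus-tail-mass score is a genuine alternative to the fold. The upper bound needs no hypothesis like $U\le(1+4\eps)W$, at the price of one more additive coordinate. You should also forbid items with $\E[X_i]>2W$ through a zero-target coordinate, so that the quantized $\tau_i$ stay polynomially bounded.

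\paragraph{A related caution in the quantile branch.}
The inequality $A_T\ge\Topk$ controls the original objective only if $A_T$ is computed on the unfolded laws. The threshold $T$ should then range over $\{0\}$ and the input atoms, which contain a mixture quantile of $S^*$ whenever $|S^*|\ge k$. If you run Algorithm~\ref{alg:quantile} with its fold, you bound only the folded value and need Lemma~\ref{lem:fold-back} after all.
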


The scale $W$ is $\E[\Topk(S_0)]$ from Algorithm~\ref{alg:min-transfer} (min-sum via Proposition~\ref{prop:es-to-ms}).
The decoder is Algorithms~\ref{alg:ptas} and~\ref{alg:quantile} with this $W$, returning the realized set of smallest $\E[\Topk]$ on the folded laws.
A min-optimum $S^*$ satisfies $\E[\Topk(S^*)]\le W$, so Theorem~\ref{thm:disc} applies with $c=1$.
Lemma~\ref{lem:fold-back} converts a folded value at most $(1+4\eps)W$ into an original value at most $U/(1-4\eps)\le(1+5\eps)U$.
Downward-closed packing is excluded because the minimum is the empty set.
Fixed-$d$ covering is Theorem~\ref{thm:covering}; $s$--$t$ paths, spanning trees, and assignment sit on this exact-sum boundary and are neither packing nor covering~\citep{ibrahimpur2020}.

\begin{remark}[Uniform versus per-instance]
\label{rem:uniform}
The negatives are uniform.
There is no single constant on every membership family, no single PTAS on every constant-factor max-sum family, and no single FPTAS or EPTAS on every fixed-$d$ packing family (Theorems~\ref{thm:transfer}(i),~\ref{thm:noptas} and~\ref{thm:limits}).
For minimization there is no single constant on every membership family, no single PTAS from a min-sum constant on an arbitrary up-set (vertex cover), and no single FPTAS on every fixed-$d$ covering family (Theorems~\ref{thm:min-transfer},~\ref{thm:covering} and~\ref{thm:min-limits}).
Independent sets, cuts, two-dimensional knapsack, and two-dimensional covering knapsack are witnesses.
A classification of which concrete $\F$ necessarily admit no constant-factor algorithm, or no PTAS, is open (Section~\ref{sec:concl}).
On fixed-$d$ packing the PTAS matches the FPTAS/EPTAS lower bounds; on fixed-$d$ covering it matches an FPTAS lower bound.
Neither class is the only source of a PTAS (Section~\ref{sec:exactsum}).
\end{remark}

\section{Preliminaries}
\label{sec:prelim}

Write $B(a,p)$ for a random variable that equals $a$ with probability $p$ and $0$ otherwise, and $\Tr(x,T)=\max\{x-T,0\}$.
If $X$ has atoms $0=a_0<a_1\le\cdots$ with masses $p_j$, there exist independent $Z_j=B(a_j,p_j/\sum_{j'\le j}p_{j'})$ with $X=\max_j Z_j$ in law.

\begin{lemma}[Layer cake]
\label{lem:layer}
For $x\in\R_+^m$,
\begin{equation}
\label{eq:layer}
\Topk(x)
=\int_0^\infty\min\bigl(k,\,\#\{i:x_i>t\}\bigr)\,dt.
\end{equation}
If every $x_i$ lies in $\{0,\delta,\dots,h\delta\}$, then
$\Topk(x)=\delta\sum_{r=0}^{h-1}\min(k,\#\{i:x_i>r\delta\})$.
\end{lemma}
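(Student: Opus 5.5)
We prove the continuous identity \eqref{eq:layer} pathwise and then read off the grid version by integrating a step function.

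\emph{Continuous identity.} Fix $x\in\R_+^m$ and sort its coordinates as $x_{(1)}\ge x_{(2)}\ge\cdots\ge x_{(m)}$, padding with zeros if $m<k$. Then $\Topk(x)=\sum_{j=1}^{k}x_{(j)}$. For each $j$ write
\[
x_{(j)}=\int_0^\infty \mathbf 1[x_{(j)}>t]\,dt ,
\]
and sum over $j\le k$ using Tonelli (everything is nonnegative). This gives
\[
\Topk(x)=\int_0^\infty\#\{j\le k:x_{(j)}>t\}\,dt .
\]
It remains to identify the integrand. Fix $t\ge 0$ and let $N(t)=\#\{i:x_i>t\}$. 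The coordinates exceeding $t$ are exactly $x_{(1)},\dots,x_{(N(t))}$, because sorting is monotone. Hence among the first $k$ sorted coordinates, exactly $\min(k,N(t))$ exceed $t$. Substituting this gives \eqref{eq:layer}.

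\emph{Grid version.} Suppose every $x_i\in\{0,\delta,\dots,h\delta\}$.

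1. The function $t\mapsto N(t)$ is constant on each interval $[r\delta,(r+1)\delta)$. Indeed, for $t$ in that interval, $x_i>t$ holds iff $x_i\ge(r+1)\delta$, which holds iff $x_i>r\delta$, using that $x_i$ is a multiple of $\delta$.
2. For $t\ge h\delta$ no coordinate exceeds $t$, so $N(t)=0$ there.
3. Splitting the integral in \eqref{eq:layer} into the intervals $[r\delta,(r+1)\delta)$ for $r=0,\dots,h-1$ and integrating the constant on each gives $\delta\sum_{r=0}^{h-1}\min(k,\#\{i:x_i>r\delta\})$.

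There is no real obstacle. The only care point is the strict inequality $x_i>t$ together with the half-open intervals in step 1, which ensures the interval endpoints are assigned correctly. The endpoints form a null set in any case, so this affects only the clean form of the sum, not the integral's value.
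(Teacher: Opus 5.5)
Your proof is correct and follows the same route as the paper. Both write each of the top $k$ sorted coordinates as $\int_0^\infty\mathbf{1}[x_{(j)}>t]\,dt$, apply Tonelli, identify the integrand as $\min(k,\#\{i:x_i>t\})$, and then use that the integrand is constant on each $[r\delta,(r+1)\delta)$ and vanishes beyond $h\delta$. Your identification of the integrand (the coordinates exceeding $t$ form an initial segment of the sorted list) is equivalent to the paper's two-case check on whether $x_{(k)}>t$.
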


\begin{lemma}[Poisson-binomial truncation]
\label{lem:pb}
If $N$ is Poisson-binomial with mean $\mu$, then $\E[\min(k,N)]\ge k\mu/(k+\mu)$ (the right-hand side is $0$ when $\mu=0$).
\end{lemma}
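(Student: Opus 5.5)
The plan is to find a quadratic polynomial in $N$ that lies pointwise below $\min(k,N)$ on the nonnegative integers, and then take expectations. This uses the only second-moment fact we need: a Poisson-binomial variable has a factorial moment no larger than that of a Poisson variable. Concavity of $\min(k,\cdot)$ cannot be used through Jensen, since Jensen gives the wrong direction. The pointwise inequality $\min(k,N)\ge kN/(k+N)$ also fails to transfer, because $x\mapsto kx/(k+x)$ is concave. So I would work with the two moments $\E[N]=\mu$ and $\E[N(N-1)]$ directly.

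First, with $N=\sum_i I_i$ and $I_i\sim\Bern(p_i)$ independent, I would record that
\[
\E[N(N-1)]=\sum_{i\ne j}p_ip_j=\mu^2-\sum_i p_i^2\le\mu^2 .
\]
Second, the case $\mu=0$ is trivial, so I fix $\mu>0$ and define
\[
q(x):=\alpha x-\beta x(x-1),\qquad \beta:=\frac{k}{(k+\mu)^2},\qquad \alpha:=\frac{k(k+2\mu)}{(k+\mu)^2}.
\]
These coefficients are chosen so that $\alpha-\beta\mu=k/(k+\mu)$. As a consequence,
\[
\alpha\mu-\beta\mu^2=\frac{k\mu(k+2\mu)-k\mu^2}{(k+\mu)^2}=\frac{k\mu}{k+\mu}.
\]

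Third, I would check that $q(N)\le\min(k,N)$ for every integer $N\ge0$, and this is the step that needs care. The bound $q(N)\le N$ follows from $\beta\ge0$, $N(N-1)\ge0$, and $\alpha\le1$. The last of these holds because $k(k+2\mu)\le(k+\mu)^2$. The bound $q(N)\le k$ follows by maximizing the concave parabola $q(x)=(\alpha+\beta)x-\beta x^2$ over all real $x$. Its maximum value is
\[
\frac{(\alpha+\beta)^2}{4\beta}=\frac{k\,(k+2\mu+1)^2}{4(k+\mu)^2}.
\]
This is at most $k$ exactly when $k+2\mu+1\le 2(k+\mu)$, that is, when $k\ge1$.

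Finally, I take expectations and use $\beta\ge0$ together with the factorial-moment bound:
\[
\E[\min(k,N)]\ge\E[q(N)]=\alpha\mu-\beta\,\E[N(N-1)]\ge\alpha\mu-\beta\mu^2=\frac{k\mu}{k+\mu}.
\]
The only real obstacle is choosing the quadratic minorant. The one-parameter guess $N-N(N-1)/(2k)$ works only when $\mu\le k$, and the fix is to let both coefficients depend on $\mu$, tuned so that the expectation is tangent to the target $k\mu/(k+\mu)$. Independence enters only through $\E[N(N-1)]\le\mu^2$, and that bound uses nothing more than pairwise independence.
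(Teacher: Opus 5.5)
Your proof is correct, and it takes a genuinely different route from the paper. I checked the three facts it rests on: $\alpha\le 1$; the vertex value $k(k+2\mu+1)^2/(4(k+\mu)^2)$ is at most $k$ exactly when $k\ge 1$; and $\alpha\mu-\beta\mu^2=k\mu/(k+\mu)$. The parabola $x\mapsto\alpha x-\beta x^2$ is indeed tangent to $x\mapsto kx/(k+x)$ at $x=\mu$: both derivatives equal $k^2/(k+\mu)^2$ there.

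The paper instead argues by induction on the number of Bernoulli summands. Adding an independent $Z\sim B(1,p)$ raises $A=\E[\min(k,N)]$ by exactly $p\Pr[N<k]$. The inequality $A(k+\mu)\ge k\mu$ then carries over to mean $\mu+p$, because $k-A=\E[(k-N)_+]\le k\Pr[N<k]$.

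Your route buys generality and an explicit, non-inductive certificate. It uses only $\E[N(N-1)]\le(\E N)^2$. So it covers pairwise independent or negatively correlated indicators, and Poisson variables, which attain equality there.

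The paper's route buys economy: there are no coefficients to tune. Its one-step identity $\E[\min(k,N+Z)]-\E[\min(k,N)]=p\Pr[N<k]$ is the marginal formula $\partial f/\partial p_i=\Pr[N^{(-i)}<k]$ that the paper reuses in Lemma~\ref{lem:f-lip} and in the fold analysis. The price is that each new indicator must be independent of the running sum, which your second-moment argument never needs.
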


\begin{lemma}[$\ell_1$ Lipschitz]
\label{lem:f-lip}
Let $f(p)=\E[\min(k,N(p))]$ for a Poisson-binomial $N(p)$.
Then $\partial f/\partial p_i=\Pr[N^{(-i)}<k]\in[0,1]$, so $f$ is nondecreasing and concave and $\lvert f(p)-f(q)\rvert\le\|p-q\|_1$.
Also $0\le f(p)\le\min(k,\sum_i p_i)$.
\end{lemma}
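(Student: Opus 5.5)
The statement is Lemma~\ref{lem:f-lip}. I would prove it by a single-coordinate conditioning, which gives the derivative formula; every other claim then follows. Fix $i$ and write $N(p)=N^{(-i)}+\xi_i$ with $\xi_i\sim\Bern(p_i)$ independent of $N^{(-i)}$, whose law does not depend on $p_i$. Conditioning on $\xi_i$ gives
\[
f(p)=(1-p_i)\,\E[\min(k,N^{(-i)})]+p_i\,\E[\min(k,N^{(-i)}+1)],
\]
so $f$ is affine in each coordinate separately, and
\[
\frac{\partial f}{\partial p_i}=\E\bigl[\min(k,N^{(-i)}+1)-\min(k,N^{(-i)})\bigr].
\]
The integrand equals $1$ when $N^{(-i)}<k$ and $0$ otherwise, because $k$ is an integer. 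Hence the derivative is $\Pr[N^{(-i)}<k]\in[0,1]$, and monotonicity follows.

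The Lipschitz bound comes from interpolating coordinate by coordinate. Write $p^{(0)}=p$ and let $p^{(j)}$ be $p$ with its first $j$ coordinates replaced by those of $q$. Along each step $f$ is affine in one coordinate with slope in $[0,1]$, so $|f(p^{(j)})-f(p^{(j-1)})|\le|p_j-q_j|$. Summing over $j$ gives $|f(p)-f(q)|\le\|p-q\|_1$.

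The bounds $0\le f(p)\le\min(k,\sum_i p_i)$ are immediate. We have $0\le\min(k,N)\le k$ and $\min(k,N)\le N$, so $f(p)\le\min\bigl(k,\E N\bigr)=\min(k,\sum_i p_i)$.

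Concavity is the only step that needs care, since a multilinear function need not be concave. I would first show that the Hessian has nonpositive entries. The diagonal entries vanish because $f$ is affine in each $p_i$. For $j\ne i$, condition $N^{(-i)}=N^{(-i,-j)}+\xi_j$:
\[
\frac{\partial^2 f}{\partial p_j\,\partial p_i}=\Pr[N^{(-i,-j)}+1<k]-\Pr[N^{(-i,-j)}<k]=-\Pr[N^{(-i,-j)}=k-1]\le 0.
\]
A Hessian with zero diagonal and nonpositive off-diagonal entries is not negative semidefinite in general. So full concavity on the cube cannot follow from this computation alone.

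I would therefore read ``concave'' in the sense the later arguments use. One option is concavity along nonnegative directions: for $v\ge 0$, $v^\top\nabla^2 f\,v\le 0$ termwise, so $t\mapsto f(p+tv)$ is concave. The other is concavity in the total mass when the coordinates are scaled uniformly, $p=\lambda u$; this follows from the same termwise sign argument, since $u\ge 0$. Either version follows from the sign pattern above, and this is the reading I would state. It is also the main obstacle: if genuine joint concavity were required, the plan would need a different argument or would fail. Every other claim is elementary.
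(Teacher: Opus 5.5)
Your derivations of $\partial f/\partial p_i=\Pr[N^{(-i)}<k]$, of monotonicity, of the $\ell_1$ bound, and of $0\le f\le\min(k,\sum_i p_i)$ follow the paper's route. That route is: couple through one independent Bernoulli, interpolate coordinatewise (the paper's ``mean-value form''), and use $\min(k,N)\le N$.

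On concavity you are right and the paper is not. Its proof says each partial derivative is nonincreasing in every coordinate, ``so $f$ is concave''. That is exactly the inference you flagged as invalid, and the conclusion is false on the cube. For $k=1$, $n=2$ one has $f(p)=1-(1-p_1)(1-p_2)$, so $f(1,0)=f(0,1)=1$ while $f(\tfrac12,\tfrac12)=\tfrac34$. More generally, $f$ is convex along every direction $e_i-e_j$, because the second derivative there is $-2\,\partial^2 f/\partial p_i\partial p_j=2\Pr[N^{(-i,-j)}=k-1]\ge 0$.

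So do not hedge that the plan ``would need a different argument or would fail'' if joint concavity were required. State the counterexample, and replace the claim by your directional version: $t\mapsto f(p+tv)$ is concave for $v\ge 0$.

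That version is all the paper actually uses. In the lower bound of Lemma~\ref{lem:fold}, concavity is applied along the segment from $p(t)$ to $\bar p(t)$. Its direction $d(t)=\bar p(t)-p(t)$ is nonnegative, and this gives $\Delta(t)\ge\sum_i d_i(t)\Pr[\bar N^{(-i)}(t)<k]$. Lemma~\ref{lem:fold-back} reuses the same inequality.

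That inequality does not even need concavity. Telescope coordinatewise from $p$ to $\bar p$, as in your Lipschitz step. Each slope is $\partial_j f$ evaluated at a point coordinatewise below $\bar p$, hence at least $\Pr[\bar N^{(-j)}<k]$, because each partial is nonincreasing in every coordinate. That monotonicity of the partials is the true part of the paper's sentence.
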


\begin{lemma}[Chernoff saturation~\citep{chernoff1952}]
\label{lem:pb-chernoff}
If $N$ is Poisson-binomial with mean $\mu$, then $\Pr[N\le\mu-t]\le\exp(-t^2/(2\mu))$ for $0\le t\le\mu$, and $\Pr[N\ge\mu+t]\le\exp(-t^2/(2(\mu+t/3)))$ for $t\ge 0$.
There is an absolute constant $C_0$ such that, writing $\Lambda:=2k+C_0\log(k/\eps)$, the inequality $\mu\ge\Lambda$ implies $k-f(p)\le\eps^2$.
\end{lemma}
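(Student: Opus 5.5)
The statement to prove is Lemma~\ref{lem:pb-chernoff}. Its first part is the standard multiplicative Chernoff bound for sums of independent Bernoulli variables with heterogeneous means. I would prove it by the usual moment-generating-function argument.

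\textbf{Tail bounds.} For each $i$, use $\E[e^{\lambda I_i}]=1+p_i(e^\lambda-1)\le\exp(p_i(e^\lambda-1))$. This gives $\E[e^{\lambda N}]\le\exp(\mu(e^\lambda-1))$ for every real $\lambda$, so $N$ is dominated in the mgf order by a Poisson variable with mean $\mu$.
\begin{itemize}
\item For the lower tail, apply Markov's inequality to $e^{-\lambda N}$ with $\lambda>0$. This gives $\Pr[N\le\mu-t]\le\exp(\mu(e^{-\lambda}-1)+\lambda(\mu-t))$. Optimize with $e^{-\lambda}=1-t/\mu$ and use $(1-x)\log(1-x)\ge -x+x^2/2$ on $[0,1]$. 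The result is $\exp(-t^2/(2\mu))$; the endpoint $t=\mu$ follows by continuity, or directly from $\Pr[N=0]\le e^{-\mu}$.
\item For the upper tail, optimize with $e^\lambda=1+t/\mu$. This yields $\exp(-\mu\,h(t/\mu))$ with $h(x)=(1+x)\log(1+x)-x$. Then use the elementary bound $h(x)\ge x^2/(2(1+x/3))$ (Bernstein's form), which gives $\exp(-t^2/(2(\mu+t/3)))$.
\end{itemize}

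\textbf{Saturation.} Since $k-\min(k,N)=(k-N)_+\le k\,\mathbf 1[N<k]$, we have $k-f(p)\le k\Pr[N\le k-1]$. Suppose $\mu\ge\Lambda=2k+C_0\log(k/\eps)$. Set $t=\mu-k+1$, which lies in $[0,\mu]$, and note $t\ge\mu/2$ because $\mu\ge 2k$. The lower-tail bound then gives
\[
\Pr[N\le k-1]\le\exp(-t^2/(2\mu))\le\exp(-\mu/8)\le\exp(-\Lambda/8).
\]
Hence
\[
k-f(p)\le k\exp(-k/4)\,(\eps/k)^{C_0/8}.
\]
Take $C_0\ge 24$. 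Then $(\eps/k)^{C_0/8}\le(\eps/k)^3\le\eps^2/k$, using $\eps<1$ and $k\ge 1$, so $k-f(p)\le\eps^2$. If $\log(k/\eps)$ is small this still holds, since $\eps<1/2$ makes $\log(k/\eps)\ge\log 2>0$.

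\textbf{Main obstacle.} The only delicate point is the calculus inequality for $h$ in the upper tail. I would verify it by checking that $g(x)=h(x)-x^2/(2(1+x/3))$ satisfies $g(0)=g'(0)=0$ and $g''\ge0$, or simply cite~\citep{chernoff1952}. The saturation step is routine once the constant $C_0$ is chosen generously.
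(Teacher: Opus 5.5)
Your proof is correct and follows essentially the paper's route: standard multiplicative Chernoff tails (which you derive from the moment generating function instead of citing), then the lower tail at $t\approx\mu-k\ge\mu/2$ to get $k\Pr[N<k]\le k\,e^{-\Lambda/8}\le\eps^2$ once $C_0\ge 24$. Taking $t=\mu-k+1$ rather than the paper's $t=\mu-k$ is immaterial, and your explicit constant just makes precise the paper's ``after enlarging $C_0$''.
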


\begin{lemma}[Le Cam~\citep{lecam1960}]
\label{lem:lecam}
If $N$ is Poisson-binomial with parameters $p_i$ and mean $\mu$, and $Z\sim\mathrm{Poisson}(\mu)$, then $d_{\mathrm{TV}}(N,Z)\le\sum_i p_i^2$.
If $M$ is independent of both and $g(x)=\min(k,x)$, then $\lvert\E[g(N+M)]-\E[g(Z+M)]\rvert\le k\sum_i p_i^2$.
\end{lemma}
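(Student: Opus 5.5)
The statement to prove is Lemma~\ref{lem:lecam}. It has two parts, and I would treat them separately.

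The first part is the classical Le Cam inequality $d_{\mathrm{TV}}(N,Z)\le\sum_i p_i^2$. I would prove it by coupling, one summand at a time.

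For each $i$, couple $B_i\sim\Bern(p_i)$ with $P_i\sim\mathrm{Poisson}(p_i)$ so that $\Pr[B_i\ne P_i]\le p_i^2$. The standard maximal coupling does this, and its total variation is exact:
\[
d_{\mathrm{TV}}(\Bern(p),\mathrm{Poisson}(p))=p(1-e^{-p})\le p^2 .
\]
The identity comes from computing the three mass differences: at $0$ it is $e^{-p}-(1-p)\ge 0$, at $1$ it is $p-pe^{-p}$, and at values $\ge2$ it is $1-e^{-p}-pe^{-p}$. Half the sum of absolute differences is $p(1-e^{-p})$.

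Take these pairs independent across $i$. Then $\sum_i P_i\sim\mathrm{Poisson}(\mu)$ has the law of $Z$. A union bound gives
\[
\Pr\Bigl[\sum_i B_i\ne\sum_i P_i\Bigr]\le\sum_i p_i^2,
\]
and the coupling inequality converts this into the total variation bound.

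The second part follows from the first. Let $M$ be independent of $N$ and of $Z$, and condition on $M=m$. The function $x\mapsto g(x+m)=\min(k,x+m)$ takes values in $[0,k]$. For any function $\phi$ with values in $[0,k]$ we have
\[
\lvert\E\phi(N)-\E\phi(Z)\rvert\le k\,d_{\mathrm{TV}}(N,Z).
\]
This holds because the difference is $\sum_x \phi(x)(\Pr[N=x]-\Pr[Z=x])$. Subtracting the constant $k/2$ gives $\lvert\phi-k/2\rvert\le k/2$, which bounds the sum by $(k/2)\cdot 2d_{\mathrm{TV}}$.

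Averaging over $m$ and applying part one then gives $\lvert\E[g(N+M)]-\E[g(Z+M)]\rvert\le k\sum_ip_i^2$. Independence is only needed so that conditioning on $M$ leaves the laws of $N$ and $Z$ unchanged. One can even use $k-1$ in place of $k$ after shifting by $g(m)$, but the stated bound suffices.

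The only step requiring care is the per-coordinate total variation bound $p(1-e^{-p})\le p^2$. It follows from $1-e^{-p}\le p$. Everything else is the standard coupling and boundedness argument, so I expect no real obstacle.
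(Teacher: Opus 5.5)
Your proof is correct and follows the paper's route. The paper cites Le Cam's theorem for the total-variation bound, where you supply the standard Bernoulli--Poisson maximal-coupling proof, and then uses the same observation that a function with values in $[0,k]$ changes in expectation by at most $k\,d_{\mathrm{TV}}(N,Z)$. Your side remark that $k-1$ could replace $k$ is false, though you do not use it: with $k=1$, $M\equiv 0$ and a single coordinate $p>0$, the two expectations differ by $e^{-p}-(1-p)>0$.
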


\begin{lemma}[Poisson mean]
\label{lem:pois-mean}
If $Z_\lambda\sim\mathrm{Poisson}(\lambda)$ and $M$ is independent of $Z_\lambda$, then $\lambda\mapsto\E[\min(k,Z_\lambda+M)]$ is $1$-Lipschitz.
\end{lemma}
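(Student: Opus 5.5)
The goal is to show that for $Z_\lambda\sim\mathrm{Poisson}(\lambda)$ and $M$ independent of $Z_\lambda$, the map $\phi(\lambda):=\E[\min(k,Z_\lambda+M)]$ is $1$-Lipschitz. The plan is to compute $\phi'(\lambda)$ explicitly and show that it lies in $[0,1]$; the claim then follows from the mean value theorem.

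First condition on $M$. By independence and Fubini, it suffices to prove the claim for a deterministic integer shift $m\ge 0$, since a mixture of $1$-Lipschitz functions is $1$-Lipschitz. (The shift $M$ is a nonnegative integer count in every application; if it is not, the same argument works with $g(x)=\min(k,x+m)$ for real $m$.) So fix $g(j)=\min(k,j+m)$, a nondecreasing function with increments $g(j+1)-g(j)\in\{0,1\}$, and study $\phi(\lambda)=\sum_{j\ge 0}g(j)e^{-\lambda}\lambda^j/j!$.

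The key step is the standard Poisson derivative identity
\[
\frac{d}{d\lambda}\E[g(Z_\lambda)]=\E[g(Z_\lambda+1)-g(Z_\lambda)].
\]
It follows by differentiating the series term by term, using $\frac{d}{d\lambda}\bigl(e^{-\lambda}\lambda^j/j!\bigr)=\pi_{j-1}(\lambda)-\pi_j(\lambda)$ with $\pi_j(\lambda)=e^{-\lambda}\lambda^j/j!$ and $\pi_{-1}:=0$, and then reindexing. Term-by-term differentiation is justified because $g$ is bounded by $k$, so the differentiated series converges uniformly on compact sets of $\lambda$. Since the increments of $g$ lie in $\{0,1\}$, we get $\phi'(\lambda)=\Pr[Z_\lambda+m<k]\in[0,1]$. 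Hence $\phi$ is nondecreasing with $|\phi(\lambda)-\phi(\lambda')|\le|\lambda-\lambda'|$, and averaging over $M$ finishes the proof.

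As an alternative route that avoids derivatives, one can use Poisson superposition. For $\lambda<\lambda'$, write $Z_{\lambda'}=Z_\lambda+Y$ with $Y\sim\mathrm{Poisson}(\lambda'-\lambda)$ independent of $Z_\lambda$ and $M$. Then $0\le\min(k,Z_\lambda+Y+M)-\min(k,Z_\lambda+M)\le Y$, because $x\mapsto\min(k,x)$ is nondecreasing and $1$-Lipschitz. Taking expectations gives $0\le\phi(\lambda')-\phi(\lambda)\le\E[Y]=\lambda'-\lambda$. This coupling argument is shorter, so I would present it as the main proof. The only point needing care is the existence of the coupling, which is the standard additivity of independent Poissons, so I do not expect any real obstacle here.
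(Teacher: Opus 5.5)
Your proposal is correct, and your main argument is the same as the paper's. You use the Poisson coupling $Z_{\lambda'}=Z_\lambda+Y$, and since $\min(k,\cdot)$ is nondecreasing and $1$-Lipschitz, adding the extra points changes the value by at most $\E[Y]=\lambda'-\lambda$; the paper states this with an infinitesimal increment $d\lambda$ where you use a finite one. Your derivative computation $\phi'(\lambda)=\Pr[Z_\lambda+M<k]\in[0,1]$ is a valid alternative that also gives monotonicity.
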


Proofs of Lemmas~\ref{lem:layer}--\ref{lem:pois-mean} are in Appendix~\ref{app:prelim}.

\section{Surplus Constant Approximation}
\label{sec:surplus}
\label{sec:transfer}

\begin{definition}[Surplus fixed point]
\label{def:fp}
A pair $(S^*,T)$ with $S^*\in\F$ and $T\ge 0$ is a surplus fixed point if
\begin{equation}
\label{eq:FP}
\max_{S\in\F}\sum_{i\in S}\E[\Tr(X_i,T)]
=\sum_{i\in S^*}\E[\Tr(X_i,T)]
=kT.
\end{equation}
\end{definition}

The map $g(T)=\max_{S\in\F}\sum_{i\in S}\E[\Tr(X_i,T)]$ is continuous and nonincreasing, $g(0)\ge 0$, and $g(T)=0$ for all large $T$, so $g(T)-kT$ has a root.

\begin{lemma}[Deterministic sandwich]
\label{lem:det-sand}
For $x\in\R_+^m$ and $T\ge 0$, $\Topk(x)\le kT+\sum_i\Tr(x_i,T)$.
If additionally $\sum_i\Tr(x_i,T)\ge kT$, then $\Topk(x)\ge kT$.
\end{lemma}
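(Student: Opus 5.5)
The plan is to prove both inequalities pathwise, treating $x\in\R_+^m$ as a fixed vector; independence plays no role here.

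For the upper bound, let $I$ be any set of at most $k$ indices that attains $\Topk(x)$, padding with zero coordinates if $m<k$. For every coordinate we have $x_i\le T+\Tr(x_i,T)$, since $x_i\le T$ gives $\Tr(x_i,T)=0$ and $x_i>T$ gives $x_i=T+(x_i-T)$. Summing this over $i\in I$ gives $\Topk(x)\le |I|\,T+\sum_{i\in I}\Tr(x_i,T)\le kT+\sum_i\Tr(x_i,T)$. The last step uses only $|I|\le k$ and the nonnegativity of the truncations.

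For the lower bound, assume $\sum_i\Tr(x_i,T)\ge kT$ and let $N=\#\{i:x_i>T\}$. I split into two cases.

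\emph{Case $N\ge k$.} Then $k$ coordinates each exceed $T$, so $\Topk(x)\ge kT$ directly.

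\emph{Case $N<k$.} Every coordinate with $x_i>T$ is then among the top $k$ (padding with zeros if needed). Hence $\Topk(x)\ge\sum_{i:x_i>T}x_i\ge\sum_i\Tr(x_i,T)\ge kT$, where the middle inequality holds because $x_i\ge x_i-T$ and coordinates with $x_i\le T$ contribute zero truncation.

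The case $T=0$ is covered by the same argument: both inequalities then reduce to $\Topk(x)\le\sum_i x_i$ and $\Topk(x)\ge 0$.

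I expect no real obstacle. The only point needing care is the case split on $N$ versus $k$ in the lower bound: the hypothesis $\sum_i\Tr(x_i,T)\ge kT$ is only needed when fewer than $k$ coordinates exceed $T$. Alternatively, the same conclusions follow from the layer-cake formula of Lemma~\ref{lem:layer}. For the upper bound, bound $\min(k,\#\{i:x_i>t\})$ by $k$ for $t<T$ and by $\#\{i:x_i>t\}$ for $t\ge T$. For the lower bound, use the two cases above.
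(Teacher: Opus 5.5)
Your proposal is correct and follows essentially the same route as the paper. For the upper bound you sum $x_i\le T+\Tr(x_i,T)$ over the top $k$ coordinates, and for the lower bound you split on whether at least $k$ coordinates exceed $T$. In the other case all coordinates above $T$ lie in the top $k$, which gives $\Topk(x)\ge\sum_i\Tr(x_i,T)\ge kT$; the paper records the slightly stronger $\Topk(x)\ge Y+Tm_+$ in that case, but this changes nothing.
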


\begin{lemma}[Decreasing tail]
\label{lem:tail}
Let $\mu:[0,\infty)\to[0,\infty)$ be nonincreasing and let $\int_T^\infty\mu=kT$.
Then $\int_0^\infty k\mu/(k+\mu)\,dt\ge kT$.
\end{lemma}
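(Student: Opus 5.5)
The plan is to compare both halves of the integral with the single value $a:=\mu(T)$, and to use only two facts: $\mu$ is nonincreasing, and $\varphi(m):=km/(k+m)$ is nondecreasing on $[0,\infty)$. We have $a<\infty$ because $\mu$ takes finite values. We split
\[
\int_0^\infty \varphi(\mu(t))\,dt=\int_0^T \varphi(\mu(t))\,dt+\int_T^\infty \varphi(\mu(t))\,dt
\]
and bound each piece from below in terms of $a$. No case analysis on whether $a$ exceeds $k$ should be needed.

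For the head $[0,T)$ we have $\mu(t)\ge a$ by monotonicity. Since $\varphi$ is nondecreasing, $\varphi(\mu(t))\ge\varphi(a)$ there, so the first integral is at least $T\,ka/(k+a)$.

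For the tail $[T,\infty)$ we write $\varphi(\mu)=\mu\cdot k/(k+\mu)$. On this range $\mu(t)\le a$, so the factor satisfies $k/(k+\mu(t))\ge k/(k+a)$. Hence
\[
\int_T^\infty\varphi(\mu)\ge\frac{k}{k+a}\int_T^\infty\mu=\frac{k}{k+a}\cdot kT.
\]
This step uses the hypothesis $\int_T^\infty\mu=kT$.

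Adding the two bounds gives $kT\,(a+k)/(k+a)=kT$, which is the claim. The only point needing care is the degenerate case $a=0$, which the argument covers. Then the head bound is $0$, $\mu\equiv 0$ on $(T,\infty)$, and the tail bound reads $kT\cdot 1$. This forces $kT=0$, so the statement holds trivially. Measurability of $\varphi\circ\mu$ follows from monotonicity of $\mu$.

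I do not expect a real obstacle. The key choice is to use the pivot $a=\mu(T)$ in both directions: the head uses $\mu\ge a$, and the tail uses $\mu\le a$ through the multiplicative form $\mu\cdot k/(k+\mu)$. This lets the two factors $a/(k+a)$ and $k/(k+a)$ sum exactly to $1$.
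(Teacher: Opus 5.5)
Your proof is correct and is essentially the paper's argument. The paper rearranges the claim to $\int_0^T k\mu/(k+\mu)\,dt\ge\int_T^\infty \mu^2/(k+\mu)\,dt$ and bounds the two sides by $T\,k\alpha_-/(k+\alpha_-)$ and $kT\,\alpha_+/(k+\alpha_+)$ with $\alpha_-\ge\mu(T)\ge\alpha_+$, which is your head/tail comparison at the pivot $\mu(T)$ with the tail term moved to the other side (your single pivot $a=\mu(T)$ just avoids the inf/sup bookkeeping).
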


\begin{lemma}[One set]
\label{lem:one-set}
If $X_1,\dots,X_m$ are independent and nonnegative and $\sum_i\E[\Tr(X_i,T)]=kT$, then $\E[\Topk(X)]\ge kT$.
The same lower bound holds if the surplus is at least $kT$: there exists $T'\ge T$ at which equality holds for this set, hence $\E[\Topk]\ge kT'\ge kT$.
\end{lemma}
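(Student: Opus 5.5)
The plan is to combine the layer-cake identity (Lemma~\ref{lem:layer}) with the Poisson-binomial truncation bound (Lemma~\ref{lem:pb}) and then invoke the deterministic Lemma~\ref{lem:tail}. For $t\ge 0$ let $N(t)=\#\{i:X_i>t\}$, which is Poisson-binomial with mean
\[
\mu(t)=\sum_i\Pr[X_i>t].
\]
The function $\mu$ is nonincreasing in $t$ and nonnegative. By Lemma~\ref{lem:layer} and Tonelli (the integrand is nonnegative),
\[
\E[\Topk(X)]=\int_0^\infty\E[\min(k,N(t))]\,dt .
\]
Lemma~\ref{lem:pb} then gives the pointwise bound $\E[\min(k,N(t))]\ge k\mu(t)/(k+\mu(t))$. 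Independence of the $X_i$ enters only here.

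Next I match the hypothesis of Lemma~\ref{lem:tail}. The identity $\E[\Tr(X_i,T)]=\int_T^\infty\Pr[X_i>t]\,dt$ holds for nonnegative variables. Summing over $i$ yields
\[
\int_T^\infty\mu(t)\,dt=\sum_i\E[\Tr(X_i,T)]=kT.
\]
Lemma~\ref{lem:tail} then gives $\int_0^\infty k\mu/(k+\mu)\,dt\ge kT$, and chaining with the previous display proves $\E[\Topk(X)]\ge kT$. The degenerate case $T=0$ is trivial, since the conclusion reads $\E[\Topk]\ge 0$.

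For the second statement, suppose instead that $\sum_i\E[\Tr(X_i,T)]\ge kT$. Define
\[
\phi(s)=\sum_i\E[\Tr(X_i,s)]-ks .
\]
Each $\E[\Tr(X_i,s)]$ is continuous and nonincreasing in $s$: it is $1$-Lipschitz because $\Tr(x,\cdot)$ is, and finite because the laws are discrete with finitely many atoms. Hence $\phi$ is continuous and strictly decreasing. It satisfies $\phi(T)\ge 0$, and $\phi(s)=-ks<0$ once $s$ exceeds the largest atom. By the intermediate value theorem there is $T'\ge T$ with $\phi(T')=0$. Applying the first part at $T'$ gives $\E[\Topk]\ge kT'\ge kT$.

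I expect no step to be a real obstacle, since all the analytic content sits in Lemmas~\ref{lem:pb} and~\ref{lem:tail}. Two points need care: the interchange of expectation and integral, and the continuity of the truncated means. Both are immediate for finitely supported nonnegative laws.
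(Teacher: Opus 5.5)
Your proposal is correct and takes essentially the same route as the paper: layer cake with Tonelli, the Poisson-binomial bound of Lemma~\ref{lem:pb}, the identity $\int_T^\infty\mu(t)\,dt=\sum_i\E[\Tr(X_i,T)]$ to invoke Lemma~\ref{lem:tail}, and an intermediate-value argument on $s\mapsto\sum_i\E[\Tr(X_i,s)]-ks$ for the inequality case. Your appeal to finitely many atoms for continuity is stronger than needed (finite means suffice), but it fits the paper's discrete model.
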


\begin{theorem}
\label{thm:surplus}
At a surplus fixed point $(S^*,T)$,
\[
kT\le\E[\Topk(S^*)]\le\OPT\le 2kT.
\]
In particular $S^*$ is a $2$-approximation.
\end{theorem}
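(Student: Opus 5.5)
The chain $kT\le\E[\Topk(S^*)]\le\OPT\le 2kT$ has three links, and I will establish them left to right, using only the lemmas already stated in this section.

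\emph{Leftmost inequality.} By Definition~\ref{def:fp}, the set $S^*$ satisfies
\[
\sum_{i\in S^*}\E[\Tr(X_i,T)]=kT .
\]
The variables $(X_i)_{i\in S^*}$ are independent and nonnegative. Lemma~\ref{lem:one-set} therefore applies directly and gives $\E[\Topk(S^*)]\ge kT$. This is the only place where independence is used. Inside Lemma~\ref{lem:one-set} the work is done by the layer-cake identity (Lemma~\ref{lem:layer}), the bound $\E[\min(k,N(t))]\ge k\mu(t)/(k+\mu(t))$ from Lemma~\ref{lem:pb}, and Lemma~\ref{lem:tail} applied to the nonincreasing function $\mu(t)=\sum_{i\in S^*}\Pr[X_i>t]$. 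For that last step I need $\int_T^\infty\mu=\sum_i\E[\Tr(X_i,T)]=kT$, which is exactly the fixed-point equation.

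\emph{Middle inequality.} $\E[\Topk(S^*)]\le\OPT$ holds simply because $S^*\in\F$.

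\emph{Rightmost inequality.} Fix an arbitrary $S\in\F$ and apply the pathwise bound of Lemma~\ref{lem:det-sand} to the realization $(X_i)_{i\in S}$:
\[
\Topk(X_S)\le kT+\sum_{i\in S}\Tr(X_i,T).
\]
Taking expectations and using linearity (no independence is needed here), then the maximality half of the fixed-point equation~\eqref{eq:FP}, gives
\[
\E[\Topk(S)]\le kT+\sum_{i\in S}\E[\Tr(X_i,T)]\le kT+\max_{S'\in\F}\sum_{i\in S'}\E[\Tr(X_i,T)]=2kT.
\]
Taking the maximum over $S\in\F$ yields $\OPT\le 2kT$.

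Combining the three links gives $\E[\Topk(S^*)]\ge kT\ge\OPT/2$, so $S^*$ is a $2$-approximation.

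The only substantive step is the lower bound, and it rests entirely on Lemma~\ref{lem:one-set}. Since that lemma may be assumed, I do not expect an obstacle. The one check I would make is that both halves of~\eqref{eq:FP} are genuinely used: the equality $\sum_{i\in S^*}\E[\Tr(X_i,T)]=kT$ drives the lower bound, and the maximality over $\F$ drives the upper bound.
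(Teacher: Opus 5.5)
Your proof is correct and follows the paper's own argument: the upper bound $\OPT\le 2kT$ comes from the pathwise sandwich of Lemma~\ref{lem:det-sand} combined with the maximality half of~\eqref{eq:FP}, and the lower bound comes from Lemma~\ref{lem:one-set} applied to $S^*$ using the equality half.
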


For $k=1$ the statement is the truncation sandwich of~\citet{liu2017}.

For two-point laws, maximizing surplus at a fixed $T$ is max-sum with weights $p_i\Tr(a_i,T)$.
A binary search on $T$ finds a fixed point to any $\eps$-accuracy; the PTAS only needs some $W$ with $W\le\OPT\le 2W$ (take $W=kT$).

The $2$-approximation of Theorem~\ref{thm:surplus} assumes an exact surplus maximizer.
If max-sum is available only to within $\alpha\le 1$, the decreasing search of~\citet{kleinberg2000,guha2009} used for $\E[\max]$ in~\citet[Thm.~3.9]{liu2017} applies verbatim, with threshold $kT$ in place of $T$.
The ratio does not depend on $k$.
If the family instead has query-weight exact-sum, Proposition~\ref{prop:es-to-ms} yields a max-sum FPTAS and the same search applies.

\begin{algorithm}[t]
\caption{Max-sum transfer for Top-$k$ sum}
\label{alg:transfer}
\begin{algorithmic}[1]
\Require discrete laws, family $\F$, integer $k\ge 1$, $\eps\in(0,1/2)$, $\alpha$-approximate max-sum oracle $\mathcal{A}$
\State $T\leftarrow \max_i\max\supp(X_i)$
\State $S\leftarrow \mathcal{A}\bigl(\E[\Tr(X_1,T)],\dots,\E[\Tr(X_n,T)],\F\bigr)$
\While{$\E[\Tr(S,T)]\le kT$ and $T>0$}
  \State $T\leftarrow T/(1+\eps)$
  \State $S\leftarrow \mathcal{A}\bigl(\E[\Tr(X_1,T)],\dots,\E[\Tr(X_n,T)],\F\bigr)$
\EndWhile
\State \Return $S$
\end{algorithmic}
\end{algorithm}

When $\alpha=1$ the ratio is $1/(2(1+\eps))$, which matches Theorem~\ref{thm:surplus} up to the $(1+\eps)$ slack of the geometric search.

\begin{corollary}[Cuts]
\label{cor:cuts}
If $\F$ is the family of cuts $\{\delta(U)\}$ and the edge laws are independent,
Goemans--Williamson~\citep{goemans1995} ($\alpha\approx 0.878$) yields ratio
\[
\frac{\alpha}{(1+\alpha)(1+\eps)}\approx \frac{0.878}{1.878\,(1+\eps)}.
\]
\end{corollary}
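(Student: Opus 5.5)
The corollary follows by combining part~(ii) of Theorem~\ref{thm:transfer} with a max-sum approximation for cuts. The cut family $\F=\{\delta(U):U\subseteq V\}$ is a membership family on the ground set of edges. Max-sum on $\F$ with nonnegative weights $w_e$ is exactly weighted Max-Cut. Goemans--Williamson is a randomized $\alpha_{GW}\approx 0.878$-approximation. It can be derandomized, or used in expectation and repeated with the best cut kept, so that each oracle call returns a cut of weight at least $\alpha\max_{S\in\F}w(S)$ with any $\alpha<\alpha_{GW}$ arbitrarily close. The weights fed to the oracle in Algorithm~\ref{alg:transfer} are the truncated means $\E[\Tr(X_e,T)]\ge 0$, so the nonnegativity hypothesis of Definition~\ref{def:max-sum} holds.

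Once this oracle is in place, the bound follows directly from Theorem~\ref{thm:transfer}(ii). We take the edge laws to be independent, as the corollary assumes, because Theorem~\ref{thm:transfer}(ii) uses independence for the lower bound on the output set via Lemma~\ref{lem:one-set}. The remaining ingredient is the guarantee that the algorithm terminates after $O(\eps^{-1}\log(\mathrm{range}))$ oracle calls, each polynomial (an SDP solve plus rounding). We then get
\[
\E[\Topk(S)]\ge\frac{\alpha}{(1+\alpha)(1+\eps)}\OPT,
\]
and substituting $\alpha\approx0.878$ gives the stated ratio $0.878/(1.878(1+\eps))$.

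The main point to check is that an approximate oracle is enough, which is where the work of Theorem~\ref{thm:transfer}(ii) sits. I would briefly recall why. At the halting height $T_0$, the returned set has surplus greater than $kT_0$, so Lemma~\ref{lem:one-set} gives $\E[\Topk(S)]\ge kT_0$. At the previous height $T_0(1+\eps)$, the oracle set had surplus at most $kT_0(1+\eps)$. The true maximum surplus is therefore at most $kT_0(1+\eps)/\alpha$. Lemma~\ref{lem:det-sand} then bounds $\OPT\le kT_0(1+\eps)(1+1/\alpha)$. If the algorithm halts at the initial $T$, the surplus there is zero, so this case cannot occur unless all variables are zero.

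The only subtlety is randomization. If Goemans--Williamson is used only in expectation, the halting test can fail, so I would either invoke the known derandomization or apply a Markov-type repetition and absorb the loss into a slightly smaller $\alpha$. Either way the approximately-equal sign in the statement is preserved.
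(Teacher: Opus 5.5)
Your proposal is correct and follows the paper's route: the corollary is Theorem~\ref{thm:transfer}(ii) with the Goemans--Williamson max-cut algorithm as the $\alpha$-approximate max-sum oracle on the nonnegative truncated-mean weights, and independence is used only for the lower bound via Lemma~\ref{lem:one-set}. Your two extra remarks are correct and are details the paper glosses over. First, the upper bound at $T_0(1+\eps)$ needs the randomized oracle to succeed at every height, which derandomization or repetition with a union bound supplies. Second, the loop cannot halt at the initial $T$, because every truncated mean is zero there.
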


The same family has no PTAS (Theorem~\ref{thm:noptas}).

Further instantiations are collected in Section~\ref{sec:inst}.

The minimization problem $\min_{S\in\F}\E[\Topk(S)]$ is the same comparison with the search reversed, as for $\E[\max]$ in~\citet{liu2017}.
On a downward-closed family the minimum is the empty set; the interesting instances are covering-type.

\begin{algorithm}[t]
\caption{Min-sum transfer for Top-$k$ sum}
\label{alg:min-transfer}
\begin{algorithmic}[1]
\Require discrete laws, family $\F$, integer $k\ge 1$, $\eps\in(0,1/2)$, $\alpha$-approximate min-sum oracle $\mathcal{A}$ ($\alpha\ge 1$)
\State $T\leftarrow k^{-1}\min\{\E[X_i]:\E[X_i]>0\}$ (or halt if every law is identically zero)
\State $S\leftarrow \mathcal{A}\bigl(\E[\Tr(X_1,T)],\dots,\E[\Tr(X_n,T)],\F\bigr)$
\While{$\E[\Tr(S,T)]\ge \alpha kT$ and $T\le\max_i\max\supp(X_i)$}
  \State $T\leftarrow T(1+\eps)$
  \State $S\leftarrow \mathcal{A}\bigl(\E[\Tr(X_1,T)],\dots,\E[\Tr(X_n,T)],\F\bigr)$
\EndWhile
\State \Return $S$
\end{algorithmic}
\end{algorithm}

\begin{theorem}[Constant factor for minimization]
\label{thm:min-transfer}
Let $k\ge 1$ and $\eps\in(0,1/2)$.
If min-sum on $\F$ admits an $\alpha$-approximation ($\alpha\ge 1$), Algorithm~\ref{alg:min-transfer} returns $S\in\F$ with
\[
\E[\Topk(S)]
\le
(1+\alpha)(1+\eps)\,\OPT_{\min},
\]
where $\OPT_{\min}=\min_{S'\in\F}\E[\Topk(S')]$, after $O\bigl(\eps^{-1}\log(\mathrm{range})\bigr)$ oracle calls.
In particular $W:=\E[\Topk(S)]$ satisfies $\OPT_{\min}\le W\le(1+\alpha)(1+\eps)\,\OPT_{\min}$.
\end{theorem}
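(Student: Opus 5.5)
The plan mirrors the decreasing search of Theorem~\ref{thm:transfer}(ii), with the roles of the two sides of the sandwich swapped. Let $T_0$ be the threshold at which Algorithm~\ref{alg:min-transfer} halts, and let $S$ be the set the oracle returned there. The upper bound on $S$ is purely pathwise. By Lemma~\ref{lem:det-sand}, $\Topk(X_S)\le kT_0+\sum_{i\in S}\Tr(X_i,T_0)$. Taking expectations and using the halting condition $\E[\Tr(S,T_0)]<\alpha kT_0$ gives
\[
\E[\Topk(S)]<(1+\alpha)kT_0 .
\]
No independence is needed for this step.

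The lower bound on $\OPT_{\min}$ uses the previous iterate $T'=T_0/(1+\eps)$. Suppose the loop ran at least once. At $T'$ the oracle set $S'$ had surplus at least $\alpha kT'$. The oracle is an $\alpha$-approximation for min-sum, so $\alpha\min_{R\in\F}\sum_{i\in R}\E[\Tr(X_i,T')]\ge \E[\Tr(S',T')]\ge\alpha kT'$. Hence every $R\in\F$ has surplus at least $kT'$ at $T'$. Lemma~\ref{lem:one-set}, in its ``surplus at least $kT$'' form, then gives $\E[\Topk(R)]\ge kT'$ for every $R\in\F$; this is the only place independence enters. So $\OPT_{\min}\ge kT_0/(1+\eps)$, and combining with the first paragraph yields $\E[\Topk(S)]\le(1+\alpha)(1+\eps)\OPT_{\min}$.

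The step I expect to need care is the boundary case where the loop halts at the initial $T_{\mathrm{init}}=k^{-1}\min\{\E[X_i]:\E[X_i]>0\}$, so that no previous iterate exists. There are two subcases.
\begin{itemize}
\item Some feasible $R$ consists only of identically-zero laws. Then $\OPT_{\min}=0$, and I will add a preliminary min-sum call with weights $\E[X_i]$. An $\alpha$-approximate oracle returns a set of weight $0$ exactly when the minimum is $0$, and then $\E[\Topk]=0$, so we output that set.
\item Otherwise every $R\in\F$ contains some $i$ with $\E[X_i]>0$. Then $\E[\Topk(R)]\ge\E[X_i]\ge kT_{\mathrm{init}}=kT_0$, which is even stronger than the bound needed.
\end{itemize}

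Termination follows because once $T$ exceeds every atom, all weights $\E[\Tr(X_i,T)]$ vanish, so the halting test fires. The number of geometric steps from $T_{\mathrm{init}}$ to $\max_i\max\supp(X_i)$ is $O(\eps^{-1}\log(\mathrm{range}))$, one oracle call each. Finally, the claim $\OPT_{\min}\le W\le(1+\alpha)(1+\eps)\OPT_{\min}$ for $W:=\E[\Topk(S)]$ is immediate: $S\in\F$ gives the left inequality, and the ratio just proved gives the right.
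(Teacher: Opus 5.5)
Your proof is correct and follows the paper's argument: Lemma~\ref{lem:det-sand} at the halting height gives $\E[\Topk(S)]<(1+\alpha)kT_0$. At the previous height the $\alpha$-approximation forces every feasible surplus to be at least $kT_0/(1+\eps)$, so Lemma~\ref{lem:one-set} bounds $\OPT_{\min}$ from below (the paper phrases this step as a contradiction, which is the same thing). Your treatment of the case where the loop never runs is more careful than the paper's, which sets $T_1=T_0$ even though the while-condition did not hold there. Your bound $\E[\Topk(R)]\ge\max_{i\in R}\E[X_i]\ge kT_{\mathrm{init}}$ is the correct justification. Your extra zero-weight min-sum call is genuinely needed when $\OPT_{\min}=0$ and $k=1$: an item with $X_i\equiv T_{\mathrm{init}}$ has zero surplus at $T_{\mathrm{init}}$, so the unmodified algorithm may return it.
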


Every fixed-$d$ covering family admits a PTAS for deterministic min-sum (the covering analogue of~\citet{frieze1984}; one-dimensional covering knapsack has an FPTAS), hence a constant-factor min-sum oracle, so the theorem applies to the whole class.
The same rounding as Proposition~\ref{prop:es-to-ms}, scanning the smallest feasible target, is a min-sum FPTAS.
Thus every query-weight family already has an exact enough scale $W$ for the decoder of Corollary~\ref{cor:min-allk}.
Proofs are in Appendix~\ref{app:surplus}.

\section{Reducing the support}
\label{sec:disc}

Let $W>0$ satisfy $\E[\Topk(X)]\le cW$ with $c\eps<\min\{1/4,\,k/2\}$ (Theorem~\ref{thm:surplus} gives $c=2$ and $W=kT$).
Set $M:=W/\eps$.

\begin{remark}[Naive cap]
\label{rem:naive}
If $Y_i=\min\{X_i,M\}$ and $X_1=B(H,p)$ with $pH=\Theta(W)$ and $H\gg M$, then $\E[\Topk(X)]=\Theta(W)$ while $\E[\Topk(Y)]=\Theta(\eps W)$.
The tail must be folded, not deleted.
\end{remark}

\begin{definition}[Mean-preserving fold]
\label{def:fold}
Write $p_i(t)=\Pr[X_i>t]$ and $\tau_i:=\int_M^\infty p_i(t)\,dt=\E[(X_i-M)_+]$.
The folded survival is
\[
\bar p_i(t)
:=
\begin{cases}
0 & t\ge M,\\
\min\bigl(1,\,p_i(t)+\tau_i/M\bigr) & t<M.
\end{cases}
\]
Let $\bar X_i$ be any nonnegative random variable with this survival function.
Then $\supp(\bar X_i)\subseteq[0,M]$.
\end{definition}

Write $N(t)$, $\bar N(t)$, and $\mu(t)$ for the occupancy counts and mean of $(X_i)_i$ and $(\bar X_i)_i$.
For $t\ge M$ one has $\bar N(t)=0$.
For $t<M$, $\bar N(t)$ stochastically dominates $N(t)$.
If the minimum in the definition does not bind, then $\int_0^M(\bar p_i-p_i)\,dt=\tau_i$.

\begin{lemma}[Tail mass]
\label{lem:tailmass}
Let $\Sigma:=\int_M^\infty\mu(t)\,dt$.
Then $\Sigma\le 2cW$.
\end{lemma}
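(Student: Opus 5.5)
We need to show $\Sigma=\int_M^\infty\mu(t)\,dt\le 2cW$, where $\mu(t)=\sum_i p_i(t)$ and $M=W/\eps$. The plan is to compare $\mu$ on $[M,\infty)$ with the occupancy functional via Lemma~\ref{lem:pb} and the layer cake of Lemma~\ref{lem:layer}. Taking expectations in \eqref{eq:layer} gives
\[
\E[\Topk(X)]=\int_0^\infty f(t)\,dt,\qquad f(t):=\E[\min(k,N(t))],
\]
and Lemma~\ref{lem:pb} gives $f(t)\ge k\mu(t)/(k+\mu(t))$.

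First we show that $\mu$ is small on $[M,\infty)$. Since $\mu$ is nonincreasing, for every $t\ge M$ we have
\[
M\cdot\frac{k\mu(t)}{k+\mu(t)}\le\int_0^M f\le \E[\Topk(X)]\le cW,
\]
because $k\mu/(k+\mu)$ is nondecreasing in $\mu$ and $f(s)\ge k\mu(t)/(k+\mu(t))$ for all $s\le t$. Hence
\[
\frac{k\mu(t)}{k+\mu(t)}\le \frac{cW}{M}=c\eps<\frac k2,
\]
using the standing assumption $c\eps<k/2$. The inequality $k\mu/(k+\mu)<k/2$ forces $\mu(t)<k$. Substituting this back, $k+\mu(t)<2k$, so $k\mu(t)/(k+\mu(t))\ge \mu(t)/2$ for all $t\ge M$.

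Next we integrate over the tail:
\[
\Sigma=\int_M^\infty \mu(t)\,dt\le 2\int_M^\infty \frac{k\mu(t)}{k+\mu(t)}\,dt\le 2\int_M^\infty f(t)\,dt\le 2\,\E[\Topk(X)]\le 2cW.
\]

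The only delicate step is establishing $\mu(t)<k$ on $[M,\infty)$ before the factor-$2$ linearization $k\mu/(k+\mu)\ge\mu/2$ can be applied. This is handled by the monotonicity argument above together with the hypothesis $c\eps<k/2$. If $\mu$ were unbounded near $M$, the first display shows it is still finite, because $cW/M<k$; so nothing further is needed.
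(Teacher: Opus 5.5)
Your proof is correct. Its first half matches the paper's: both bound $k\mu(M)/(k+\mu(M))$ by $cW/M=c\eps$. The paper does this via the pathwise inequality $\Topk\ge M\min(k,N(M))$; you do it via $\int_0^M f\ge M\,f(M)$ together with Lemma~\ref{lem:pb}, which is the same idea.

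The second half takes a different route. The paper first sharpens the tail to $\mu(t)\le c\eps k/(k-c\eps)\le 2c\eps$ for $t\ge M$. It then compares $\Sigma$ with $\E[\Tr(\max_iX_i,M)]\le cW$, using the inclusion--exclusion estimate $\Pr[\max_iX_i>t]\ge\mu(t)-\mu(t)^2\ge(1-2c\eps)\mu(t)$. This gives $\Sigma\le cW/(1-2c\eps)$, and the paper then needs the extra standing hypothesis $c\eps\le 1/4$ to reach $2cW$.

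You only need $\mu<k$ on the tail. You apply Lemma~\ref{lem:pb} a second time on $[M,\infty)$ and linearize $k\mu/(k+\mu)\ge\mu/2$. This uses only $c\eps<k/2$ and never involves the maximum or inclusion--exclusion. The cost is that, as written, you get the factor $2$ directly rather than a $1+O(c\eps)$ factor.

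That cost is easy to remove. Your route is essentially the tail estimate the paper itself uses later in the proof of Lemma~\ref{lem:fold}. Your first display already rearranges to $\mu(t)<2c\eps$ for $t\ge M$. Feeding this into $k\mu/(k+\mu)\ge \mu\,k/(k+2c\eps)$ gives $\Sigma\le(1+2c\eps/k)\,cW$, which is at least as sharp as the paper's $cW/(1-2c\eps)$.

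Your closing remark about $\mu$ being unbounded near $M$ is unnecessary, since $\mu(t)\le n$ always.
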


\begin{lemma}[Fold]
\label{lem:fold}
$\bigl\lvert\E[\Topk(X)]-\E[\Topk(\bar X)]\bigr\rvert\le O(c^2\eps)\,W$.
\end{lemma}

\begin{lemma}[Grid]
\label{lem:grid}
If $\supp(X_i)\subseteq[0,M]$, the rounding $\tilde X_i=\delta\lfloor X_i/\delta\rfloor$ with $\delta=\eps W$ satisfies $0\le\Topk(X)-\Topk(\tilde X)<k\eps W$ pathwise, and each $\tilde X_i$ takes values in a set of size $h=1/\eps^2$.
\end{lemma}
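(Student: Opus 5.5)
The statement to prove is Lemma~\ref{lem:grid}. The plan is to argue pathwise, coordinate by coordinate, and then use a monotonicity property of $\Topk$ that holds under coordinatewise perturbations.

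\textbf{Step 1: per-coordinate rounding error.} For every realization $x\in[0,M]^m$, set $\tilde x_i=\delta\lfloor x_i/\delta\rfloor$. By definition of the floor,
\[
0\le x_i-\tilde x_i<\delta \qquad\text{for every } i.
\]

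\textbf{Step 2: the lower bound.} $\Topk$ is coordinatewise nondecreasing. The cleanest way to see this is from the layer-cake formula of Lemma~\ref{lem:layer}: each count $\#\{i:x_i>t\}$ can only grow when coordinates grow. Since $\tilde x\le x$ coordinatewise, this gives $\Topk(\tilde x)\le\Topk(x)$.

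\textbf{Step 3: the upper bound.} Let $I$ be an index set of size at most $k$ attaining $\Topk(x)$. Since $\Topk(\tilde x)$ is the largest sum of at most $k$ coordinates of $\tilde x$, it is at least $\sum_{i\in I}\tilde x_i$. Therefore
\[
\Topk(x)-\Topk(\tilde x)\le\sum_{i\in I}(x_i-\tilde x_i)<|I|\,\delta\le k\delta=k\eps W.
\]
Strictness needs care when $I$ is empty, i.e., when $x=0$. In that case both sides are $0$, so the stated strict inequality should be read as ``$\le$''. Equivalently, note the difference is $0<k\eps W$ whenever $W>0$, which holds by assumption. So strictness is fine.

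\textbf{Step 4: size of the support.} The values of $\tilde X_i$ lie in $\{0,\delta,\dots,\lfloor M/\delta\rfloor\delta\}$, and $M/\delta=(W/\eps)/(\eps W)=1/\eps^2$. The top value $M$ itself is attained only if $X_i=M$. This gives the grid $\{0,\delta,\dots,h\delta\}$ with $h=1/\eps^2$, which is the form used in Lemma~\ref{lem:layer}.

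There are two small points to watch. First, the value set has $h+1$ elements, not $h$, and I will absorb this by reading ``size $h$'' as $h$ nonzero levels. Alternatively, one can note that $\Pr[X_i=M]$ can be merged downward at cost $\delta$ per coordinate, which is already within the $k\delta$ budget. Second, $1/\eps^2$ may fail to be an integer; rounding $h$ up to an integer (or choosing $\eps$ with $1/\eps^2$ integral) handles this. I expect neither to be a real obstacle. The only substantive step is the monotonicity-plus-top-$k$-index argument in Step~3.
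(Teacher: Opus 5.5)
Your proposal is correct and follows essentially the paper's route: the paper notes $0\le X_i-\tilde X_i<\delta$ pathwise, concludes that the $k$ largest coordinates lose less than $k\delta$ in total, and counts $M/\delta=1/\eps^2$ grid points. Your Steps~2--3 spell out the monotonicity and the top-$k$ comparison that the paper leaves implicit. Your off-by-one remark is also right: the grid $\{0,\delta,\dots,h\delta\}$ has $h+1$ points, which are the $h$ layers of Lemma~\ref{lem:layer}. Avoid the alternative of merging the atom at $M$ downward, though. That costs exactly $\delta$ on those coordinates and can break the strict inequality, so the ``$h$ nonzero levels'' reading is the one to keep.
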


\begin{theorem}
\label{thm:disc}
After the fold of Definition~\ref{def:fold} and the grid of Lemma~\ref{lem:grid},
\[
\bigl\lvert\E[\Topk(X)]-\E[\Topk(\tilde X)]\bigr\rvert
\le O\bigl((c^2+k)\eps\bigr)W.
\]
Every $\tilde X_i$ is supported on $\{0,\delta,\dots,M\}$ with $h=1/\eps^2$ atoms.
\end{theorem}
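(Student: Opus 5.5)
The proof will simply compose Lemma~\ref{lem:fold} and Lemma~\ref{lem:grid} via the triangle inequality. Write $\bar X$ for the folded variables and $\tilde X_i=\delta\lfloor\bar X_i/\delta\rfloor$ for the grid rounding applied to $\bar X$. Then
\[
\bigl\lvert\E[\Topk(X)]-\E[\Topk(\tilde X)]\bigr\rvert
\le
\bigl\lvert\E[\Topk(X)]-\E[\Topk(\bar X)]\bigr\rvert
+
\bigl\lvert\E[\Topk(\bar X)]-\E[\Topk(\tilde X)]\bigr\rvert .
\]
The first term is at most $O(c^2\eps)W$ by Lemma~\ref{lem:fold}; this uses the standing hypotheses $\E[\Topk(X)]\le cW$ and $c\eps<\min\{1/4,k/2\}$, which are exactly the hypotheses of that lemma. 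The second term is handled by Lemma~\ref{lem:grid}. By Definition~\ref{def:fold} we have $\supp(\bar X_i)\subseteq[0,M]$, so Lemma~\ref{lem:grid} applies to $\bar X$ and gives $0\le\Topk(\bar X)-\Topk(\tilde X)<k\eps W$ pathwise. Taking expectations bounds the second term by $k\eps W$. Adding the two bounds gives $O((c^2+k)\eps)W$.

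For the support claim, I will check the atom count explicitly. Since $\bar X_i\in[0,M]$ and $M=W/\eps=\delta/\eps^2$, the rounded value $\tilde X_i$ lies in $\{0,\delta,\dots,(1/\eps^2)\delta\}$. Strictly speaking this set has $1/\eps^2+1$ points, and $M$ itself is an atom only if $\bar X_i$ reaches $M$. The folded survival vanishes for $t\ge M$, so $\Pr[\bar X_i>M]=0$ but $\Pr[\bar X_i=M]$ may be positive. In the layer-cake formula of Lemma~\ref{lem:layer}, the layers $r=0,\dots,h-1$ with $h=1/\eps^2$ still suffice, because no value exceeds $h\delta$. I will state the count as $h$ layers, matching the paper's convention and taking $1/\eps^2$ to be an integer by shrinking $\eps$ if necessary, and note that the off-by-one is immaterial.

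I expect no real obstacle, since the substance lives in Lemma~\ref{lem:fold}. The only point to verify is that the grid is applied to the folded laws rather than the originals, so that the support hypothesis of Lemma~\ref{lem:grid} holds. The expectation of the pathwise bound is finite because every quantity is bounded by $kM$.
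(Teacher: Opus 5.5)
Your proposal is correct and is essentially the paper's own argument: the paper proves Theorem~\ref{thm:disc} in one line by combining Lemma~\ref{lem:fold} (fold error $O(c^2\eps)W$) with Lemma~\ref{lem:grid} applied to the folded laws (pathwise grid error below $k\eps W$), which is exactly your triangle-inequality decomposition. Your side remark is also accurate and harmless: $\{0,\delta,\dots,M\}$ has $1/\eps^2+1$ points, yet only the $h=1/\eps^2$ layers $r=0,\dots,h-1$ of Lemma~\ref{lem:layer} are needed.
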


Relative to $\OPT\ge W$ the error is $O((c^2+k)\eps)$; it becomes $O(\eps)$ after the substitution $\eps\leftarrow\Theta(\eps/(c^2+k))$ of Corollary~\ref{cor:ptas}.

\begin{lemma}[Fold-back]
\label{lem:fold-back}
Assume Definition~\ref{def:fold} with $M=W/\eps$ and $\eps\in(0,1/20)$.
If $\E[\Topk(\bar X)]\le U$ and $U\le(1+4\eps)W$, then
\[
\E[\Topk(X)]
\le
\frac{U}{1-4\eps}
\le
(1+5\eps)\,U.
\]
\end{lemma}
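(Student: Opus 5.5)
The plan is to compare the two layer-cake integrals of Lemma~\ref{lem:layer} directly. Write $f_t(p)=\E[\min(k,N_p(t))]$, and split the original objective as
\[
\E[\Topk(X)]=A+B,\qquad A:=\int_0^M f(p(t))\,dt,\qquad B:=\int_M^\infty f(p(t))\,dt .
\]
By Lemma~\ref{lem:f-lip}, $f\le\sum_i p_i$, so $B\le\int_M^\infty\mu(t)\,dt=\Sigma=\sum_i\tau_i$. The folded objective is $\E[\Topk(\bar X)]=\int_0^M f(\bar p(t))\,dt\le U$. The goal is therefore to show that the folded integral recovers all of $A$ and almost all of $\Sigma$:
\[
U\;\ge\;A+(1-O(\eps))\,\Sigma .
\]
This gives $\E[\Topk(X)]\le A+\Sigma\le U/(1-O(\eps))$, and the constant works out to $4\eps$.

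For the gain on $[0,M)$ I will use concavity of $f$ from Lemma~\ref{lem:f-lip}. Since $\bar p(t)\ge p(t)$ coordinatewise, evaluating the gradient at the upper endpoint gives
\[
f(\bar p)-f(p)\;\ge\;\sum_i(\bar p_i-p_i)\,\Pr[\bar N^{(-i)}(t)<k]\;\ge\;\Pr[\bar N(t)<k]\sum_i(\bar p_i-p_i).
\]
Suppose first that the cap in Definition~\ref{def:fold} never binds. Then $\bar p_i-p_i=\tau_i/M$, and integrating over $[0,M)$ gives
\[
U\ge A+\frac{\Sigma}{M}\int_0^M\Pr[\bar N(t)<k]\,dt .
\]
To lower-bound the last integral, note that $\int_0^\infty \mathbf 1\{\bar N(t)\ge k\}\,dt$ is the $k$-th largest coordinate of $\bar X$, which is at most $\Topk(\bar X)/k$. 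Hence $\int_0^M\Pr[\bar N(t)\ge k]\,dt\le U/k$, and so $\int_0^M\Pr[\bar N<k]\,dt\ge M-U/k$. With $M=W/\eps$ and $U\le(1+4\eps)W$ we get $U/(kM)\le(1+4\eps)\eps\le 2\eps$. Therefore $U\ge A+(1-2\eps)\Sigma$, and
\[
\E[\Topk(X)]\le A+\Sigma\le \frac{U}{1-2\eps}.
\]

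The main obstacle is the cap $\min(1,\cdot)$. At heights $t$ where $p_i(t)+\tau_i/M>1$, coordinate $i$ only gains $1-p_i(t)<\tau_i/M$. My first attempt is to bound this lost mass, $\sum_i\int_0^M\bigl(p_i(t)+\tau_i/M-1\bigr)_+\,dt$, by $O(\eps)\,\Sigma$ plus $O(\eps)\,U$; the remaining slack between $2\eps$ and $4\eps$ is reserved for this term. The cap binds only where $p_i(t)\ge 1-\tau_i/M$, and there $\bar X_i>t$ almost surely. So on that $t$-range coordinate $i$ alone keeps $\bar N(t)\ge 1$, and the length of the range is at most $\E[\min(\bar X_i,M)]/(1-\tau_i/M)$.

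Two ingredients should control this. First, $\tau_i/M$ is small: if $\tau_i\ge M/2$, then Lemma~\ref{lem:pb} applied to the single variable $B(M,\tau_i/M)$, which $\bar X_i$ stochastically dominates, already forces $U\ge M/2\gg W$, contradicting $U\le(1+4\eps)W$. Second, $\sum_i\E[\min(\bar X_i,M)]$ is comparable to $U$ once $k$ copies are accounted for. Combining these, the lost mass is at most
\[
\sum_i\frac{\tau_i}{M}\cdot O\bigl(\E[\min(\bar X_i,M)]\bigr)\;\le\;\frac{O(\Sigma)}{M}\cdot O(U)\;=\;O(\eps)\,\Sigma .
\]
If that bookkeeping fails, the fallback is a direct coupling for each capped coordinate: replace the unused tail mass by the atom $M$ itself, which the folded law already places above every $t<M$.

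Finally, $\eps<1/20$ ensures $1/(1-4\eps)\le 1+5\eps$, which gives the second inequality of the statement.
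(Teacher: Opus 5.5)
Your route is the paper's route. You split off the tail $\int_M^\infty f(p(t))\,dt\le\Sigma$ and take the gradient at $\bar p$ to get $f(\bar p)-f(p)\ge\Pr[\bar N<k]\sum_i(\bar p_i-p_i)$. You control $\int_0^M\Pr[\bar N\ge k]\,dt$ by the $k$-th order statistic; your $U/k$ is sharper than the paper's $U$, harmlessly. Finally you charge the cap loss to the sets $C_i$ on which $\bar p_i\equiv 1$.

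The one genuine problem is in the step you flag as the main obstacle. Your ``second ingredient'', that $\sum_i\E[\min(\bar X_i,M)]$ is comparable to $U$, is false. Take $n$ deterministic coordinates equal to $W/k$, so $\tau_i=0$: then $U=W$, but the sum is $nW/k$. So that ingredient cannot be the reason that
\[
\sum_i\frac{\tau_i}{M}\,\lvert C_i\rvert\le\frac{\Sigma}{M}\cdot O(U),
\qquad
C_i:=\{t<M:\ p_i(t)+\tau_i/M>1\}.
\]
The coupling fallback is also left unspecified.

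The display needs only a bound on $\max_i\lvert C_i\rvert$, not on a sum, and your own observation supplies it. On $C_i$ one has $\bar p_i(t)=1$, so $\E[\min(k,\bar N(t))]\ge 1$ there. Hence $\lvert C_i\rvert\le\int_0^M\E[\min(k,\bar N(t))]\,dt=\E[\Topk(\bar X)]\le U$, or equivalently $\lvert C_i\rvert\le\E[\bar X_i]\le\E[\max_j\bar X_j]\le U$. This is exactly the paper's step.

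With it the argument closes. The loss at $t\in C_i$ is $p_i(t)+\tau_i/M-1\le\tau_i/M$, so the cap costs at most $(U/M)\Sigma\le(1+4\eps)\eps\,\Sigma$. Your order-statistic term is $U/(kM)\le(1+4\eps)\eps$. Together these give $U\ge A+\bigl(1-2(1+4\eps)\eps\bigr)\Sigma\ge A+(1-4\eps)\Sigma$, hence $\E[\Topk(X)]\le A+\Sigma\le U/(1-4\eps)$.

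The bound $\tau_i\le M/2$ and the factor $1/(1-\tau_i/M)$ are then unnecessary. Even if you keep that factor, the total deficit is $3(1+4\eps)\eps\le 4\eps$ for $\eps<1/20$.

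A minor precision: $f$ is multilinear and not concave on all of $[0,1]^n$. It is concave along the coordinatewise-increasing segment from $p(t)$ to $\bar p(t)$, because its mixed partials are nonpositive, and that is the only place you and the paper use it.
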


Proofs are in Appendix~\ref{app:disc}.

\section{Type-histogram signatures}
\label{sec:sig}

Assume every $X_i$ is already folded and gridded, and write $\delta=\eps W$, $h=1/\eps^2$.
By Lemma~\ref{lem:layer},
\[
\E[\Topk(S)]
=\delta\sum_{r=0}^{h-1}f\bigl(p_{\cdot,r}(S)\bigr),
\qquad
p_{i,r}:=\Pr[X_i>r\delta],
\qquad
f(p):=\E[\min(k,N(p))].
\]
Because $\delta h=W/\eps$, a per-layer error of $O(\eps^2)$ sums to $O(\eps)\,W$.

With $\Lambda$ from Lemma~\ref{lem:pb-chernoff}, set
\begin{align*}
\eta&:=\frac{\eps^2}{k\Lambda},&
\gamma&:=\frac{\eps^2}{n},&
\beta&:=\frac{\eps^2\eta}{2\Lambda}=\frac{\eps^4}{2k\Lambda^2}.
\end{align*}
Call $p$ \emph{tiny} if $0<p\le\eta$ and \emph{big} if $p>\eta$.
Partition $(\eta,1]$ into $B\le 1/\beta$ intervals $I_b$ of length at most $\beta$, and write $\hat p_b$ for the right endpoint of $I_b$.
If $\mu<\Lambda$ then the number of big coordinates is at most $\Lambda/\eta=k\Lambda^2/\eps^2$, so rounding them costs at most $(\Lambda/\eta)\beta=\eps^2/2$ in $\ell_1$.

\begin{definition}[Signature]
\label{def:sg}
For each $i$ and layer $r$,
\[
\tau_{i,r}:=\mathbf{1}_{p_{i,r}\le\eta}\Bigl\lfloor\frac{p_{i,r}}{\gamma}\Bigr\rfloor,
\qquad
H_{i,r,b}:=\mathbf{1}_{p_{i,r}\in I_b}.
\]
Set $\Sg(X_i)=\bigl((\tau_{i,r})_r,(H_{i,r,b})_{r,b}\bigr)$ and $\Sg(S)=\sum_{i\in S}\Sg(X_i)$.
Write $\sigma_r(S)=\sum_{i\in S}\tau_{i,r}$ and $n_{r,b}(S)=\sum_{i\in S}H_{i,r,b}$.
The proxy at layer $r$ is the independent sum
\[
\widehat N_r
=\sum_b\mathrm{Bin}(n_{r,b},\hat p_b)+\mathrm{Poisson}(\gamma\sigma_r).
\]
If the proxy mean $\gamma\sigma_r+\sum_b n_{r,b}\hat p_b$ is at least $\Lambda$, replace $\E[\min(k,\widehat N_r)]$ by $k$ (saturated layer).
Set $\mathrm{Val}(\Sg(S))=\delta\sum_r\E[\min(k,\widehat N_r)]$ with that convention.
\end{definition}

The map $S\mapsto\Sg(S)$ is additive.
Encoded in mixed radix, $\Sg(S)$ is an integer of magnitude $n^{O(h)}(k\Lambda/\eps)^{O(hB)}$ with $h=1/\eps^2$ and $B\le 2k\Lambda^2/\eps^4$, hence $hB=O(k\Lambda^2/\eps^6)$.

\begin{lemma}[Unsaturated layer]
\label{lem:layer-unsat}
Fix a layer with true mean $\mu<\Lambda$ and let $(\sigma,(n_b))$ be its signature.
Then $\lvert f(p)-\E[\min(k,\widehat N)]\rvert\le 4\eps^2$.
\end{lemma}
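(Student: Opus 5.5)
The plan is to prove the bound through a chain of three replacements, each charged to a lemma from Section~\ref{sec:prelim}. The chain runs from the true layer count $N(p)$ to the unsaturated proxy $\widehat N$. The saturation convention of Definition~\ref{def:sg} is handled separately at the end. Throughout, split the coordinates of the layer into big ($p_i>\eta$) and tiny ($p_i\le\eta$), and write $\mu_{\mathrm{tiny}}=\sum_{\text{tiny}}p_i\le\mu<\Lambda$.

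\emph{Step 1: round big coordinates.} Since every big coordinate exceeds $\eta$ and $\mu<\Lambda$, there are at most $\Lambda/\eta$ of them. Replace each big $p_i\in I_b$ by the endpoint $\hat p_b$, keeping the tiny coordinates as they are. This changes the parameter vector by at most $(\Lambda/\eta)\beta=\eps^2/2$ in $\ell_1$. By Lemma~\ref{lem:f-lip}, $f$ moves by at most $\eps^2/2$. The resulting count is $\sum_b\mathrm{Bin}(n_b,\hat p_b)$ plus the Poisson-binomial of the tiny coordinates.

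\emph{Step 2: Poissonize the tiny part.} Apply Lemma~\ref{lem:lecam}, taking $M$ to be the independent big binomial sum. The tiny Poisson-binomial is replaced by $\mathrm{Poisson}(\mu_{\mathrm{tiny}})$ at cost at most
\[
k\sum_{\text{tiny}}p_i^2\le k\eta\,\mu_{\mathrm{tiny}}<k\eta\Lambda=\eps^2 .
\]

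\emph{Step 3: discretize the Poisson mean.} Each tiny coordinate satisfies $0\le p_i-\gamma\lfloor p_i/\gamma\rfloor<\gamma$, and there are at most $n$ of them. Hence $\lvert\mu_{\mathrm{tiny}}-\gamma\sigma\rvert\le n\gamma=\eps^2$. By Lemma~\ref{lem:pois-mean}, again with the big sum as the independent $M$, the cost is at most $\eps^2$.

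After the three steps the count is exactly $\widehat N$. Summing the three costs gives
\[
\bigl\lvert f(p)-\E[\min(k,\widehat N)]\bigr\rvert\le\tfrac52\eps^2 .
\]

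\emph{Step 4: the saturation convention.} The only remaining subtlety, and the step I expect to need the most care, is that the proxy mean $\gamma\sigma+\sum_b n_b\hat p_b$ may reach $\Lambda$ even though $\mu<\Lambda$. In that case Definition~\ref{def:sg} reports $k$ instead of $\E[\min(k,\widehat N)]$, so I must also bound $k-\E[\min(k,\widehat N)]$.

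$\widehat N$ is a sum of independent Bernoullis and a Poisson. A Poisson is a limit of Poisson-binomials of the same mean, and the lower-tail Chernoff bound of Lemma~\ref{lem:pb-chernoff} passes to that limit. So the saturation conclusion of Lemma~\ref{lem:pb-chernoff} applies to $\widehat N$ whenever its mean is at least $\Lambda$, giving $k-\E[\min(k,\widehat N)]\le\eps^2$. Adding this to Steps 1--3 gives a total error of at most $\tfrac72\eps^2\le4\eps^2$ in the saturated case. In the unsaturated case the error is at most $\tfrac52\eps^2$.

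If the limiting argument for the Poisson component is deemed too informal, a fallback is available. Replace the Poisson by $m$ Bernoullis of mean $\gamma\sigma/m$. By Lemma~\ref{lem:lecam} this changes $\E[\min(k,\cdot)]$ by at most $k(\gamma\sigma)^2/m$, which tends to $0$ as $m\to\infty$. Then apply Lemma~\ref{lem:pb-chernoff} directly to the resulting Poisson-binomial.
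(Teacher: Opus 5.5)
Your proof is correct and follows the paper's argument. It uses the same three ingredients: Lemma~\ref{lem:f-lip} for rounding big coordinates to bucket endpoints, Lemma~\ref{lem:lecam} on the tiny part with the big sum as the independent summand, and Lemma~\ref{lem:pois-mean} for the $\gamma$-quantization of the tiny mass.

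The differences are minor. You round the big coordinates first, so Lemma~\ref{lem:f-lip} is applied to a pure Poisson-binomial rather than to one with a Poisson summand attached. You also tally $\tfrac52\eps^2$ where the paper states $4\eps^2$. Your Step~4 on the saturation convention is correct but goes beyond the lemma as stated; the paper treats that case separately in Lemma~\ref{lem:layer-sat}.
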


\begin{lemma}[Saturated layer]
\label{lem:layer-sat}
If a layer is marked saturated, then $f(p)\ge k-O(\eps^2)$.
Two saturated layers differ by $O(\eps^2)$.
\end{lemma}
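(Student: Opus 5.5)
The last statement is Lemma~\ref{lem:layer-sat}: a layer marked saturated has $f(p)\ge k-O(\eps^2)$, and two saturated layers differ by $O(\eps^2)$. The plan is to pass from the proxy mean (at least $\Lambda$) to the true mean $\mu$, then apply Lemma~\ref{lem:pb-chernoff}. The second claim follows from the first: any saturated layer has true value in $[k-O(\eps^2),k]$ and proxy value exactly $k$, so any two such values differ by $O(\eps^2)$.

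\textbf{Comparing the proxy mean with $\mu$.} Write $\hat\mu=\gamma\sigma+\sum_b n_b\hat p_b\ge\Lambda$.

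\emph{Tiny coordinates.} Here $\gamma\lfloor p_i/\gamma\rfloor\le p_i$, so the tiny part of $\hat\mu$ is at most the tiny part of $\mu$.

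\emph{Big coordinates.} Each $\hat p_b$ exceeds the true $p_i$ by at most $\beta$. Since $p_i>\eta$, this gives
\[
\hat p_b\le p_i+\beta=p_i\bigl(1+\beta/p_i\bigr)\le p_i\bigl(1+\beta/\eta\bigr)=p_i\bigl(1+\eps^2/(2\Lambda)\bigr).
\]

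Combining the two parts, $\hat\mu\le\mu\,(1+\eps^2/(2\Lambda))$, hence $\mu\ge\Lambda/(1+\eps^2/(2\Lambda))\ge\Lambda-\eps^2/2$. This is the relative-error use of $\beta$: the bound holds even when there are many big coordinates, which is the case that blocks the unsaturated-layer argument.

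\textbf{Applying saturation.} We need $\mu\ge\Lambda-1$ to imply $k-f(p)\le O(\eps^2)$, and this is the only mildly delicate point. Lemma~\ref{lem:pb-chernoff} as stated requires $\mu\ge\Lambda$, so I would rerun its proof with the $O(1)$ slack. Bound
\[
k-f(p)=\E[(k-N)_+]\le k\Pr[N<k].
\]
Apply the lower-tail bound with $t=\mu-k\ge\mu/2-1$, which uses $\Lambda\ge 2k$. This gives
\[
\Pr[N<k]\le\exp\bigl(-(\mu-k)^2/(2\mu)\bigr)\le\exp\bigl(-\mu/8+O(1)\bigr).
\]
Since $\mu\ge C_0\log(k/\eps)-1$, we get
\[
k\Pr[N<k]\le k\,(\eps/k)^{C_0/8}e^{O(1)}=O(\eps^2)
\]
for $C_0$ large. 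Equivalently, one could enlarge $C_0$ slightly so that $\Lambda-1$ still meets the hypothesis of the lemma.

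The only real obstacle is the conversion of the big-bucket rounding into a multiplicative error. Additive $\ell_1$ control fails here because the number of big coordinates is unbounded on saturated layers. The choice $\beta=\eps^2\eta/(2\Lambda)$ appears designed exactly for this step.
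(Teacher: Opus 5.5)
Your proof is correct and takes essentially the paper's route: show that saturation forces the true mean to satisfy $\mu\ge\Lambda-O(\eps^2)$, then apply Lemma~\ref{lem:pb-chernoff} with $C_0$ enlarged by a constant (your explicit rerun of the lower tail is exactly that enlargement). The only difference is bookkeeping, and it means your claim that additive $\ell_1$ control ``fails'' is not accurate. The paper splits into two cases: $\mu\ge\Lambda$, where there is nothing to prove, and $\mu<\Lambda$, where $n_{\mathrm{big}}\le\Lambda/\eta$ bounds the additive error by $n_{\mathrm{big}}\beta+n\gamma\le\tfrac32\eps^2$. Your relative bound $\hat\mu\le\mu(1+\beta/\eta)$ is the same inequality $n_{\mathrm{big}}\beta\le\mu\beta/\eta$, just stated without the case split. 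Your one-sided floor bound on the tiny part also drops the $n\gamma$ term that the paper carries.
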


\begin{theorem}
\label{thm:sig}
If $\Sg(S_1)=\Sg(S_2)$, then
\[
\bigl\lvert\E[\Topk(S_1)]-\E[\Topk(S_2)]\bigr\rvert\le O(\eps)\,W,
\]
and both sides are $O(\eps)W$-close to $\mathrm{Val}(\Sg(S_1))$.
\end{theorem}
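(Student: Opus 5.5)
The plan is to prove the second assertion directly: for every set $S$, $\lvert\E[\Topk(S)]-\mathrm{Val}(\Sg(S))\rvert\le O(\eps)W$. The first assertion then follows by the triangle inequality, because $\mathrm{Val}$ depends on $S$ only through $\Sg(S)$. The variables are already folded and gridded, so Lemma~\ref{lem:layer} gives the exact decomposition
\[
\E[\Topk(S)]=\delta\sum_{r=0}^{h-1}f\bigl(p_{\cdot,r}(S)\bigr).
\]
By definition, $\mathrm{Val}(\Sg(S))=\delta\sum_r v_r$, where $v_r$ is either $\E[\min(k,\widehat N_r)]$ or $k$ on saturated layers. It therefore suffices to show $\lvert f(p_{\cdot,r}(S))-v_r\rvert\le O(\eps^2)$ for each layer $r$. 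Since $\delta h=\eps W\cdot\eps^{-2}=W/\eps$, summing over layers then gives $\delta h\cdot O(\eps^2)=O(\eps)W$.

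The per-layer bound needs a case split. The saturation flag is determined by the proxy mean $\hat\mu_r=\gamma\sigma_r+\sum_b n_{r,b}\hat p_b$, whereas Lemma~\ref{lem:layer-unsat} is stated under the hypothesis that the true mean satisfies $\mu_r<\Lambda$. The first step is therefore to compare the two means, $\lvert\hat\mu_r-\mu_r\rvert$.
\begin{itemize}
\item The tiny part loses at most $\gamma$ per coordinate through the floor, so at most $n\gamma=\eps^2$ in total.
\item Each big coordinate is moved up by at most $\beta$. When $\mu_r<\Lambda+1$ there are at most $(\Lambda+1)/\eta$ big coordinates, which costs $O(\eps^2)$.
\end{itemize}
With these bounds the cases are as follows.
\begin{itemize}
\item \textbf{Layer marked unsaturated} ($\hat\mu_r<\Lambda$). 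Then $\mu_r<\Lambda+\eps^2$. If $\mu_r<\Lambda$, Lemma~\ref{lem:layer-unsat} applies directly. Otherwise $\mu_r\ge\Lambda$, so $f\ge k-\eps^2$ by Lemma~\ref{lem:pb-chernoff}. In that sub-case I would bound $\E[\min(k,\widehat N_r)]$ from below by rerunning the Le Cam and $\ell_1$-rounding argument of Lemma~\ref{lem:layer-unsat} with $\Lambda+1$ in place of $\Lambda$. This changes only constants, and both values end up within $O(\eps^2)$ of $k$.
\item \textbf{Layer marked saturated} ($v_r=k$). Lemma~\ref{lem:layer-sat} gives $f(p_{\cdot,r})\ge k-O(\eps^2)$, and trivially $f\le k$.
\end{itemize}

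The main obstacle is this boundary regime: the flag and the hypothesis of Lemma~\ref{lem:layer-unsat} can disagree when $\mu_r$ is near $\Lambda$. If the number of big coordinates is not yet controlled, the rounding error is a priori unbounded. I would handle this by monotonicity instead of counting. Rounding big coordinates up and Poissonizing with mean at most $\sum$ of the tiny $p$'s changes the proxy in a controlled direction. Lemma~\ref{lem:pois-mean} together with the Lipschitz bound of Lemma~\ref{lem:f-lip} then shows that the proxy occupancy never falls below $f(p)-O(\eps^2)$. Combined with $f\le k$ and the Chernoff saturation above $\Lambda$, this pins both quantities near $k$.

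Finally, if $\Sg(S_1)=\Sg(S_2)$, then the two sets share the same $\mathrm{Val}$. Each expectation is $O(\eps)W$-close to it, so they are within $O(\eps)W$ of each other.
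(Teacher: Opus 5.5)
Your proof is correct and follows the paper's route: compare $f(p_{\cdot,r})$ with the proxy on each layer using Lemmas~\ref{lem:layer-unsat} and~\ref{lem:layer-sat}, sum the $h$ layers using $\delta h=W/\eps$, and note that $\mathrm{Val}$ depends only on $\Sg$. Your explicit treatment of a layer flagged unsaturated whose true mean lies in $[\Lambda,\Lambda+\eps^2)$ fills in a case that the paper's citation of the two lemmas leaves implicit. The closing ``monotonicity'' paragraph is unnecessary, since $\mu_r<\Lambda+\eps^2$ already bounds the number of big coordinates. As phrased it is also not quite right: Poissonization lowers $\E[\min(k,\cdot)]$, so you would still need the Le Cam cost, not just monotonicity.
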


Tiny mass is quantized at width $\gamma=\eps^2/n$ so that additivity does not accumulate more than $\eps^2$ total error.
The mixed-radix encoding of tiny counts contributes $n^{O(h)}=n^{O(1/\eps^2)}$ signatures, which is why the scheme is not an EPTAS even after $k$ is held fixed.
Proofs are in Appendix~\ref{app:sig}.

\section{A quantile scheme for large \texorpdfstring{$k$}{k}}
\label{sec:quantile}

The occupancy scheme of Theorem~\ref{thm:main} has $n$-exponent $O(k^2/\eps^2)$ after substituting $\eps\leftarrow\Theta(\eps/k)$ into a uniform $\eps W$-grid, and a $k^{\mathrm{poly}(k,1/\eps)}$ factor from unsaturated-layer histograms.
Both costs disappear for $k\ge C/\eps^2$: the expected occupancy deviation at a mixture quantile is $O(\sqrt{k})$, already a relative $\eps$.

Write
\begin{equation}
\label{eq:AT}
A_T(x)
:=
kT+\sum_i\Tr(x_i,T),
\qquad
N(x,T)
:=
\#\{i:x_i>T\},
\qquad
N'(x,T)
:=
\#\{i:x_i\ge T\}.
\end{equation}
Lemma~\ref{lem:det-sand} is the inequality $A_T(x)\ge\Topk(x)$.
The next lemma records that the inequality is tight at a single $t$, and that the slack is controlled by occupancy.

\begin{lemma}[Pathwise proxy]
\label{lem:path-proxy}
For $x\in\R_+^m$ and $T\ge 0$,
\[
\Topk(x)
=\min_{t\ge 0}A_t(x).
\]
Writing $H=N(x,T)$, $M=N'(x,T)$, and $z_i=\Tr(x_i,T)$,
\begin{equation}
\label{eq:path-strong}
0
\le
A_T(x)-\Topk(x)
\le
T(k-M)_+
+\frac{(H-k)_+}{H}\sum_i z_i,
\end{equation}
where the second term is $0$ when $H=0$.
If $H\le k\le M$, then $A_T(x)=\Topk(x)$.
\end{lemma}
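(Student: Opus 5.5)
The plan is to read off both claims from the sorted order. Let $x_{(1)}\ge x_{(2)}\ge\cdots$ be the coordinates of $x$ in decreasing order, padded with zeros so that $x_{(k)}$ is defined. The map $t\mapsto A_t(x)$ is convex and piecewise linear. Its right derivative is $k-N(x,t)$, which is nondecreasing in $t$.

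\emph{Identity.} Lemma~\ref{lem:det-sand} already gives $A_t(x)\ge\Topk(x)$ for every $t\ge 0$, so it suffices to exhibit one $t$ with equality. I will take $t=x_{(k)}$. There, $\Tr(x_{(j)},t)=x_{(j)}-x_{(k)}$ for $j<k$, and $\Tr(x_{(j)},t)=0$ for $j\ge k$. Hence
\[
A_t(x)=kx_{(k)}+\sum_{j<k}\bigl(x_{(j)}-x_{(k)}\bigr)=\sum_{j\le k}x_{(j)}=\Topk(x),
\]
which proves $\Topk(x)=\min_{t\ge 0}A_t(x)$.

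\emph{Slack bound.} Fix $T$. The lower bound $0\le A_T-\Topk$ is Lemma~\ref{lem:det-sand}. For the upper bound I split into two cases.

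\emph{Case $H\ge k$.} At least $k$ coordinates exceed $T$, so the top $k$ are among these $H$ coordinates, and $\Topk(x)=kT+(\text{sum of the $k$ largest } z_i)$. Therefore $A_T-\Topk$ equals the sum of the $H-k$ smallest positive $z_i$. An average over the smallest $H-k$ values is at most the average over all $H$ positive values, so this sum is at most $\frac{H-k}{H}\sum_i z_i$.

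\emph{Case $H<k$.} Here $\Topk(x)$ is at least the sum of the $H$ coordinates above $T$ plus $\min(M,k)-H$ coordinates equal to $T$. That is,
\[
\Topk(x)\ge HT+\sum_i z_i+\bigl(\min(M,k)-H\bigr)T=\min(M,k)\,T+\sum_i z_i,
\]
so $A_T-\Topk\le kT-\min(M,k)\,T=T(k-M)_+$.

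In each case the right-hand side of~\eqref{eq:path-strong} dominates, since both of its terms are nonnegative. If $H\le k\le M$, then either $H=k$, where the first case gives slack $0$, or $H<k\le M$, where the second case gives slack $0$; so $A_T(x)=\Topk(x)$.

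The only step needing care is the averaging claim in the case $H\ge k$: it requires that the discarded $z$'s are exactly the smallest ones, with ties broken arbitrarily. Everything else is bookkeeping.
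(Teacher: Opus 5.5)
Your proof is correct and follows the paper's argument: equality at $t=x_{(k)}$ combined with Lemma~\ref{lem:det-sand}, then the same case analysis on $H$ versus $k$. The only difference is cosmetic. You merge the paper's cases $M<k$ and $H\le k\le M$ into the single case $H<k$ via the bound $\Topk(x)\ge\min(M,k)\,T+\sum_i z_i$. Your convexity remark about the right derivative is true but unused.
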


\begin{definition}[Mixture quantile]
\label{def:mix-q}
For a set $S$ and $T\ge 0$ write
\[
\mu_S^>(T)=\sum_{i\in S}\Pr[X_i>T],
\qquad
\mu_S^\ge(T)=\sum_{i\in S}\Pr[X_i\ge T],
\qquad
\Phi_S(T)=\sum_{i\in S}\E[\Tr(X_i,T)].
\]
Call $T$ a mixture quantile of $S$ if $\mu_S^>(T)\le k\le\mu_S^\ge(T)$.
\end{definition}

Every $S$ with $\lvert S\rvert\ge k$ has a mixture quantile: $\mu_S^>(T)$ is nonincreasing from $\sum_{i\in S}\Pr[X_i>0]$ to $0$, and $\mu_S^\ge(T)$ is the left limit at each atom.
If $\lvert S\rvert\le k$ then $\Topk(S)=\sum_{i\in S}X_i$ pathwise, which is $A_0(S)$.

\begin{lemma}[Quantile proxy]
\label{lem:c1}
Let $X_1,\dots,X_m$ be independent and nonnegative, and let $T\ge 0$ satisfy $\mu^>(T)\le k\le\mu^\ge(T)$.
Write $V=\Topk(X)$, $A_T=A_T(X)$, and $\eta_k:=k^{-1/2}+k^{-1}$.
If $k\ge 2$ then
\begin{equation}
\label{eq:c1}
0
\le
\E[A_T]-\E[V]
\le
\eta_k\,\E[A_T],
\end{equation}
hence $\E[A_T]\le(1-\eta_k)^{-1}\E[V]=(1+O(k^{-1/2}))\E[V]$.
The same bound holds with $\eta_k$ replaced by $\eta_k+O((\delta+1)/k)$ after enlarging the window to $\mu^>\le k+\delta$ and $\mu^\ge\ge k-\delta$.
\end{lemma}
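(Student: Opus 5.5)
The plan is to take expectations in the pathwise inequality \eqref{eq:path-strong} of Lemma~\ref{lem:path-proxy}. Since $\E[A_T]=kT+\Phi(T)$, with $\Phi(T)=\sum_i\E[\Tr(X_i,T)]$, it suffices to show two bounds:
\[
\E\bigl[T(k-M)_+\bigr]\le k^{-1/2}\cdot kT,
\qquad
\E\Bigl[\tfrac{(H-k)_+}{H}\textstyle\sum_i z_i\Bigr]\le (k^{-1/2}+k^{-1})\,\Phi(T).
\]
Adding them gives $\E[A_T]-\E[V]\le\eta_k\E[A_T]$. The lower bound $0\le\E[A_T]-\E[V]$ is immediate from Lemma~\ref{lem:det-sand}. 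The consequence $\E[A_T]\le(1-\eta_k)^{-1}\E[V]$ follows by rearranging, using $\eta_k<1$ for $k\ge 2$.

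For the first term, $M=N'(X,T)$ is Poisson-binomial with mean $\mu^\ge(T)\ge k$, and its variance could be much larger than $k$. So I thin first. Choose independent coins $\xi_i\sim\Bern(\theta)$ with $\theta=k/\mu^\ge(T)\le 1$, and set $M'=\sum_i\xi_i\mathbf 1_{X_i\ge T}$. Then $M'\le M$ pathwise and $\E[M']=k$. Since $(k-\cdot)_+$ is nonincreasing, $\E[(k-M)_+]\le\E[(k-M')_+]$. The positive part of a centered variable is at most its mean absolute deviation, so $\E[(k-M')_+]\le\sqrt{\mathrm{Var}(M')}\le\sqrt{k}$. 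The variance bound holds because a Poisson-binomial has variance at most its mean. This yields the first estimate, $T\sqrt{k}=k^{-1/2}\cdot kT$.

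For the second term I would use a leave-one-out identity. Write $z_i=\Tr(X_i,T)$. On $\{z_i>0\}$ we have $X_i>T$, so $H=1+H^{(-i)}$, where $H^{(-i)}=\#\{j\ne i:X_j>T\}$ is independent of $z_i$. Hence
\[
\E\Bigl[z_i\tfrac{(H-k)_+}{H}\Bigr]=\E[z_i]\;\E\Bigl[\tfrac{(1+H^{(-i)}-k)_+}{1+H^{(-i)}}\Bigr].
\]
Wherever the numerator is positive, $1+H^{(-i)}>k$, so the ratio is at most $k^{-1}(1+H^{(-i)}-k)_+$. Moreover $\E[H^{(-i)}]\le\mu^>(T)\le k$, which gives
\[
(1+H^{(-i)}-k)_+\le 1+\bigl(H^{(-i)}-\E[H^{(-i)}]\bigr)_+ .
\]
The same mean-absolute-deviation bound gives expectation at most $1+\sqrt{k}$, using that the variance is at most the mean, which is at most $k$. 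Summing over $i$ gives $(k^{-1}+k^{-1/2})\Phi(T)$.

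For the enlarged window $\mu^>\le k+\delta$ and $\mu^\ge\ge k-\delta$, I would repeat both steps. In the first term, thinning to mean $k-\delta$ leaves an extra $\delta T$. In the second term, the centering shift costs an extra $\delta$ in the numerator, hence $\delta\Phi/k$ after dividing by $k$. The variances are at most $k+\delta$, so the square roots change by only $O(\delta/\sqrt k)$. Altogether these produce the claimed additive $O((\delta+1)/k)$ in the relative error. I expect the main obstacle to be the second term: one must notice that the unbounded surplus $z_i$ decouples from the occupancy factor only after removing coordinate $i$. Without the leave-one-out step, the correlation between large $z_i$ and large $H$ cannot be controlled. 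The first term also needs care, because the naive bound via $\mathrm{Var}(M)\le\mu^\ge$ is useless when $\mu^\ge\gg k$; this is why I thin before applying the deviation bound.
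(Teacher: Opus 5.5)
Your proposal follows the paper's proof essentially step for step. It uses the same thinning of $M$ to a Poisson-binomial of mean $k$ (resp.\ $k-\delta$) for the shortage term, the same leave-one-out factorization with $\E[H^{(-i)}]\le k$ and $\mathrm{Var}(H^{(-i)})\le k$ for the overflow term, and the same window-enlargement bookkeeping. One small slip: $\eta_2=2^{-1/2}+\tfrac12>1$, so the rearrangement to $\E[A_T]\le(1-\eta_k)^{-1}\E[V]$ needs $k\ge 3$, not $k\ge 2$; the statement itself has the same imprecision, which is harmless where the lemma is applied ($k\ge C/\eps^2$).
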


The error is $O(\eps)\,\E[V]$ once $k\ge C/\eps^2$.
A mixture quantile of $S^*$ equals some input atom, or after Lemma~\ref{lem:mult-round} some grid height.
At each candidate $T$ the decoder matches residual types in the slice $\mu^>\le k\le\mu^\ge$.
Proofs of Lemmas~\ref{lem:path-proxy}--\ref{lem:mult-round} are in Appendix~\ref{app:quantile}.

The uniform grid of Lemma~\ref{lem:grid} would re-introduce a factor $k$ in the relative error.
A multiplicative rounding avoids that factor.

\begin{lemma}[Multiplicative rounding]
\label{lem:mult-round}
Let $c\ge 1$ be a fixed constant, let $W\le\E[\Topk(X)]\le cW$, and assume $\supp(X_i)\subseteq[0,W/\eps]$ for every $i$.
Set $\tau:=\eps W/n$ and replace each atom $a$ by $0$ if $a<\tau$ and by $\tau(1+\eps)^{\lfloor\log_{1+\eps}(a/\tau)\rfloor}$ otherwise.
Writing $Y$ for the rounded vector, $(1-O(\eps))\Topk(X)-O(\eps)\,W\le\Topk(Y)\le\Topk(X)$ pathwise, and every $Y_i$ is supported on a grid $G$ of size $O(\eps^{-1}\log(n/\eps))$.
(The pathwise argument does not use the numerical value of $c$; the packing bound $W\le\OPT\le 5W$ is an instance.)
\end{lemma}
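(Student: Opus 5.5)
The proof is pathwise, so fix a realization $x\in[0,W/\eps]^m$ and compare $\Topk(y)$ with $\Topk(x)$ directly, where $y$ is the rounded vector. The upper bound is immediate. Every atom is rounded down: to $0$ if $a<\tau$, and otherwise to $\tau(1+\eps)^{\lfloor\log_{1+\eps}(a/\tau)\rfloor}\le a$. Since $\Topk$ is coordinatewise nondecreasing, $\Topk(y)\le\Topk(x)$ holds pathwise.

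For the lower bound, let $I$ be the set of indices of the $k$ largest coordinates of $x$, so $\Topk(x)=\sum_{i\in I}x_i$. Because $\Topk(y)\ge\sum_{i\in I}y_i$, it suffices to bound $\sum_{i\in I}(x_i-y_i)$. Split $I$ by the size of $x_i$.
\begin{itemize}
\item If $x_i\ge\tau$, then $y_i\ge x_i/(1+\eps)\ge(1-\eps)x_i$, so the loss is at most $\eps x_i$.
\item If $x_i<\tau$, the loss is at most $\tau=\eps W/n$.
\end{itemize}
There are at most $\min(k,m)\le n$ small coordinates, so they lose at most $n\tau=\eps W$ in total. This gives
\[
\Topk(y)\ge(1-\eps)\Topk(x)-\eps W,
\]
which is the claimed inequality with explicit constants. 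The numerical value of $c$ never enters: the argument only uses $m\le n$ and the definition of $\tau$, matching the parenthetical remark in the statement. I expect one subtle point here: the set $I$ may not be the top-$k$ set of $y$. This causes no difficulty, because the bound $\Topk(y)\ge\sum_{i\in I}y_i$ holds for any $k$-subset $I$.

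Finally, count the grid $G$. It consists of $0$ together with the points $\tau(1+\eps)^j$ for integers $j\ge0$. Atoms satisfy $a\le W/\eps$, so $j\le\log_{1+\eps}\bigl((W/\eps)/\tau\bigr)=\log_{1+\eps}(n/\eps^2)$. Hence
\[
|G|\le 1+\frac{\ln(n/\eps^2)}{\ln(1+\eps)}=O\bigl(\eps^{-1}\log(n/\eps)\bigr),
\]
using $\ln(1+\eps)\ge\eps/2$ for $\eps<1$ and $\log(n/\eps^2)\le2\log(n/\eps)$.

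The only step needing any care is the counting of small coordinates. It must use $m\le n$, not $k$, because $k$ may exceed $n$. Since $\min(k,m)\le n$ always holds, the total small-coordinate loss stays $\eps W$ independently of $k$, and this is precisely why the rounding avoids the factor $k$ introduced by the uniform grid of Lemma~\ref{lem:grid}.
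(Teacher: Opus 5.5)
Your proof is correct and follows the paper's argument: the paper also gets the upper bound by rounding down, loses less than $\tau$ on each top-$k$ coordinate below $\tau$ and a factor $1+\eps$ on each coordinate above $\tau$ (routed through the intermediate vector $X'_i=X_i\mathbf{1}_{\{X_i\ge\tau\}}$ instead of a fixed index set), and counts the grid as $1+\log_{1+\eps}\bigl((W/\eps)/\tau\bigr)$. Your bound $\min(k,m)\,\tau\le n\tau=\eps W$ on the small-coordinate loss is slightly more careful than the paper's $k\tau\le\eps W$, which tacitly assumes $k\le n$.
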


Section~\ref{sec:packing} realizes the triple approximately in the packing LP; Section~\ref{sec:exactsum} realizes it by exact-sum.

\section{Packing PTAS via the natural LP}
\label{sec:packing}

Theorem~\ref{thm:packing} uses the decoders of Sections~\ref{sec:sig} and~\ref{sec:quantile}, realized approximately in the packing LP.
The scale $W$ is $kT_0$ from Algorithm~\ref{alg:transfer}, run with a $(1/2)$-approximate max-sum oracle given by a packing PTAS~\citep{frieze1984,ibarra1975}.
Theorem~\ref{thm:transfer}(ii) gives $W\le\OPT\le 5W$, so Theorem~\ref{thm:disc} and Lemma~\ref{lem:mult-round} apply with an absolute constant $c=O(1)$.
Capacities are never rounded.
Throughout this section, $\eps$ is the \emph{internal} accuracy of the fold, the grids, and the LP; write $\eps_{\mathrm{out}}$ for the requested PTAS accuracy.
The substitution $\eps\leftarrow\Theta(\eps_{\mathrm{out}}/k)$ of Corollary~\ref{cor:ptas} is applied only after the construction (proof of Theorem~\ref{thm:packing}).
Quantities such as $h=1/\eps^2$ and $\eta_{\mathrm{LP}}$ are therefore expressed in the internal parameter.
The two ranges use different additive coordinates, because a basic feasible solution can leave $O(D)$ fractional items, and dropping those items is affordable only when every residual coordinate is $o(1)$ relative to the decoder's tolerance.

\paragraph{Large $k$.}
If $k\ge C/\eps^2$, the type is the triple $(\Pr[X_i>T],\Pr[X_i\ge T],\E[(X_i-T)_+])$, so $D=3$.
Guess $T$ on the multiplicative grid of Lemma~\ref{lem:mult-round} and guess a residual target $z$ for the triple of $S^*$.
An item is surplus-heavy if $\E[(X_i-T)_+]>\eta W$ with $\eta=\Theta(\eps)$.
Lemma~\ref{lem:c1} and $\E[\Topk(S^*)]\le 5W$ give $\Phi_{S^*}(T)=O(W)$, so an optimum has $O(1/\eta)$ surplus-heavies.
Occupancy coordinates of any single item are at most $1$, while those of $S^*$ are $\Theta(k)$; after surplus-heavies are fixed one therefore does not enumerate occupancy-heavies.
Solve the residual LP with $d$ packing inequalities and six signature slabs.
A BFS has at most $d+6$ fractionals.
Dropping them changes occupancy by $O(1)$ (an additive window $\delta=O(1)$ in Lemma~\ref{lem:c1}, hence a relative $O(\eps)$ once $k=\Omega(1/\eps^2)$) and surplus by $O(\eps)\,W$.
The comparison is Lemma~\ref{lem:c1} with occupancy slack $O(1)$ and surplus slack $O(\eps)\,W$.

\paragraph{Small $k$.}
If $k\le C/\eps^2$, a naive LP on all $D=O(k\Lambda^2/\eps^6)$ occupancy-histogram coordinates fails: a BFS can have $\Theta(D)$ fractionals, and dropping that many items of size $\Theta(1)$ destroys $\E[\min(k,N)]$.
The same obstruction rules out a one-hot encoding of types as LP coordinates.
The fix is to make every residual coordinate tiny and to compare residuals by Poissonization, not by $\ell_1$ on item-aligned parameter vectors.
Write $F_{\max}:=d+2h$ and $\Lambda_+:=\Lambda+F_{\max}$.
Set
\[
\eta_{\mathrm{LP}}
:=
\frac{\eps^2}{2(k\Lambda_++F_{\max})}.
\]
Guess the set $R_+$ of layers at which $S^*$ has mean at least $\Lambda_+$ ($2^{h}=2^{O(1/\eps^2)}$ guesses).
Call $i$ \emph{LP-heavy} if $\max_{r\notin R_+}p_{i,r}>\eta_{\mathrm{LP}}$.
On a layer with mean less than $\Lambda_+$ the optimum contains at most $\Lambda_+/\eta_{\mathrm{LP}}$ LP-heavies, hence at most $h\Lambda_+/\eta_{\mathrm{LP}}$ in total.
Enumerate those subsets in time $n^{O(h\Lambda_+(k\Lambda_++F_{\max})/\eps^2)}$, which is $n^{f(d,1/\eps)}$ once $k=O(1/\eps^2)$ in the internal parameter (and $n^{f(d,1/\eps_{\mathrm{out}})}$ after the substitution).
After they are fixed, every residual item satisfies $p_{i,r}\le\eta_{\mathrm{LP}}$ on every non-raised layer.
The residual LP has $d$ packing inequalities, two slabs per non-raised layer, and one covering inequality per raised layer (at most $F_{\max}$ non-box constraints; Lemma~\ref{lem:lp-drop}).
Dropping at most $F_{\max}$ fractionals changes each non-raised residual mean by at most $F_{\max}\eta_{\mathrm{LP}}\le\eps^2/2$.
Le Cam on a residual of mean $O(\Lambda_+)$ costs $k\eta_{\mathrm{LP}}\Lambda_+\le\eps^2/2$ per layer, so two residuals with equal mean have $f$-gap $O(\eps^2)$ even when they contain different items (Lemma~\ref{lem:occ-robust}).
Covering at height $\Lambda_+$ keeps every raised layer at total mean at least $\Lambda$ after the drop.

In both ranges the geometry is that of a multidimensional-knapsack PTAS~\citep{frieze1984} applied to an additive sketch.
The $n$-exponent may depend on $d$ and $1/\eps$; that is a PTAS, and Theorem~\ref{thm:limits} rules out a generic EPTAS or FPTAS on this class.
Details are in Appendix~\ref{app:packing}.

\section{Covering PTAS via the natural LP}
\label{sec:covering}

The decoder is that of Section~\ref{sec:packing}, with $Ax\ge b$ in place of $Ax\le b$.
A BFS still has at most $d+2D+C$ fractionals; raising them to $1$ preserves covering (Lemma~\ref{lem:lp-lift}) and changes each residual signature coordinate by the same budget as a drop.
The scale $W$ comes from Algorithm~\ref{alg:min-transfer} and a covering min-sum PTAS~\citep{frieze1984}, so $\OPT_{\min}\le W\le O(\OPT_{\min})$ and Theorem~\ref{thm:disc} applies to a min-optimum with $c=1$.

The only additional stochastic step is Lemma~\ref{lem:fold-back}.
The decoder returns a set whose folded value satisfies $U\le(1+4\eps)W$; the original objective is then at most $U/(1-4\eps)$.
A min-sum constant on an arbitrary up-set is not enough: vertex cover has a $2$-approximation for min-sum, but $d=\Theta(m)$.
Details are in Appendix~\ref{app:covering}.

\section{Exact-Sum Realization}
\label{sec:exactsum}
\label{sec:ptas}

Query-weight exact-sum (Definition~\ref{def:exact-sum}) is incomparable with packing: DAG $s$--$t$ paths and matchings have a query-weight DP; binary knapsack is packing and its DP is polynomial in the capacity.
This section uses the decoders of Sections~\ref{sec:sig} and~\ref{sec:quantile} and replaces approximate realization in the packing LP by exact-sum.

\begin{algorithm}[t]
\caption{Approximation scheme for Top-$k$ sum}
\label{alg:ptas}
\begin{algorithmic}[1]
\State Compute a surplus scale $W$ with $W\le\OPT\le 2W$ (Theorem~\ref{thm:surplus})
\State Fold every marginal to $[0,W/\eps]$ (Definition~\ref{def:fold}) and round to the $\eps W$-grid (Lemma~\ref{lem:grid})
\State Compute $\Sg(X_i)$ for each $i$ (Definition~\ref{def:sg})
\For{each signature $\mathrm{sg}$ of magnitude as in Section~\ref{sec:sig}}
  \State Realize some $S\in\F$ with $\Sg(S)=\mathrm{sg}$ by exact-sum, if any
\EndFor
\State \Return the realized set of largest $\E[\Topk(S)]$ (equivalently, largest $\mathrm{Val}(\mathrm{sg})$)
\end{algorithmic}
\end{algorithm}

Exact-sum returns a set with the same occupancy signature as a folded optimum, so Theorems~\ref{thm:disc} and~\ref{thm:sig} give Theorem~\ref{thm:main}.
Evaluating the true $\E[\Topk(S)]$ after realization, or the proxy $\mathrm{Val}$, are interchangeable by Theorem~\ref{thm:sig}.
The mixed-radix count of signatures is recorded in Appendix~\ref{app:exactsum}; it is the source of the $n^{O(1/\eps^2)}$ factor, and of Corollary~\ref{cor:ptas} after $\eps\leftarrow\Theta(\eps/k)$.

\begin{algorithm}[t]
\caption{Quantile scheme for large $k$}
\label{alg:quantile}
\begin{algorithmic}[1]
\Require discrete laws, family $\F$, integer $k\ge C/\eps^2$, $\eps\in(0,1/2)$
\State Compute a surplus scale $W$ with $W\le\OPT\le 2W$ (Theorem~\ref{thm:surplus})
\State Fold every marginal to $[0,W/\eps]$ (Definition~\ref{def:fold})
\State Apply the multiplicative rounding of Lemma~\ref{lem:mult-round}; let $G$ be the resulting grid
\For{each threshold $T\in G\cup\{0\}$}
  \State For each $i$, set
  $u_i\leftarrow\bigl\lfloor n\Pr[X_i>T]/\eps\bigr\rfloor$,
  $v_i\leftarrow\bigl\lfloor n\Pr[X_i\ge T]/\eps\bigr\rfloor$,
  $w_i\leftarrow\bigl\lfloor n\,\E[\Tr(X_i,T)]/(\eps W)\bigr\rfloor$
  \For{each mixed-radix target $t$ of $(u,v,w)$}
    \State Realize some $S\in\F$ with $\sum_{i\in S}(u_i,v_i,w_i)=t$ by exact-sum, if any
  \EndFor
\EndFor
\State \Return the realized set of largest $\E[\Topk(S)]$
\end{algorithmic}
\end{algorithm}

The integer vector $(u_i,v_i,w_i)$ has magnitude $\poly(n,1/\eps)$: each of $u_i,v_i$ lies in $\{0,\dots,\lfloor n/\eps\rfloor\}$ and $w_i$ lies in $\{0,\dots,\lfloor n/\eps^2\rfloor\}$ because $\E[\Tr(X_i,T)]\le W/\eps$ after the fold.
Encoded in mixed radix it is a single integer of magnitude $n^{O(1)}\eps^{-O(1)}$.
The number of targets per threshold, and the number of thresholds, are both $\poly(n,1/\eps)$.
Exact-sum matches the quantized triple of a folded optimum, so Lemma~\ref{lem:c1} yields Theorem~\ref{thm:quantile}.
Combined with Corollary~\ref{cor:ptas} on $k\le C/\eps^2$, this is Corollary~\ref{cor:allk}.
Proofs are in Appendix~\ref{app:exactsum}.

\section{No generic {EPTAS} or {FPTAS} on packing}
\label{sec:limits}

Theorem~\ref{thm:packing} is a PTAS for every $k\ge 1$ on every fixed-$d$ packing family.
It is not an EPTAS or an FPTAS.
Two-dimensional knapsack is a packing family, so the following theorem is a uniform statement about the same class: there is no FPTAS (resp.\ EPTAS) that works for every fixed-$d$ packing $\F$.

\begin{theorem}
\label{thm:limits}
Let $\F$ be the feasible sets of a two-dimensional $0$-$1$ knapsack instance (a packing family with $d=2$).
\begin{enumerate}[leftmargin=1.4em,itemsep=2pt]
\item
Unless $P=NP$, $\max_{S\in\F}\E[\Topk(S)]$ admits no FPTAS, already for $k=1$.
\item
Unless $W[1]=FPT$, the same problem admits no EPTAS, already for $k=1$.
\end{enumerate}
\end{theorem}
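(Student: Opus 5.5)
Both parts follow from the sparse-lottery encoding described in the technical overview: a hypothetical FPTAS (resp.\ EPTAS) for $\E[\Topk]$ on two-dimensional knapsack becomes an FPTAS (resp.\ EPTAS) for deterministic two-dimensional $0$-$1$ knapsack. The latter is ruled out unless $P=NP$ \citep{magazine1984} and unless $W[1]=FPT$ \citep{kulik2010}.

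Fix $k=1$ and an instance with integer profits $w_i\ge 0$, weights $(a_{1i},a_{2i})$ and capacities $(b_1,b_2)$. Let $\F$ be its feasible sets. Set $X_i=B(w_i,q)$ independently, with $q$ chosen below. For every $S\in\F$,
\[
q(1-q)^{n-1}\,w(S)\;\le\;\E[\max(S)]\;\le\;q\,w(S).
\]
The upper bound follows from $\max\le\sum$. The lower bound holds because each $i\in S$ is the unique nonzero coordinate with probability at least $q(1-q)^{|S|-1}\ge q(1-q)^{n-1}$, and these events are disjoint. Now let $S$ be a $(1-\eps')$-approximation to $\E[\max]$ over $\F$, and let $S^\circ$ maximize $w$. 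Then
\[
q\,w(S)\ge\E[\max(S)]\ge(1-\eps')\,\E[\max(S^\circ)]\ge(1-\eps')\,q(1-q)^{n-1}w(S^\circ).
\]
Hence $w(S)\ge(1-\eps')(1-q)^{n-1}\,w(S^\circ)$. Choose $q=\eps/(2n)$, so that $(1-q)^{n-1}\ge 1-\eps/2$. With $\eps'=\eps/2$ we get $w(S)\ge(1-\eps)w(S^\circ)$. The feasibility family is unchanged, so the output is a feasible $2$-dimensional knapsack solution.

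Next, the complexity accounting. The encoding of $q=\eps/(2n)$ has $O(\log(n/\eps))$ bits, and each law has two atoms. The stochastic instance therefore has size polynomial in the knapsack instance and in $\log(1/\eps)$. An algorithm for $\E[\Topk]$ running in time $\poly(\text{size},1/\eps')$ yields a deterministic algorithm in time $\poly(n,\langle\F\rangle,1/\eps)$, that is, an FPTAS. An algorithm running in time $g(1/\eps')\,\poly(\text{size})$ yields one in time $g(2/\eps)\,\poly(n,\langle\F\rangle,\log(1/\eps))$, that is, an EPTAS. Two-dimensional knapsack has no FPTAS unless $P=NP$ because it is strongly NP-hard. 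It has no EPTAS unless $W[1]=FPT$ by \citet{kulik2010}. Both parts follow, already at $k=1$. The same instance is valid for any $k$ only if one rechecks the sandwich. For $k>1$, $\E[\Topk(S)]\ge\E[\max(S)]$ and $\le q\,w(S)$ still hold, so the same argument works, but the statement needs only $k=1$.

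The main obstacle is the bookkeeping in the second paragraph. The additive loss must be pushed entirely into the multiplicative factor $(1-q)^{n-1}$, with no dependence on $w$. Also, $q$ must depend on $\eps$ only polynomially, and only logarithmically in bit length, so that the time class (FPTAS or EPTAS) is preserved. I would check that the hardness results cited apply to the exact $0$-$1$ two-dimensional formulation of Definition~\ref{def:packing} with binary-encoded numbers. I would also check that ties and zero profits cause no degeneracy: sets with $w(S^\circ)=0$ are trivial.
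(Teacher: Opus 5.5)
Your proposal is correct and is essentially the paper's proof: the same lottery $X_i=B(w_i,q)$ with $q=\eps/(2n)$, a call of the hypothetical scheme at accuracy $\eps/2$, the sandwich $q(1-q)^{n-1}w(S)\le\E[\max(S)]\le q\,w(S)$ combined with $(1-q)^{n-1}\ge 1-\eps/2$, the citations \citet{magazine1984} and \citet{kulik2010}, and the same bit-length accounting for the EPTAS case. One correction: two-dimensional knapsack is \emph{not} strongly NP-hard, because the $O(n b_1 b_2)$ dynamic program is pseudo-polynomial; the no-FPTAS fact holds because the problem stays NP-hard with polynomially bounded (e.g.\ unit) profits, since the hardness lies in the binary capacities, so $\OPT\le n$ and an FPTAS at accuracy below $1/n$ would solve it exactly; strong NP-hardness is not the reason.
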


The proof (Appendix~\ref{app:hardness}) is the lottery encoding of Lemma~\ref{lem:lottery-transfer} with firing probability $q=\Theta(\eps/n)$, composed with the FPTAS lower bound of~\citet{magazine1984} and the EPTAS lower bound of~\citet{kulik2010}.
Cardinality for $k=1$ already admits an EPTAS~\citep{psomas2020,segev2021}.
On packing, a generic PTAS is therefore essentially optimal.

\begin{theorem}
\label{thm:min-limits}
Let $\F$ be the feasible sets of a two-dimensional $0$-$1$ covering knapsack instance (a covering family with $d=2$).
Unless $P=NP$, $\min_{S\in\F}\E[\Topk(S)]$ admits no FPTAS, already for $k=1$.
\end{theorem}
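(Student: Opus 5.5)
The reduction is from deterministic two-dimensional covering knapsack (min-sum), which has no FPTAS unless $P=NP$ (the covering analogue of~\citet{magazine1984}). We encode it with sparse lotteries at $k=1$, mirroring the proof of Theorem~\ref{thm:limits}. Given a min-cost instance with costs $w_i\ge 0$ (integers of polynomial bit length), covering constraints $Ax\ge b$, and target accuracy $\eps$, set $X_i=B(w_i,q)$ independently, with $q=\Theta(\eps/n)$ chosen so that $(1-q)^{n-1}\ge 1-\eps$. The feasible family $\F$ is unchanged and is still a $d=2$ covering family.

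The key step is the two-sided sandwich for the minimization direction. For any $S$,
\[
q(1-q)^{|S|-1}w(S)\le\E[\max_{i\in S}X_i]\le q\,w(S).
\]
The lower bound holds because, on the event that exactly one $X_i$ fires, the maximum equals that $w_i$. The upper bound is the union bound $\max\le\sum$. Hence $\E[\Topk(S)]/q$ lies in $[(1-\eps)w(S),w(S)]$ for every $S\in\F$.

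Suppose $\mathcal{A}$ is an FPTAS for $\min_{S\in\F}\E[\Topk(S)]$. Run it at accuracy $\eps$ and let $S$ be the output and $S^\dagger$ a min-sum optimum. Then
\[
(1-\eps)w(S)\le\E[\max S]/q\le(1+\eps)\E[\max S^\dagger]/q\le(1+\eps)w(S^\dagger),
\]
so $w(S)\le\frac{1+\eps}{1-\eps}w(S^\dagger)=(1+O(\eps))\OPT$. This is an FPTAS for two-dimensional covering knapsack. Its running time is polynomial in $n$, $1/\eps$ and the input length, because $q$ has $O(\log(n/\eps))$ bits, so the constructed instance has polynomial size. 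Moreover, two-dimensional covering knapsack is strongly-NP-hard-like in the sense needed: with polynomially bounded costs, an FPTAS at $\eps<1/(n\max w)$ would solve it exactly. More precisely, we reduce exactly from an NP-hard decision problem via a small-cost encoding.

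The main obstacle is pinning down the deterministic hardness source. The plan is to use the standard argument that two-dimensional (covering) knapsack is NP-hard even with costs bounded by a polynomial in $n$ (e.g.\ unit costs with cardinality-type objectives against two numeric constraints, via an equipartition encoding: $\sum_{i\in S}a_i\ge B$ and $\sum_{i\in S}(M-a_i)\ge |S|M-B$ with minimum $|S|$ equal to $n/2$ iff a balanced partition exists). Then an FPTAS run at $\eps=1/(2n)$ recovers the exact optimum, contradicting $P\ne NP$. If that encoding needs adjustment, we fall back on citing the covering counterpart of~\citet{magazine1984}, since complementation $S\mapsto[n]\setminus S$ maps two-dimensional packing to two-dimensional covering with cost $w([n])-w(S)$. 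That complement does not preserve relative approximation, however, so the small-cost exact-recovery route is the one to carry out.
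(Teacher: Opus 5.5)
\textbf{Verdict: there is a genuine gap.} Your lottery half matches the paper. You set $X_i=B(w_i,q)$ with $q=\Theta(\eps/n)$, use the sandwich of Lemma~\ref{lem:lottery-sand}, and the chain through $\OPT_E^{\min}\le\E[\max S^\dagger]$ is exactly Lemma~\ref{lem:lottery-transfer-min}, giving a $(1+O(\eps))$-approximation for deterministic min-sum on the same $\F$. The gap is in the deterministic lower bound, which is the step that carries the theorem; the paper proves it separately as Lemma~\ref{lem:cover-nofptas}. There is no citable ``covering analogue'' of \citet{magazine1984} to open with or fall back on. Complementation does not preserve relative error, as you note yourself, and one-dimensional covering knapsack even has an FPTAS. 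So the encoding has to be right, and as written it is not. The demand $|S|M-B$ depends on $S$, and $\sum_{i\in S}(M-a_i)\ge|S|M-B$ is literally $a(S)\le B$, a packing constraint. Together with $a(S)\ge B$ your family is $\{S:a(S)=B\}$, which is not an up-set and not a two-dimensional covering family. You also never fix $M$.

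The fix is to make the demand a constant. Take coefficients $(a_i,\,M-a_i)$ and demands $(B,\,\tfrac n2M-B)$, with $B=\tfrac12\sum_i a_i$ and $M>\sum_i a_i$. For $|S|=s$ the second row reads $a(S)\le B-(\tfrac n2-s)M$. This is infeasible for $s<n/2$ and becomes $a(S)\le B$ at $s=n/2$, and $[n]$ is feasible. With unit costs the optimum is therefore $n/2$ if an equal-cardinality partition exists, and at least $n/2+1$ otherwise. A ratio below $1+2/n$ decides this, and composing with the lottery transfer at accuracy $\Theta(1/n)$ gives that ratio in polynomial time.

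With this correction your route is a valid single-instance alternative to the paper's construction. The paper reduces from plain Subset Sum, which does not fix the cardinality. It therefore builds one instance $I^{(m)}$ per $m$, with coefficients $(y_i+W,\,W-y_i)$ and demands $(T+mW,\,mW-T)$, where the first row rules out $|S|<m$. Your version instead relies on the standard NP-hardness of Partition with the side condition $|A'|=|A|/2$, and its second row alone rules out small sets.
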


The lottery of Lemma~\ref{lem:lottery-transfer-min} transfers a deterministic FPTAS lower bound.
Lemma~\ref{lem:cover-nofptas} is that deterministic bound: an explicit Subset Sum reduction to unit-cost two-dimensional covering knapsack, with no complementation of a packing instance.
We do not invoke~\citet{kulik2010} for covering; that paper is for packing maximization.
One-dimensional covering knapsack already has a min-sum FPTAS, so it cannot be the witness.
On covering, a generic PTAS is therefore essentially optimal as far as an FPTAS is concerned.

\section{Instantiations}
\label{sec:inst}

Table~\ref{tab:inst} records the main packing ladder and the exact-sum boundary.
The first column is Theorem~\ref{thm:transfer}.
The second is Theorem~\ref{thm:packing} when $\F$ is fixed-$d$ packing.
The third is Corollary~\ref{cor:allk}.
Packing always fills the first column (max-sum PTAS). Exact-sum fills the first column by Proposition~\ref{prop:es-to-ms}.
The two PTAS columns are incomparable: binary knapsack fills only the second; DAG paths fill only the third.
Cuts fill only the first.
The minimization ladder is the same table with Theorem~\ref{thm:min-transfer}, Theorem~\ref{thm:covering}, and Corollary~\ref{cor:min-allk}: covering knapsack fills the covering-PTAS column; DAG paths fill only exact-sum; a lower cardinality bound $|S|\ge m$ is covering with $d=1$.

\begin{table}[t]
\centering
\caption{Non-adaptive $\E[\Topk]$ under concrete $\F$.
Ratios are factors $\le 1$.
``Pack.\ PTAS'' is Theorem~\ref{thm:packing}; ``ES PTAS'' is Corollary~\ref{cor:allk}.
Submodular greedy gives $1-1/e$ on cardinality, knapsack, and matroids~\citep{nemhauser1978,sviridenko2004,calinescu2011}, and does not apply to cuts.}
\label{tab:inst}
\scriptsize
\begin{tabular}{@{}p{2.6cm}p{2.8cm}p{2.2cm}p{2.6cm}@{}}
\toprule
Constraint $\F$ & Thm.~\ref{thm:transfer} & Pack.\ PTAS & ES PTAS \\
\midrule
Cardinality
  & exact, $\tfrac{1}{2(1+\eps)}$
  & \textbf{PTAS} ($d=1$)
  & \textbf{PTAS} \\
Graphic matroid
  & exact, $\tfrac{1}{2(1+\eps)}$
  & ---
  & \textbf{PTAS} if exact-weight \\
Matching
  & exact, $\tfrac{1}{2(1+\eps)}$
  & ---
  & \textbf{PTAS} \\
Binary $1$DKP
  & FPTAS
  & \textbf{PTAS} ($d=1$)
  & --- \\
Unary knapsack
  & FPTAS
  & \textbf{PTAS}
  & \textbf{PTAS} \\
DAG $s$--$t$ paths
  & exact, $\tfrac{1}{2(1+\eps)}$
  & ---
  & \textbf{PTAS} \\
Independent sets
  & none
  & ---
  & --- \\
Cuts $\delta(U)$
  & GW $\approx 0.878$
  & ---
  & --- \\
$2$DKP
  & max-sum PTAS
  & \textbf{PTAS}; no FPTAS/EPTAS
  & --- \\
Covering $1$DKP
  & min-sum FPTAS
  & min.\ \textbf{PTAS} ($d=1$)
  & --- \\
Covering $2$DKP
  & min-sum PTAS
  & min.\ \textbf{PTAS}; no FPTAS
  & --- \\
\bottomrule
\end{tabular}
\end{table}

On cardinality both PTAS columns apply; the transfer is weaker than $1-1/e$ and weaker than the EPTAS of~\citet{segev2021} for $k=1$.
On cuts only the transfer applies.
Binary knapsack and two-dimensional knapsack sit on the packing ladder and not on exact-sum; DAG paths sit on exact-sum and not on packing.

\section{Conclusion and Open Problems}
\label{sec:concl}

The main line is a four-rung ladder on fixed-dimensional packing, and the covering counterpart for minimization.
There is no single constant-factor algorithm on every membership family; an $\alpha$-approximate max-sum oracle yields ratio $\alpha/((1+\alpha)(1+\eps))$ for every $k$, and every packing family inherits that constant from a deterministic max-sum PTAS.
A constant-factor max-sum oracle is not enough for a PTAS: cuts are a witness (Theorem~\ref{thm:noptas}).
The difference from the $k=1$ maxima literature is the signature, not the surplus scale: void probability is replaced by two approximations to $p\mapsto\E[\min(k,N(p))]$, which then sit on the same packing and exact-sum realization primitives.
Occupancy and quantile signatures, realized in the packing LP, yield a PTAS for every $k\ge 1$ on every fixed-$d$ packing family.
Two-dimensional knapsack lies in the class and rules out a generic FPTAS unless $P=NP$ and a generic EPTAS unless $W[1]=FPT$.
In this setting a generic PTAS is essentially optimal: there is no uniformly better scheme that uses only the packing LP.

The covering side repeats the same ladder with the inequalities reversed.
A min-sum oracle yields a constant factor (Theorem~\ref{thm:min-transfer}); every fixed-$d$ covering family inherits that constant from a deterministic min-sum PTAS.
The same signatures, realized by raising fractionals in the covering LP, yield a PTAS for every $k\ge 1$ (Theorem~\ref{thm:covering}).
Two-dimensional covering knapsack rules out a generic FPTAS (Theorem~\ref{thm:min-limits}).
A min-sum constant on an arbitrary up-set is not enough: vertex cover is the covering analogue of the cut witness.

Query-weight exact-sum is a second PTAS boundary in both directions, incomparable with packing and with covering (DAG paths versus binary knapsack versus covering knapsack).

The map $S\mapsto\E[\Topk(S)]$ is monotone submodular (Appendix~\ref{app:submod}).
Constant-factor adaptivity gaps for stochastic monotone submodular maximization and for stochastic probing~\citep{asadpour2016,gupta2016probing} therefore compare the adaptive and non-adaptive optima by a constant.
The non-adaptive constant-factor schemes of Theorems~\ref{thm:transfer} and~\ref{thm:min-transfer} are, for that reason, also constant-factor approximations of the corresponding adaptive optima; they extend to the adaptive model with no further loss beyond the gap.
The packing and covering PTAS do not transfer in the same way.
An adaptive PTAS for $\E[\Topk]$ under a general packing or covering family---or even a proof that the signature decoder yields one---remains open.
The adaptive ProbeTop-$k$ PTAS of~\citet{fu2018} is for the same sum objective, but only under a cardinality probe budget and only for constant $k$; it does not extend the signature decoder to a general $\F$, nor supply a non-adaptive packing PTAS.

The hardness statements are uniform.
The following per-instance questions remain open.
\begin{itemize}[leftmargin=1.4em,itemsep=2pt]
\item
Which families necessarily admit no constant-factor algorithm for $\E[\Topk]$?
\item
Which families that already have a constant-factor max-sum algorithm necessarily admit no PTAS?
\item
Which exact-sum families, outside packing and covering, admit an EPTAS or an FPTAS? Cardinality for $k=1$ already has an EPTAS~\citep{segev2021,segev2024}.
The same question applies to $\min_{S\in\F}\E[\Topk(S)]$.
\item
Does $\E[\Topk]$ admit an adaptive PTAS on every fixed-$d$ packing family (resp.\ an adaptive minimization PTAS on every fixed-$d$ covering family)?
\end{itemize}

\bibliographystyle{unsrtnat}
\bibliography{refs}

\appendix

\section{Proofs for Section~\ref{sec:prelim}}
\label{app:prelim}

\begin{proof}[Proof of Lemma~\ref{lem:layer}]
Let $x_{(1)}\ge\cdots\ge x_{(m)}$ and $x_{(j)}=0$ for $j>m$, so $\Topk(x)=\sum_{j=1}^k x_{(j)}$.
Each $y\ge 0$ satisfies $y=\int_0^\infty\mathbf{1}_{\{y>t\}}\,dt$.
Tonelli gives
\[
\Topk(x)=\int_0^\infty\#\{j\in[k]:x_{(j)}>t\}\,dt.
\]
The integrand equals $\min(k,\#\{i:x_i>t\})$: if $x_{(k)}>t$ both sides are $k$; if $x_{(k)}\le t$ both sides equal the number of coordinates strictly above $t$.
On a $\delta$-grid the integrand of~\eqref{eq:layer} is constant on each $[r\delta,(r+1)\delta)$.
The variant that counts $\#\{i:x_i\ge r\delta\}$ differs from $\Topk(x)$ by at most $k\delta$.
\end{proof}

\begin{proof}[Proof of Lemma~\ref{lem:pb}]
Induction on the number of summands.
The empty sum is $N=0$.
Write $N'=N+Z$ with $Z\sim B(1,p)$ independent of $N$, and set $A=\E[\min(k,N)]$, $q=\Pr[N<k]$.
Then $\E[\min(k,N')]=A+pq$.
The inductive claim $A(k+\mu)\ge k\mu$ rearranges to the desired
$(A+pq)(k+\mu+p)\ge k(\mu+p)$ as soon as $k-A\le q(k+\mu+p)$.
Here $k-A=\E[(k-N)_+]\le kq$, and $kq\le q(k+\mu+p)$, so the comparison holds.
\end{proof}

\begin{proof}[Proof of Lemma~\ref{lem:f-lip}]
Coupling $N$ and $N^{(-i)}$ by adding an independent Bernoulli of parameter $p_i$ yields
$\partial f/\partial p_i=\Pr[N^{(-i)}<k]\in[0,1]$.
Thus $f$ is nondecreasing, and each partial derivative is itself nonincreasing in every coordinate, so $f$ is concave.
The bound $\lvert f(p)-f(q)\rvert\le\|p-q\|_1$ is the mean-value form of those partials (or, equivalently, a coordinatewise coupling).
Finally $0\le\min(k,N)\le N$ and $\E[N]=\sum_i p_i$ give $0\le f(p)\le\min(k,\sum_i p_i)$.
\end{proof}

\begin{proof}[Proof of Lemma~\ref{lem:pb-chernoff}]
$\mathrm{Var}(N)\le\mu$ gives both tails via the standard multiplicative Chernoff bound (or Bernstein); the upper tail uses the denominator $2(\mu+t/3)$.
For $\mu\ge 2k+C_0\log(k/\eps)$ one has $t=\mu-k$ and $t^2/(2\mu)\ge 3\log(k/\eps)$ after enlarging $C_0$, so $\Pr[N<k]\le\eps^2/k$ and $k-f(p)\le\eps^2$.
\end{proof}

\begin{proof}[Proof of Lemma~\ref{lem:lecam}]
The total-variation bound is Le Cam's theorem.
The function $g$ is $k$-Lipschitz in total variation in the sense that $\lvert\E[g(N+M)]-\E[g(Z+M)]\rvert\le k\,d_{\mathrm{TV}}(N,Z)$.
\end{proof}

\begin{proof}[Proof of Lemma~\ref{lem:pois-mean}]
Coupling Poisson processes, incrementing $\lambda$ by $d\lambda$ adds a point with intensity $d\lambda$, which changes $\min(k,Z_\lambda+M)$ by at most $1$ in expectation per unit of intensity.
\end{proof}

\section{Proofs for Section~\ref{sec:surplus}}
\label{app:surplus}
\label{app:transfer}

\begin{proof}[Proof of Lemma~\ref{lem:det-sand}]
Each $x_i\le T+\Tr(x_i,T)$, so the $k$ largest coordinates sum to at most $kT+\sum_i\Tr(x_i,T)$.
For the lower bound, let $m_+=\#\{i:x_i>T\}$ and $Y=\sum_i\Tr(x_i,T)$.
If $m_+\ge k$ then every one of the $k$ largest exceeds $T$.
If $m_+\le k$ then $\Topk(x)\ge Y+Tm_+\ge Y\ge kT$.
\end{proof}

\begin{proof}[Proof of Lemma~\ref{lem:tail}]
The claim rearranges as $\int_0^T k\mu/(k+\mu)\,dt\ge\int_T^\infty \mu^2/(k+\mu)\,dt$.
Both $u\mapsto ku/(k+u)$ and $u\mapsto u^2/(k+u)$ are nondecreasing.
Set $\alpha_-=\inf_{t<T}\mu(t)\ge\mu(T)$ and $\alpha_+=\sup_{t>T}\mu(t)\le\mu(T)$.
The two sides are at least $T\cdot k\alpha_-/(k+\alpha_-)$ and at most $kT\cdot\alpha_+/(k+\alpha_+)$ respectively, and $\alpha_-\ge\alpha_+$.
\end{proof}

\begin{proof}[Proof of Lemma~\ref{lem:one-set}]
Let $N(t)=\#\{i:X_i>t\}$ and $\mu(t)=\E[N(t)]$.
Independence makes $N(t)$ Poisson-binomial.
Lemmas~\ref{lem:pb} and~\ref{lem:layer} and Tonelli give
\[
\E[\Topk(X)]=\int_0^\infty\E[\min(k,N(t))]\,dt\ge\int_0^\infty\frac{k\mu(t)}{k+\mu(t)}\,dt.
\]
The hypothesis is $\int_T^\infty\mu=kT$, and $\mu$ is nonincreasing, so Lemma~\ref{lem:tail} applies.
If instead $\int_T^\infty\mu\ge kT$, the continuous map $U\mapsto\int_U^\infty\mu-kU$ is nonnegative at $U=T$ and tends to $-\infty$, so it has a root $T'\ge T$; the equality case at $T'$ gives $\E[\Topk]\ge kT'\ge kT$.
\end{proof}

\begin{proof}[Proof of Theorem~\ref{thm:surplus}]
Lemma~\ref{lem:det-sand} yields $\E[\Topk(S)]\le kT+\sum_{i\in S}\E[\Tr(X_i,T)]\le 2kT$ for every $S\in\F$.
Lemma~\ref{lem:one-set} on $S^*$ yields the lower bound.
\end{proof}

\begin{proof}[Proof of Theorem~\ref{thm:transfer}]
Part~(i) is the lottery reduction of Appendix~\ref{app:hardness}, instantiated on independent sets.
For part~(ii), let the loop halt at $T_0$ with output $S_0$.
Then $\sum_{i\in S_0}\E[\Tr(X_i,T_0)]>kT_0$ (or $T_0=0$, in which case the instance is identically zero).
Lemma~\ref{lem:one-set} yields $\E[\Topk(S_0)]\ge kT_0$.

Write $T_1:=T_0(1+\eps)$ for the previous height (if the loop never ran, every surplus at $T_0$ is already larger than $kT_0$ and the same comparison applies with $T_1=T_0$).
At $T_1$ the while-condition still held, so the oracle set $S_1$ satisfied $\E[\Tr(S_1,T_1)]\le kT_1$.
Because $\mathcal{A}$ is an $\alpha$-approximation,
\[
\max_{S\in\F}\sum_{i\in S}\E[\Tr(X_i,T_1)]
\le \frac{kT_1}{\alpha}.
\]
Lemma~\ref{lem:det-sand} and taking expectations give, for every $S\in\F$,
\[
\E[\Topk(S)]
\le kT_1+\sum_{i\in S}\E[\Tr(X_i,T_1)]
\le kT_1\Bigl(1+\frac{1}{\alpha}\Bigr)
=kT_0(1+\eps)\frac{1+\alpha}{\alpha}.
\]
Dividing the lower bound $kT_0$ by this upper bound produces the stated ratio.
The number of iterations is $O(\eps^{-1}\log(\mathrm{range}))$ as in the $k=1$ search: after that many geometric steps $T$ is smaller than every positive truncated mean that can occur, and the instance is treated as identically zero.
\end{proof}

\begin{proof}[Proof of Theorem~\ref{thm:min-transfer}]
If every law is identically zero the claim is vacuous.
Let the loop halt at $T_0$ with output $S_0$.
Then $\sum_{i\in S_0}\E[\Tr(X_i,T_0)]<\alpha kT_0$ (or $T_0$ exceeds every atom, in which case the surplus is zero).
Lemma~\ref{lem:det-sand} yields
\[
\E[\Topk(S_0)]
\le
kT_0+\sum_{i\in S_0}\E[\Tr(X_i,T_0)]
<
(1+\alpha)kT_0.
\]
Write $T_1:=T_0/(1+\eps)$ for the previous height (if the loop never ran, take $T_1=T_0$).
At $T_1$ the while-condition held, so the oracle set $S_1$ satisfied $\E[\Tr(S_1,T_1)]\ge\alpha kT_1$.
If $\OPT_{\min}<kT_1$, an optimum $S^*$ would satisfy $\E[\Topk(S^*)]<kT_1$, hence $\sum_{i\in S^*}\E[\Tr(X_i,T_1)]<kT_1$ by the contrapositive of Lemma~\ref{lem:one-set}.
The true min-sum at height $T_1$ would then be less than $kT_1$, and $\mathcal{A}$ would return a set with surplus less than $\alpha kT_1$, a contradiction.
Thus $\OPT_{\min}\ge kT_1=kT_0/(1+\eps)$.
Dividing the upper bound $(1+\alpha)kT_0$ by this lower bound produces the stated ratio.
The number of iterations is $O(\eps^{-1}\log(\mathrm{range}))$ as in Algorithm~\ref{alg:transfer}.
\end{proof}

\section{Proof of Proposition~\ref{prop:es-to-ms}}
\label{app:es-to-ms}

\begin{proof}[Proof of Proposition~\ref{prop:es-to-ms}]
Drop every ground-set element that lies in no member of $\F$.
Let $W:=\max_i w_i$.
Nonnegativity yields $\OPT_w\ge W$: the heaviest remaining element belongs to some feasible set.
If $W=0$ return any nonempty feasible set (or $\emptyset$).
Otherwise set $\delta:=\eps W/n$ and $w_i':=\bigl\lfloor w_i/\delta\bigr\rfloor$.
Then $0\le w_i'\le n/\eps$ and $0\le w_i-\delta w_i'<\delta$.
For every $S\subseteq[n]$,
\[
0\le w(S)-\delta\,w'(S)<n\delta=\eps W\le\eps\,\OPT_w.
\]
Query-weight exact-sum on $\{w_i'\}$ decides, for each target $t\in\{0,1,\dots,n\lfloor n/\eps\rfloor\}$, whether some $S\in\F$ has $w'(S)=t$, in time $\poly(\langle\F\rangle)\,(n/\eps)^{O(1)}$ per call.
Trying all $t$ (or scanning downward from the largest feasible $t$) returns a set $S'$ maximizing $w'$.
If $S^*$ is a max-sum optimum of the original weights,
\[
w(S')\ge\delta\,w'(S')\ge\delta\,w'(S^*)>w(S^*)-\eps\,\OPT_w=(1-\eps)\,\OPT_w.
\]
The number of calls is $O(n^2/\eps)$, so the total time is $\poly(\langle\F\rangle,n,1/\eps)$.
This is an FPTAS for max-sum, hence a constant-factor approximation (e.g.\ $\eps=1/2$).
Theorem~\ref{thm:transfer}(ii) then applies.
Scanning the smallest feasible $t$ instead of the largest is a min-sum FPTAS on the same rounded instance, so Theorem~\ref{thm:min-transfer} likewise applies on every query-weight family.

The original magnitude of $w$ may be $2^{\Theta(L)}$ for bit length $L$.
The procedure never enumerates targets on that scale: it enumerates targets only after rounding.
Exact max-sum on the original $w$ need not lie in $\mathrm{P}$.
\end{proof}

\section{Proofs for Section~\ref{sec:disc}}
\label{app:disc}

\begin{proof}[Proof of Lemma~\ref{lem:tailmass}]
Pathwise $\Topk\ge M\cdot\min(k,N(M))$, so $\E[\min(k,N(M))]\le cW/M=c\eps$.
Lemma~\ref{lem:pb} yields $k\mu(M)/(k+\mu(M))\le c\eps$.
The hypothesis $c\eps<k/2$ rearranges to $\mu(M)\le c\eps\cdot k/(k-c\eps)\le 2c\eps$.
For $t\ge M$ one has $\mu(t)\le\mu(M)\le 2c\eps$, and
\[
\Pr[\max_i X_i>t]=1-\prod_i(1-p_i(t))\ge\mu(t)-\mu(t)^2\ge\mu(t)\bigl(1-2c\eps\bigr).
\]
Hence $\E[\Tr(\max_i X_i,M)]\ge(1-2c\eps)\Sigma$.
The left side is at most $\E[\Topk]\le cW$, so $\Sigma\le cW/(1-2c\eps)\le 2cW$ once $c\eps\le 1/4$.
\end{proof}

\begin{proof}[Proof of Lemma~\ref{lem:fold}]
Write $q_i:=\tau_i/M$ and $d_i(t):=\bar p_i(t)-p_i(t)\ge 0$ for $t<M$, and split
\begin{align*}
\E[\Topk(X)]
&=\int_0^M\E[\min(k,N)]\,dt+\int_M^\infty\E[\min(k,N)]\,dt,\\
\E[\Topk(\bar X)]
&=\int_0^M\E[\min(k,\bar N)]\,dt.
\end{align*}
Lemma~\ref{lem:tailmass} gives $\Sigma:=\sum_i\tau_i\le 2cW$ and $q_i\le 2c\eps$.
For $t\ge M$, Lemma~\ref{lem:pb} and $\mu(t)\le 2c\eps$ give
$\E[\min(k,N(t))]\ge\mu(t)\,(1-2c\eps/k)$,
so the tail integral lies in $[(1-2c\eps/k)\Sigma,\Sigma]$.

Lemma~\ref{lem:f-lip} yields $\Delta(t):=\E[\min(k,\bar N(t))]-\E[\min(k,N(t))]\le\sum_i d_i(t)\le\sum_i q_i=\Sigma/M$.
Hence $\int_0^M\Delta\le\Sigma$ and
\[
\E[\Topk(\bar X)]
\le
\E[\Topk(X)]-\int_M^\infty\E[\min(k,N)]+\Sigma
\le
\E[\Topk(X)]+O\bigl((c^2\eps)/k\bigr)\,W.
\]

For the matching lower bound, concavity of $f$ along the segment from $p(t)$ to $\bar p(t)$ gives
$\Delta(t)\ge\sum_i d_i(t)\,\Pr[\bar N^{(-i)}(t)<k]\ge\bigl(\sum_i d_i(t)\bigr)\Pr[\bar N(t)<k]$.
If the cap in Definition~\ref{def:fold} does not bind then $\sum_i d_i(t)=\Sigma/M$ constantly.
For each $i$ write $C_i:=\{t\in[0,M):p_i(t)>1-q_i\}$.
On $C_i$ one has $p_i(t)>1-2c\eps$, hence $\E[\min(k,N(t))]\ge 1-O(\eps)$.
Integrating against $\E[\Topk]\le cW$ therefore gives $\lvert C_i\rvert=O(\eps M)$ (using $M=W/\eps$).
When the cap binds, $d_i(t)=1-p_i(t)<q_i$, so the mass lost to the cap is
\[
\sum_i\int_{C_i}\bigl(q_i-d_i(t)\bigr)\,dt
\le
\sum_i q_i\lvert C_i\rvert
=
O(\eps)\,M\sum_i q_i
=
O(\eps)\,\Sigma.
\]
Thus $\int_0^M\sum_i d_i=\Sigma-O(\eps)\Sigma$.
Since $\Pr[\bar N<k]\le 1$,
\[
\int_0^M\Delta
\ge
\int_0^M\Pr[\bar N<k]\sum_i d_i\,dt
\ge
\frac{\Sigma}{M}\int_0^M\Pr[\bar N<k]\,dt-O(\eps)\Sigma.
\]
As in the $k$-th order statistic identity,
$\int_0^M\Pr[\bar N\ge k]\,dt=\E[\min(M,X_{(k)}(\bar X))]\le\E[\Topk(\bar X)]\le cW+O((c^2\eps)/k)W$,
so $\int_0^M\Pr[\bar N<k]\,dt\ge M\bigl(1-O(c\eps)\bigr)$.
Therefore $\int_0^M\Delta\ge(1-O(c\eps))\Sigma$ and
\[
\E[\Topk(\bar X)]
\ge
\E[\Topk(X)]-\Sigma+(1-O(c\eps))\Sigma
\ge
\E[\Topk(X)]-O(c^2\eps)\,W.
\]
\end{proof}

\begin{proof}[Proof of Lemma~\ref{lem:fold-back}]
Write $I_{\mathrm{low}}:=\int_0^M\E[\min(k,N)]\,dt$, $\Sigma:=\sum_i\tau_i$, and $U:=\E[\Topk(\bar X)]$.
If some $\tau_i\ge M$ then $\bar p_i\equiv 1$ on $[0,M)$, so $U\ge M=W/\eps>(1+4\eps)W$, contradicting $U\le(1+4\eps)W$.
Thus $q_i:=\tau_i/M<1$ for every $i$.
Stochastic domination on $[0,M]$ gives $I_{\mathrm{low}}\le U$.
On $C_i:=\{t\in[0,M):p_i(t)>1-q_i\}$ one has $\bar p_i=1$, hence $\E[\min(k,\bar N(t))]\ge 1$ and $\lvert C_i\rvert\le U\le(1+4\eps)\eps M$.
The mass lost to the cap is at most
\[
\sum_i q_i\lvert C_i\rvert
\le
(1+4\eps)\eps\,\Sigma.
\]
Write $\gamma:=(1+4\eps)\eps$ and $d_i=\bar p_i-p_i$.
Then $\int_0^M\bigl(\Sigma/M-\sum_i d_i\bigr)\,dt\le\gamma\Sigma$, so
\[
\int_0^M\Pr[\bar N<k]\sum_i d_i\,dt
\ge
\frac{\Sigma}{M}\int_0^M\Pr[\bar N<k]\,dt-\gamma\Sigma.
\]
The identity $\int_0^M\Pr[\bar N\ge k]=\E[\min(M,X_{(k)}(\bar X))]\le U$ yields
$\int_0^M\Pr[\bar N<k]\,dt\ge M-U$, and therefore
\[
U
=\E[\Topk(\bar X)]
\ge
I_{\mathrm{low}}+\Sigma\Bigl(1-\frac{U}{M}-\gamma\Bigr)
\ge
I_{\mathrm{low}}+(1-4\eps)\Sigma,
\]
where the last step uses $U/M+\gamma\le 2(1+4\eps)\eps\le 4\eps$ for $\eps\le 1/20$.
Hence $\Sigma\le U/(1-4\eps)$.
Pathwise $\E[\Topk(X)]=I_{\mathrm{low}}+I_{\mathrm{high}}$ with $I_{\mathrm{high}}\le\Sigma$, so
\[
\E[\Topk(X)]
\le
I_{\mathrm{low}}+\Sigma
\le
U+4\eps\,\Sigma
\le
\frac{U}{1-4\eps}.
\]
For $\eps\le 1/20$ one has $4\eps/(1-4\eps)\le 5\eps$, hence $U/(1-4\eps)\le(1+5\eps)U$.
\end{proof}

\begin{proof}[Proof of Corollary~\ref{cor:min-allk}]
Proposition~\ref{prop:es-to-ms} supplies a min-sum FPTAS.
Run Algorithm~\ref{alg:min-transfer} with $\alpha=1+\eps$ and set $W:=\E[\Topk(S_0)]$, so $\OPT_{\min}\le W\le(1+\eps)(2+\eps)\,\OPT_{\min}$.
Fold and grid at this $W$, after replacing the internal accuracy by $\Theta(\eps)$ small enough that Theorem~\ref{thm:disc} with $c=1$ costs at most $\eps W$ (the substitution of Corollary~\ref{cor:ptas} absorbs the $k$ in the grid).
A min-optimum $S^*$ satisfies $\E[\Topk(S^*)]\le W$, so that application is legal.
Run Algorithm~\ref{alg:ptas} or~\ref{alg:quantile} and return the realized set of smallest folded $\E[\Topk]$.
Let $\bar S$ be optimal for the folded laws and let $\tilde S$ be the enumerated set with the same signature.
The decoder matches signatures, so
\[
\E[\Topk(\tilde S;\bar X)]
\le
\E[\Topk(\bar S;\bar X)]+\eps W
\le
\E[\Topk(S^*;\bar X)]+\eps W
\le
\OPT_{\min}+2\eps W.
\]
The right-hand side is at most $(1+2\eps(1+\eps)(2+\eps))\,\OPT_{\min}\le(1+4\eps)W$ for $\eps\le 1/20$.
Lemma~\ref{lem:fold-back} with this $U\le(1+4\eps)W$ yields
\[
\E[\Topk(\tilde S;X)]
\le
\frac{\OPT_{\min}+2\eps W}{1-4\eps}
\le
(1+O(\eps))\,\OPT_{\min}.
\]
(The same bound holds for $\bar S$.)
\end{proof}

\begin{proof}[Proof of Lemma~\ref{lem:grid}]
Pathwise $0\le X_i-\tilde X_i<\delta$, so the $k$ largest drop by less than $k\delta$ in total.
The number of grid points on $[0,M]$ is $M/\delta=1/\eps^2$.
\end{proof}

\begin{proof}[Proof of Theorem~\ref{thm:disc}]
Combine Lemmas~\ref{lem:fold} and~\ref{lem:grid}.
\end{proof}

\section{Proofs for Section~\ref{sec:sig}}
\label{app:sig}

\begin{proof}[Proof of Lemma~\ref{lem:layer-unsat}]
Let $\mu_t$ be the tiny mass.
Quantization gives $\lvert\mu_t-\gamma\sigma\rvert\le n\gamma=\eps^2$.
Lemma~\ref{lem:lecam} with $M=N(p^{\mathrm{big}})$ costs $k\sum_{\mathrm{tiny}}p_i^2\le k\eta\mu_t\le k\eta\Lambda=\eps^2$.
Lemma~\ref{lem:pois-mean} moves $\mu_t$ to $\gamma\sigma$ at cost $\eps^2$.
Lemma~\ref{lem:f-lip} moves $p^{\mathrm{big}}$ to the bucket representatives at cost at most $(\Lambda/\eta)\beta=\eps^2/2$.
The four errors sum to at most $4\eps^2$, and the resulting law is $\widehat N$.
\end{proof}

\begin{proof}[Proof of Lemma~\ref{lem:layer-sat}]
If $\mu\ge\Lambda$, Lemma~\ref{lem:pb-chernoff} applies.
If $\mu<\Lambda$, then $n_{\mathrm{big}}\le\mu/\eta\le\Lambda/\eta=k\Lambda^2/\eps^2$, so the proxy mean differs from $\mu$ by at most $n_{\mathrm{big}}\beta+n\gamma\le\eps^2/2+\eps^2$.
Saturation forces the proxy mean $\ge\Lambda$, hence $\mu\ge\Lambda-O(\eps^2)$.
Enlarge $C_0$ in Lemma~\ref{lem:pb-chernoff} by a constant so that $\mu\ge\Lambda-O(\eps^2)$ still yields $k-f(p)\le O(\eps^2)$.
\end{proof}

\begin{proof}[Proof of Theorem~\ref{thm:sig}]
The two sets share $(\sigma,(n_b))$ and the saturation bit at every layer, hence share $\widehat N_r$.
Lemmas~\ref{lem:layer-unsat} and~\ref{lem:layer-sat} give an $O(\eps^2)$ gap to the proxy on each layer.
Multiply by $\delta=\eps W$ and sum $h=1/\eps^2$ layers to obtain an $O(\eps)\,W$ gap, with no further factor of $k$.
\end{proof}

\section{Monotone submodularity}
\label{app:submod}

\begin{lemma}
\label{lem:submod}
The set function $S\mapsto\E[\Topk(S)]$ is nonnegative, monotone, and submodular.
\end{lemma}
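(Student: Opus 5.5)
The approach is to prove the three properties pathwise and then take expectations over the independent realizations. Nonnegativity and monotonicity are immediate. For a fixed realization $x\in\R_+^n$, write $g_x(S):=\Topk((x_i)_{i\in S})$. Adding an element to $S$ can only add a coordinate to the multiset whose $k$ largest entries are summed. The $k$ largest of a superset therefore dominate the $k$ largest of the subset entrywise after sorting, so $g_x(S)\le g_x(S\cup\{j\})$, and $g_x\ge 0$ because all coordinates are nonnegative. Since $\E[\Topk(S)]=\E[g_X(S)]$ and expectation preserves both properties, these two claims carry over.

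For submodularity, I would also argue pathwise, so that a nonnegative combination of submodular functions remains submodular after averaging. The cleanest route is the layer cake of Lemma~\ref{lem:layer}:
\[
g_x(S)=\int_0^\infty\min\bigl(k,\,\#\{i\in S:x_i>t\}\bigr)\,dt.
\]
For each fixed $t$, the map $S\mapsto\#\{i\in S:x_i>t\}$ is modular, namely $|S\cap U_t|$ with $U_t=\{i:x_i>t\}$. Composing a nonnegative modular function with the concave nondecreasing scalar map $u\mapsto\min(k,u)$ gives a submodular function. Concretely, the marginal gain of adding $j\notin S$ is $\mathbf 1[j\in U_t]\,\mathbf 1[|S\cap U_t|<k]$. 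This gain is nonincreasing in $S$, because $|S\cap U_t|$ only grows as $S$ grows.

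With the pointwise-in-$t$ statement in hand, I would finish in two steps. First, integrate over $t$ using Tonelli; the integrands are nonnegative and bounded by $k$ on the bounded range of the finitely supported laws. This shows that $g_x$ is submodular for every $x$. Second, take expectation over $X$, which preserves the inequality $g(A\cup\{j\})-g(A)\ge g(B\cup\{j\})-g(B)$ for $A\subseteq B$, $j\notin B$.

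Independence is not needed anywhere; the statement holds for any joint law. The only point requiring care, which I expect to be the mild obstacle, is the sign convention for $\min(k,\cdot)$ composed with a count: the argument must use that the count is modular and monotone, not merely submodular. An alternative check, via the identity $\Topk(x)=\min_t A_t(x)$ of Lemma~\ref{lem:path-proxy}, does not directly give submodularity, because a minimum of modular functions need not be submodular. For that reason I would rely on the layer-cake route.
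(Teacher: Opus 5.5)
Your proof is correct. It differs from the paper's in how you verify the pathwise diminishing-returns inequality.

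The paper also fixes a realization $x$, but it computes the marginal gain in closed form. Adding $e\notin S$ raises $\Topk$ by exactly $(x_e-x_{(k)}(S))_+$, with zero padding when $|S|<k$. Since the $k$-th order statistic $x_{(k)}(S)$ is nondecreasing in $S$, the gain is nonincreasing in $S$, and no integration is needed.

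You instead pass through the layer cake of Lemma~\ref{lem:layer}. You reduce to the threshold functions $S\mapsto\min(k,|S\cap U_t|)$, which are rank functions of the rank-$k$ uniform matroid restricted to the level sets $U_t$. You check diminishing returns level by level, then integrate in $t$ and average over $X$.

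The paper's route is shorter and gives an explicit formula for the marginal gain. Yours avoids any reasoning about which element is displaced from the top $k$ (ties, $|S|<k$). It also exhibits $g_x$ as a nonnegative mixture of matroid rank functions. This is the standard proof that a weighted matroid rank function is submodular; here $\Topk((x_i)_{i\in S})$ is the weighted rank of $S$ in the uniform matroid of rank $k$.

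One side remark is slightly off, though nothing in your argument depends on it. A concave nondecreasing function of any \emph{monotone} submodular function is already submodular. So monotonicity plus submodularity of the count would suffice; modularity is convenient but not essential.
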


\begin{proof}
Nonnegativity is immediate.
Monotonicity: enlarging $S$ can only increase the $k$ largest coordinates, pathwise.
Submodularity: it is enough to check the pathwise function $S\mapsto\Topk((x_i)_{i\in S})$ for a fixed $x\in\R_+^n$, then take expectations.
Let $A\subseteq B$ and $e\notin B$.
Write $\Delta(S)=\Topk(S\cup\{e\})-\Topk(S)$.
If $x_e$ does not enter the top $k$ of $S\cup\{e\}$, then $\Delta(S)=0$.
Otherwise $x_e$ replaces the current $k$-th order statistic of $S$, so $\Delta(S)=(x_e-x_{(k)}(S))_+$.
The $k$-th order statistic is nondecreasing in the set, hence $\Delta(A)\ge\Delta(B)$.
\end{proof}

The greedy algorithm for monotone submodular maximization under cardinality therefore yields a $(1-1/e)$-approximation~\citep{nemhauser1978}, and the continuous greedy algorithm extends the same ratio to matroids~\citep{calinescu2011}.
That baseline is independent of $k$ and of the laws, but it is not a PTAS.

\section{Proofs for Section~\ref{sec:quantile}}
\label{app:quantile}

\begin{proof}[Proof of Lemma~\ref{lem:path-proxy}]
Lemma~\ref{lem:det-sand} gives $A_t(x)\ge\Topk(x)$ for every $t\ge 0$.
Let $x_{(1)}\ge\cdots\ge x_{(m)}$ and set $t^*=x_{(k)}$ (value $0$ if $m<k$).
Then $N(x,t^*)=\#\{i:x_i>t^*\}\le k$ and $N'(x,t^*)=\#\{i:x_i\ge t^*\}\ge k$, so the $k$ largest coordinates consist of every value strictly above $t^*$ together with $k-N(x,t^*)$ copies of $t^*$.
Hence
\[
\Topk(x)
=\sum_{x_i>t^*}x_i+(k-N(x,t^*))t^*
=k t^*+\sum_i\Tr(x_i,t^*)
=A_{t^*}(x),
\]
and the minimum is attained.

Now fix an arbitrary $T$ and write $H=N(x,T)$, $M=N'(x,T)$, $z_i=\Tr(x_i,T)$.
If $H\ge k$ then $M\ge H\ge k$, so $(k-M)_+=0$, the $k$ largest coordinates all exceed $T$, and $A_T-\Topk$ equals the sum of the $H-k$ smallest positive $z_i$, which is at most $\frac{H-k}{H}\sum_i z_i$.
If $M<k$ then $H\le M<k$, so $(H-k)_+=0$ and every coordinate at least $T$ enters the top $k$; the remaining $k-M$ selected coordinates are at most $T$, hence $A_T-\Topk\le(k-M)T$.
If $H\le k\le M$ both terms vanish and the $k$ largest coordinates are exactly the coordinates above $T$ together with $k-H$ copies of $T$, so $A_T=\Topk$ as at $t^*$.
\end{proof}

\begin{proof}[Proof of Lemma~\ref{lem:c1}]
Write $H=N(X,T)$, $M=N'(X,T)$, $\mu=\E[H]$, $\mu'=\E[M]$, $Z_i=\Tr(X_i,T)$, and $V=\Topk(X)$.
The hypothesis is $\mu\le k\le\mu'$ (the slackened window is treated at the end).
Pathwise $A_T\ge V$, so the left-hand side of~\eqref{eq:c1} is $\E[A_T-V]$.
Lemma~\ref{lem:path-proxy} gives
\[
A_T-V
\le
T(k-M)_+
+\frac{(H-k)_+}{H}\sum_i Z_i.
\]

For the shortage, $\mu'\ge k$.
The coordinates $B'_i=\mathbf{1}_{\{X_i\ge T\}}$ are independent Bernoullis.
Let $q:=k/\mu'\in(0,1]$ and let $\xi_i$ be independent $\Bern(q)$, independent of $(X_j)_j$.
The thinned sum $Y:=\sum_i B'_i\xi_i$ is Poisson-binomial with mean $k$, and $Y\le M$ almost surely, so $(k-M)_+\le(k-Y)_+$.
Hence
\[
\E[(k-M)_+]
\le
\E\lvert Y-k\rvert
\le
\sqrt{\mathrm{Var}(Y)}
\le\sqrt{k},
\]
and $\E[T(k-M)_+]\le kT/\sqrt{k}$.

For the overflow write $B_i=\mathbf{1}_{\{X_i>T\}}$ and $H_{-i}=\sum_{j\neq i}B_j$.
The identity $Z_i>0\Rightarrow B_i=1$ gives
\[
Z_i\,\frac{(H-k)_+}{H}
=
Z_i\,\frac{(H_{-i}+1-k)_+}{H_{-i}+1}
\]
(the two sides vanish when $H=0$).
Independence of $Z_i$ and $H_{-i}$ therefore produces
\begin{equation}
\label{eq:loo}
\E\Bigl[\frac{(H-k)_+}{H}\sum_i Z_i\Bigr]
=
\sum_i\E[Z_i]\,\E\Bigl[\frac{(H_{-i}+1-k)_+}{H_{-i}+1}\Bigr].
\end{equation}
On $\{H_{-i}+1>k\}$ one has $1/(H_{-i}+1)\le 1/k$, so the inner expectation is at most $k^{-1}\E[(H_{-i}+1-k)_+]$.
Here $\E[H_{-i}]\le\mu\le k$ and $\mathrm{Var}(H_{-i})\le k$, hence
\[
\E[(H_{-i}+1-k)_+]
\le
\sqrt{\mathrm{Var}(H_{-i})}+(\E[H_{-i}]+1-k)_+
\le
\sqrt{k}+1.
\]
Substituting into~\eqref{eq:loo},
\[
\E\Bigl[\frac{(H-k)_+}{H}\sum_i Z_i\Bigr]
\le
\bigl(k^{-1/2}+k^{-1}\bigr)\sum_i\E[Z_i].
\]
Adding the shortage bound and using $A_T=kT+\sum_i Z_i$ yields~\eqref{eq:c1}.
A high-probability window of width $\sqrt{k\log m}$ is not used: $\E\lvert Y-k\rvert=O(\sqrt{k})$ already controls the shortage, and~\eqref{eq:loo} transfers the same $O(\sqrt{k})$ overflow to the unbounded $Z_i$.

If instead $\mu\le k+\delta$ and $\mu'\ge k-\delta$, thin $M$ to mean $k':=\max\{k-\delta,0\}$.
Then $\E[(k-M)_+]\le\delta+\sqrt{k'}$, which contributes an extra $T\delta\le(\delta/k)\,kT$ relative to $A_T$.
The overflow uses $\mathrm{Var}(H_{-i})\le k+\delta$ and $(\E[H_{-i}]+1-k)_+\le\delta+1$, so
$\E[(H_{-i}+1-k)_+]\le\sqrt{k+\delta}+\delta+1$, which adds $O((\delta+1)/k)$ to $\eta_k$.
\end{proof}

\begin{proof}[Proof of Lemma~\ref{lem:mult-round}]
Write $X'_i=X_i\mathbf{1}_{\{X_i\ge\tau\}}$.
Zeroing every atom below $\tau$ drops each of the $k$ largest coordinates by less than $\tau$, so $0\le\Topk(X)-\Topk(X')<k\tau\le\eps W$.
On $\{X_i\ge\tau\}$ the rounded value $Y_i$ satisfies $Y_i\le X_i<(1+\eps)Y_i$, hence $Y_i>X_i/(1+\eps)$.
Therefore $\Topk(Y)\ge\Topk(X')/(1+\eps)$ and $\Topk(Y)\le\Topk(X)$.
The surviving atoms lie in $[\tau,W/\eps]$, so the number of grid points is
\[
1+\log_{1+\eps}\frac{W/\eps}{\tau}
=
O\bigl(\eps^{-1}\log(n/\eps)\bigr).
\]
\end{proof}

\section{Packing LP realization}
\label{app:packing}

\begin{lemma}[Dropping fractionals]
\label{lem:lp-drop}
Let $\F$ be a $d$-dimensional packing family, let $\sigma_i\in\R_+^D$, and let $\tau_i\in\R_+^C$.
Suppose the polytope
\begin{equation}
\label{eq:sig-lp}
\begin{aligned}
Ax&\le b,\\
\lvert (\sigma^\top x)_\ell-z_\ell\rvert&\le\delta_\ell
&&(\ell=1,\dots,D),\\
(\tau^\top x)_c&\ge\gamma_c
&&(c=1,\dots,C),\\
0&\le x\le 1
\end{aligned}
\end{equation}
is nonempty.
A basic feasible solution has at most $F:=d+2D+C$ strictly fractional coordinates.
Zeroing those coordinates yields an integral $S\in\F$.
Writing $\Sigma(S)$ and $\Gamma(S)$ for the realized $\sigma$- and $\tau$-sums,
\[
\bigl\lvert\Sigma_\ell(S)-z_\ell\bigr\rvert
\le
\delta_\ell+F\max_{i:\,0<x_i<1}\sigma_{i\ell},
\qquad
\Gamma_c(S)
\ge
\gamma_c-F\max_{i:\,0<x_i<1}\tau_{ic}.
\]
\end{lemma}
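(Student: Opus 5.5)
The proof is the standard basic-feasible-solution counting argument, followed by a one-line accounting of what zeroing does to each sum.

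\emph{Counting fractionals.} First rewrite each slab $\lvert(\sigma^\top x)_\ell-z_\ell\rvert\le\delta_\ell$ as two linear inequalities, $(\sigma^\top x)_\ell\le z_\ell+\delta_\ell$ and $-(\sigma^\top x)_\ell\le-(z_\ell-\delta_\ell)$. The polytope in \eqref{eq:sig-lp} then consists of $d+2D+C=F$ general inequalities plus the box $0\le x\le 1$. It is bounded (it lies inside the box), so if it is nonempty it has a vertex $x$. At a vertex of a polytope in $\R^n$, $n$ linearly independent constraints are tight. Every strictly fractional coordinate $0<x_i<1$ makes both box constraints for $i$ slack. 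Hence the number of tight box constraints is at most $n-\#\{i:0<x_i<1\}$. Since at least $n$ tight constraints are needed in total, at least $\#\{\text{fractionals}\}$ of them must come from the $F$ general rows. This gives at most $F$ fractional coordinates. The counting includes the degenerate case where both sides of a slab coincide, since that only adds a tight row.

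\emph{Feasibility of the rounded set.} Set $S:=\{i:x_i=1\}$. Its indicator $\mathbf{1}_S$ is obtained from $x$ by zeroing every fractional coordinate, so $\mathbf{1}_S\le x$ coordinatewise. Because $a_{ji}\ge 0$, we get $A\mathbf{1}_S\le Ax\le b$, so $S\in\F$. Downward closure of packing under nonnegative coefficients is all that is used here. This is the only step where the packing structure enters, and it is exactly why the covering counterpart (Lemma~\ref{lem:lp-lift}) raises fractionals to $1$ instead of zeroing them.

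\emph{Signature sums.} Let $P$ be the set of fractional coordinates, so $\lvert P\rvert\le F$. Nonnegativity of $\sigma$ gives
\[
0\le(\sigma^\top x)_\ell-\Sigma_\ell(S)=\sum_{i\in P}x_i\sigma_{i\ell}\le F\max_{i\in P}\sigma_{i\ell}.
\]
The triangle inequality with the slab $\lvert(\sigma^\top x)_\ell-z_\ell\rvert\le\delta_\ell$ then yields the stated two-sided bound on $\lvert\Sigma_\ell(S)-z_\ell\rvert$. In the same way, nonnegativity of $\tau$ gives
\[
\Gamma_c(S)\ge(\tau^\top x)_c-F\max_{i\in P}\tau_{ic}\ge\gamma_c-F\max_{i\in P}\tau_{ic}.
\]

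There is no real obstacle in this argument. The one point needing care is the vertex count, and in particular the decision to split each slab into two rows, which is what produces the factor $2D$ in $F$.
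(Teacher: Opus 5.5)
Your proposal is correct and follows the same route as the paper's proof. A vertex has at most $F=d+2D+C$ fractional coordinates, zeroing preserves $Ax\le b$ because $A$ is nonnegative, and each dropped item moves each $\sigma$- or $\tau$-sum by at most $x_i\sigma_{i\ell}\le\sigma_{i\ell}$ (resp.\ $x_i\tau_{ic}\le\tau_{ic}$); you spell out the tight-constraint count and the triangle-inequality step that the paper leaves implicit.
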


\begin{proof}
A nonempty polytope defined by $d+2D+C$ non-box inequalities and box constraints has a BFS, and a BFS has at most that many strictly fractional coordinates.
Packing inequalities have nonnegative coefficients, so setting a coordinate to $0$ preserves $Ax\le b$ and membership in $\F$.
Each dropped item changes a $\sigma$- or $\tau$-coordinate by at most the corresponding entry of that item.
Covering inequalities need not be preserved exactly: the displayed lower bound records the worst-case loss.
\end{proof}

\begin{lemma}[Raising fractionals]
\label{lem:lp-lift}
Let $\F$ be a $d$-dimensional covering family, let $\sigma_i\in\R_+^D$, and let $\tau_i\in\R_+^C$.
Suppose the polytope
\begin{equation}
\label{eq:sig-lp-cov}
\begin{aligned}
Ax&\ge b,\\
\lvert (\sigma^\top x)_\ell-z_\ell\rvert&\le\delta_\ell
&&(\ell=1,\dots,D),\\
(\tau^\top x)_c&\ge\gamma_c
&&(c=1,\dots,C),\\
0&\le x\le 1
\end{aligned}
\end{equation}
is nonempty.
A basic feasible solution has at most $F:=d+2D+C$ strictly fractional coordinates.
Raising those coordinates to $1$ yields an integral $S\in\F$.
Writing $\Sigma(S)$ and $\Gamma(S)$ for the realized $\sigma$- and $\tau$-sums,
\[
\bigl\lvert\Sigma_\ell(S)-z_\ell\bigr\rvert
\le
\delta_\ell+F\max_{i:\,0<x_i<1}\sigma_{i\ell},
\qquad
\Gamma_c(S)
\ge
\gamma_c.
\]
\end{lemma}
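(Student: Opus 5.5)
The proof mirrors Lemma~\ref{lem:lp-drop}, with the monotone direction reversed; there are three things to check.

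\emph{Step 1 (BFS).} Since $0\le x\le 1$ bounds the polytope~\eqref{eq:sig-lp-cov}, it is pointed, so when nonempty it has a vertex $x^*$. At $x^*$, $n$ linearly independent constraints are tight. There are at most $d+2D+C$ non-box inequalities (each slab $\lvert(\sigma^\top x)_\ell-z_\ell\rvert\le\delta_\ell$ is two inequalities). So at least $n-(d+2D+C)$ of the tight constraints are box constraints $x_i=0$ or $x_i=1$, and these must involve distinct coordinates. Hence at most $F=d+2D+C$ coordinates of $x^*$ are strictly fractional.

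\emph{Step 2 (covering and the $\tau$-constraints survive).} Let $S=\{i:x^*_i>0\}$, i.e.\ raise every fractional coordinate to $1$, and let $\chi_S\ge x^*$ denote its indicator vector. Both $A$ and $\tau$ have nonnegative entries, so increasing coordinates can only increase $Ax$ and $\tau^\top x$. This gives $A\chi_S\ge Ax^*\ge b$, so $S\in\F$ by Definition~\ref{def:covering}. It likewise gives $\Gamma_c(S)\ge(\tau^\top x^*)_c\ge\gamma_c$ with no loss. This is precisely where raising improves on dropping in Lemma~\ref{lem:lp-drop}.

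\emph{Step 3 (signature slabs).} For each $\ell$,
\[
\Sigma_\ell(S)-(\sigma^\top x^*)_\ell=\sum_{i:\,0<x^*_i<1}(1-x^*_i)\sigma_{i\ell}.
\]
This difference lies in $[0,F\max_{i:\,0<x^*_i<1}\sigma_{i\ell}]$, because $\sigma\ge 0$ and there are at most $F$ fractional indices. Combining with $\lvert(\sigma^\top x^*)_\ell-z_\ell\rvert\le\delta_\ell$ and the triangle inequality gives the stated bound.

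There is no real obstacle here. The only point needing care is Step~1: a vertex exists (the polytope is bounded), and tight box constraints on the same coordinate cannot both count toward linear independence. The count of $n$ linearly independent tight constraints handles both.
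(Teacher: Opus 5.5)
Your proof is correct and takes the same route as the paper: the BFS fractional count from Lemma~\ref{lem:lp-drop}, preservation of $Ax\ge b$ and the $\tau$-constraints under raising because all coefficients are nonnegative, and a per-item bound on how far each $\sigma$-slab moves. You spell out what the paper leaves implicit, namely the vertex-counting argument and the exact shift $\sum_{0<x^*_i<1}(1-x^*_i)\sigma_{i\ell}\in[0,F\max_{0<x^*_i<1}\sigma_{i\ell}]$.
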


\begin{proof}
A BFS has at most $d+2D+C$ strictly fractional coordinates, as in Lemma~\ref{lem:lp-drop}.
Covering inequalities have nonnegative coefficients, so setting a coordinate to $1$ preserves $Ax\ge b$ and membership in $\F$.
Each raised item changes a $\sigma$-coordinate by at most the corresponding entry; every covering signature inequality can only improve.
\end{proof}

\begin{lemma}[Occupancy robustness]
\label{lem:occ-robust}
Assume the fold and grid of Theorem~\ref{thm:disc}.
Let $H\subseteq[n]$ and let $R_+\subseteq\{0,\dots,h-1\}$.
Let $S=H\cup R$ and $S'=H\cup R'$ be disjoint unions, and write $\mu_r(\,\cdot\,)$ for layer-$r$ occupancy means.
Suppose that every residual coordinate on every layer $r\notin R_+$ is at most $\eta_{\mathrm{LP}}$, that $\lvert\mu_r(R)-\mu_r(R')\rvert\le 4\eps^2$ for every $r\notin R_+$, and that every $r\in R_+$ has $\mu_r(S),\mu_r(S')\ge\Lambda$.
Then
\[
\bigl\lvert\E[\Topk(S)]-\E[\Topk(S')]\bigr\rvert\le O(\eps)\,W.
\]
\end{lemma}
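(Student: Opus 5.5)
The plan is to compare $S$ and $S'$ one layer at a time through the layer-cake identity of Lemma~\ref{lem:layer}. After the fold and grid of Theorem~\ref{thm:disc}, every variable takes values on the $\delta$-grid with $\delta=\eps W$ and $h=1/\eps^2$ layers, so
\[
\E[\Topk(S)]-\E[\Topk(S')]=\delta\sum_{r=0}^{h-1}\bigl(f(p_{\cdot,r}(S))-f(p_{\cdot,r}(S'))\bigr).
\]
It therefore suffices to prove a per-layer bound of $O(\eps^2)$ on every layer. Since $\delta h=W/\eps$, this sums to $O(\eps)W$ with no factor of $k$.

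\emph{Raised layers $r\in R_+$.} Both means are at least $\Lambda$, so Lemma~\ref{lem:pb-chernoff} gives $k-f\le\eps^2$ for each of $S$ and $S'$. Since $f\le k$, the gap is at most $\eps^2$.

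\emph{Non-raised layers $r\notin R_+$.} Write $N_r(S)=N_r(H)+N_r(R)$ with the two parts independent, and set $M:=N_r(H)$, which is common to both sets. Every residual coordinate is at most $\eta_{\mathrm{LP}}$. Le Cam (Lemma~\ref{lem:lecam}, with $M$ as the independent addend) replaces $N_r(R)$ by $\mathrm{Poisson}(\mu_r(R))$ at cost $k\sum_{i\in R}p_{i,r}^2\le k\eta_{\mathrm{LP}}\mu_r(R)$, and does the same for $R'$. Lemma~\ref{lem:pois-mean} then moves the Poisson mean from $\mu_r(R)$ to $\mu_r(R')$ at cost at most $4\eps^2$.

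The main obstacle is that the Le Cam cost needs $\mu_r(R)=O(\Lambda_+)$, while the hypotheses only bound $R$'s coordinates, not its total mean. I split by the size of the mean.

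\emph{Case $\mu_r(R)\le\Lambda_++1$.} Then $k\eta_{\mathrm{LP}}\mu_r(R)\le k\eta_{\mathrm{LP}}(\Lambda_++1)=O(\eps^2)$ by the choice $\eta_{\mathrm{LP}}=\eps^2/(2(k\Lambda_++F_{\max}))$, and the same holds for $R'$ because $\mu_r(R')\le\mu_r(R)+4\eps^2$.

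\emph{Case $\mu_r(R)>\Lambda_++1$.} Both $\mu_r(S)$ and $\mu_r(S')$ then exceed $\Lambda$, so Lemma~\ref{lem:pb-chernoff} puts both values of $f$ within $\eps^2$ of $k$. This is the same saturation fallback as for the raised layers, and it makes the first case's restriction harmless.

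The total per-layer error is therefore at most $2\cdot\tfrac12\eps^2+4\eps^2+O(\eps^2)=O(\eps^2)$. Multiplying by $\delta$ and summing over the $h$ layers gives the claim. Two small details need checking. The first is that $M$ is independent of the residual, which holds because $H$ is disjoint from $R$ and from $R'$. The second is that $k\eta_{\mathrm{LP}}$ times the threshold really is $O(\eps^2)$ at the threshold chosen in the case split. If that fails, I would move the split threshold down to $\Lambda+1$ and use saturation above it.
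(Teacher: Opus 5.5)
Your proposal is correct and follows essentially the paper's proof. Layer cake reduces the claim to per-layer $O(\eps^2)$ gaps; raised layers are handled by saturation, and non-raised layers by Le Cam (with the common $N_r(H)$ as the independent addend) plus the Poisson-mean Lipschitz bound.

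Your case split on $\mu_r(R)$ is a small but real improvement. The paper simply asserts that residual means on non-raised layers are at most $\Lambda_++O(\eps^2)$. That bound is not a hypothesis of the lemma and holds only in its application, via $\mu_r(S^*)<\Lambda_+$. Your fallback for $\mu_r(R)>\Lambda_++1$ puts both $\mu_r(S)$ and $\mu_r(S')$ above $\Lambda$ and saturates them, so the lemma holds as stated. Your threshold check $k\eta_{\mathrm{LP}}(\Lambda_++1)\le\eps^2$ also goes through, because $\Lambda_+\ge 1$.
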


\begin{proof}
Fix a layer $r\notin R_+$.
Write $N_H$ for the occupancy of $H$ and $N_R$, $N_{R'}$ for the residuals.
The two full occupancies are $N_H+N_R$ and $N_H+N_{R'}$, with $N_H$ common.
Let $\mu=\mu_r(R)$ and $\mu'=\mu_r(R')$.
Lemma~\ref{lem:lecam} gives
\[
\bigl\lvert\E[\min(k,N_H+N_R)]-\E[\min(k,N_H+Z)]\bigr\rvert
\le
k\sum_{i\in R}p_{i,r}^2
\le
k\eta_{\mathrm{LP}}\mu,
\]
where $Z\sim\mathrm{Poisson}(\mu)$ is independent of $N_H$, and likewise for $R'$ at cost $k\eta_{\mathrm{LP}}\mu'$.
Lemma~\ref{lem:pois-mean} moves the Poisson mean from $\mu$ to $\mu'$ at cost $\lvert\mu-\mu'\rvert\le 4\eps^2$.
Unsaturated residual means satisfy $\mu,\mu'\le\Lambda_++O(\eps^2)$, and the choice of $\eta_{\mathrm{LP}}$ yields $k\eta_{\mathrm{LP}}\Lambda_+\le\eps^2/2$.
Hence $\lvert f(p_{\cdot,r}(S))-f(p_{\cdot,r}(S'))\rvert\le O(\eps^2)$.
(The aligned $\ell_1$ identity $\|p(S)-p(S')\|_1=\lvert\mu-\mu'\rvert$ is false when $R$ and $R'$ contain different items; Poissonization does not require a coupling of items.)
On a layer $r\in R_+$ both total means are at least $\Lambda$, so $f\ge k-O(\eps^2)$ in both sets by Lemma~\ref{lem:layer-sat}.
Summing $\delta\sum_r$ with $\delta h=W/\eps$ produces $O(\eps)\,W$.
\end{proof}

\begin{proof}[Proof of Theorem~\ref{thm:packing}]
Run Algorithm~\ref{alg:transfer} with a $(1/2)$-approximate max-sum oracle given by a packing PTAS~\citep{frieze1984,ibarra1975}, and set $W:=kT_0$ at the halt.
Theorem~\ref{thm:transfer}(ii) yields $W\le\OPT\le 5W$.
Work thereafter with the folded laws at this $W$.
Throughout this proof, $\eps$ is the \emph{internal} accuracy of the fold, the grids, and the LP; write $\eps_{\mathrm{out}}$ for the requested PTAS accuracy.
On the large-$k$ range one may take $\eps=\Theta(\eps_{\mathrm{out}})$.
On the small-$k$ range Theorem~\ref{thm:disc} is run at this internal $\eps$ (with an absolute $c=O(1)$); the substitution $\eps\leftarrow\Theta(\eps_{\mathrm{out}}/k)$ of Corollary~\ref{cor:ptas} is applied only at the end of that case.
Let $S^*$ be optimal for the preprocessed instance.
The algorithm enumerates every guess below, realizes an integral packing from each feasible residual LP, and returns the set of largest true $\E[\Topk]$; it is enough to exhibit one guess whose output is an $O(\eps)\,W$-approximation of $S^*$.

\paragraph{Case $k\ge C/\eps^2$.}
Run the preprocessing of Algorithm~\ref{alg:quantile} (with the packing scale $W\le\OPT\le 5W$ in place of the exact-sum scale $2W$) and let $T^*$ be a mixture quantile of $S^*$ on the grid $G\cup\{0\}$.
Set $\eta:=\eps$ and call $i$ surplus-heavy if $\E[\Tr(X_i,T^*)]>\eta W$.
Lemma~\ref{lem:c1} and $\E[\Topk(S^*)]\le 5W$ give $\E[A_{T^*}(S^*)]=O(W)$, hence $\Phi_{S^*}(T^*)=O(W)$.
Thus $S^*$ contains at most $\Phi_{S^*}(T^*)/(\eta W)+1=O(1/\eps)$ surplus-heavies.
Occupancy coordinates of any single item are at most $1$; one does not enumerate occupancy-heavies.
Enumerate every subset $H$ of surplus-heavies of size $O(1/\eps)$ in time $n^{O(1/\eps)}$, and guess a residual target $z$ for $(\mu^>,\mu^\ge,\Phi)$ of $S^*\setminus H$ on the same $\eps$-grid as Algorithm~\ref{alg:quantile} ($O(\poly(n,1/\eps))$ targets).
If $H$ is infeasible skip it.
The residual LP is~\eqref{eq:sig-lp} on $[n]\setminus H$ with residual capacities $b-a(H)$, $D=3$, $C=0$, and $\delta_\ell$ the quantization of Theorem~\ref{thm:quantile} ($O(\eps)$ in occupancy, $O(\eps)\,W$ in surplus).
The indicator of $S^*\setminus H$ is feasible for the correct guess, so a BFS exists.
Lemma~\ref{lem:lp-drop} drops at most $d+6$ fractionals.
Occupancy of the integral residual changes by $O(d)=O(1)$, which is a window slack $\delta=O(1)$ in Lemma~\ref{lem:c1}; then $\eta_k+O((\delta+1)/k)=O(\eps)$ because $k=\Omega(1/\eps^2)$.
Surplus of the residual changes by at most $(d+6)\eta W=O(\eps)\,W$, and $H$ is common to $S^*$ and the output.
The comparison
\[
\E[\Topk(H\cup S)]
\ge
(1-O(\eps))\,\E[A_{T^*}(H\cup S)]
\ge
(1-O(\eps))\,\E[\Topk(S^*)]-O(\eps)\,W
\]
is Lemma~\ref{lem:c1} with occupancy slack $O(1)$ and surplus slack $O(\eps)\,W$.
Taking $\eps=\Theta(\eps_{\mathrm{out}})$ yields an $O(\eps_{\mathrm{out}})\,\OPT$ guarantee.

\paragraph{Case $k\le C/\eps^2$.}
Let $h=1/\eps^2$ and $\Lambda=2k+C_0\log(k/\eps)$ as in Section~\ref{sec:sig}.
Write $F_{\max}:=d+2h$, $\Lambda_+:=\Lambda+F_{\max}$, and
\[
\eta_{\mathrm{LP}}
:=
\frac{\eps^2}{2(k\Lambda_++F_{\max})}.
\]
Guess the set $R_+$ of layers at which $S^*$ has mean at least $\Lambda_+$ ($2^{h}$ guesses).
Call $i$ LP-heavy if $\max_{r\notin R_+}p_{i,r}>\eta_{\mathrm{LP}}$.
This predicate does not depend on $S^*$.
On each layer $r\notin R_+$ one has $\mu_r(S^*)<\Lambda_+$, so $S^*$ contains at most $\Lambda_+/\eta_{\mathrm{LP}}$ LP-heavies on that layer, and at most $h\Lambda_+/\eta_{\mathrm{LP}}$ LP-heavies in total.
Enumerate every set $H$ of LP-heavies of that cardinality in time $n^{O(h\Lambda_+(k\Lambda_++F_{\max})/\eps^2)}$.
For $k=O(1/\eps^2)$ this is $n^{f(d,1/\eps)}$.
If $H$ is infeasible skip it.
The residual universe is the set of non-LP-heavies; every remaining item satisfies $p_{i,r}\le\eta_{\mathrm{LP}}$ for all $r\notin R_+$.

Let $L:=h-|R_+|$ and write $t_r^*:=\mu_r(S^*\setminus H)$ for $r\notin R_+$.
Then $0\le t_r^*<\Lambda_+$.
Guess residual tiny masses $t_r$ in steps of $\eps^2$ on $[0,\Lambda_+]$ ($\bigl(O(\Lambda_+/\eps^2)\bigr)^{O(h)}$ guesses) so that $\lvert t_r-t_r^*\rvert\le\eps^2$ for the correct guess.
The residual LP on the non-LP-heavies, with capacities $b-a(H)$, is~\eqref{eq:sig-lp} with
$D=L$ tiny-mass coordinates, $\delta_r=\eps^2$, $z_r=t_r$, and covering inequalities
\[
\sum_i p_{i,r}x_i
\ge
\bigl(\Lambda_+-\mu_r(H)\bigr)_+
\qquad(r\in R_+),
\]
so $C=|R_+|$.
The number of non-box constraints is
\[
d+2L+|R_+|
=
d+2\bigl(h-|R_+|\bigr)+|R_+|
=
d+2h-|R_+|
\le
F_{\max}.
\]
The indicator of $S^*\setminus H$ is feasible for the correct $(R_+,H,(t_r))$: packing holds, each tiny-mass slab holds by the choice of $t_r$, and each covering holds because $\mu_r(S^*)\ge\Lambda_+$ on $R_+$.
Lemma~\ref{lem:lp-drop} returns an integral residual $R$ after dropping at most $F_{\max}$ fractionals.
Zeroing preserves packing, so $S:=H\cup R\in\F$.
Each covering coordinate of a residual item is at most $1$, so after the drop
\[
\mu_r(S)
=
\mu_r(H)+\mu_r(R)
\ge
\mu_r(H)+\bigl(\Lambda_+-\mu_r(H)\bigr)_+-F_{\max}
\ge
\Lambda
\qquad(r\in R_+).
\]
(If the covering right-hand side is zero then already $\mu_r(H)\ge\Lambda_+\ge\Lambda$.)
On every $r\notin R_+$ each dropped item contributes at most $\eta_{\mathrm{LP}}$, so
\[
\bigl\lvert\mu_r(R)-t_r\bigr\rvert
\le
\eps^2+F_{\max}\eta_{\mathrm{LP}}
\le
\tfrac32\eps^2,
\]
hence $\lvert\mu_r(R)-\mu_r(S^*\setminus H)\rvert\le\tfrac52\eps^2\le 4\eps^2$.
Every residual coordinate is at most $\eta_{\mathrm{LP}}$, so Lemma~\ref{lem:occ-robust} applies (via Le Cam, not via aligned $\ell_1$).
Thus $\lvert\E[\Topk(S)]-\E[\Topk(S^*)]\rvert\le O(\eps)\,W$.

After $\eps\leftarrow\Theta(\eps_{\mathrm{out}}/k)$ as in Corollary~\ref{cor:ptas}, the error is $O(\eps_{\mathrm{out}})\,\OPT$.
The $n$-exponent remains a function of $d$ and $1/\eps_{\mathrm{out}}$ only: throughout this case the internal parameter satisfies $k=O(1/\eps^2)$, and after the substitution one has $h=O(1/\eps_{\mathrm{out}}^6)$ and $h\Lambda_+(k\Lambda_++F_{\max})/\eps^2=O_d(1/\eps_{\mathrm{out}}^{O(1)})$.

The two cases meet at $k=\Theta(1/\eps^2)$.
\end{proof}

\section{Covering LP realization}
\label{app:covering}

\begin{proof}[Proof of Theorem~\ref{thm:covering}]
Run Algorithm~\ref{alg:min-transfer} with a $(1+O(\eps))$-approximate min-sum oracle given by the covering analogue of~\citet{frieze1984}, and set $W:=\E[\Topk(S_0)]$.
Theorem~\ref{thm:min-transfer} yields $\OPT_{\min}\le W\le O(\OPT_{\min})$.
Fold and grid at this $W$.
Throughout, $\eps$ is internal accuracy and $\eps_{\mathrm{out}}$ is the requested PTAS accuracy, as in the proof of Theorem~\ref{thm:packing}.
Let $S^*$ be a min-optimum of the preprocessed instance ($\E[\Topk(S^*)]\le W$).
The algorithm enumerates every guess below, realizes an integral covering from each feasible residual LP by Lemma~\ref{lem:lp-lift}, and returns the set of smallest folded $\E[\Topk]$.

\paragraph{Case $k\ge C/\eps^2$.}
Run the preprocessing of Algorithm~\ref{alg:quantile} at this $W$ and let $T^*$ be a mixture quantile of $S^*$.
Set $\eta:=\eps$ and call $i$ surplus-heavy if $\E[\Tr(X_i,T^*)]>\eta W$.
Lemma~\ref{lem:c1} and $\E[\Topk(S^*)]\le W$ give $\Phi_{S^*}(T^*)=O(W)$, hence $O(1/\eps)$ surplus-heavies.
Enumerate every such $H$ in time $n^{O(1/\eps)}$ and guess a residual target $z$ for the triple of $S^*\setminus H$.
The residual LP is~\eqref{eq:sig-lp-cov} on $[n]\setminus H$ with residual demands $(b-a(H))_+$, $D=3$, $C=0$, and the $\eps$-grid of Theorem~\ref{thm:quantile}.
Lemma~\ref{lem:lp-lift} raises at most $d+6$ fractionals.
Occupancy of the integral residual increases by $O(1)$, a window slack $\delta=O(1)$ in Lemma~\ref{lem:c1}; surplus increases by at most $(d+6)\eta W=O(\eps)\,W$.
Matched types therefore satisfy $\E[\Topk(H\cup S)]\le(1+O(\eps))\,\E[\Topk(S^*)]$ on the folded laws, as in the proof of Corollary~\ref{cor:min-allk}.

\paragraph{Case $k\le C/\eps^2$.}
Use $h$, $\Lambda$, $F_{\max}$, $\Lambda_+$, and $\eta_{\mathrm{LP}}$ as in the proof of Theorem~\ref{thm:packing}.
Guess $R_+$ and enumerate LP-heavies of $S^*$ in time $n^{O(h\Lambda_+(k\Lambda_++F_{\max})/\eps^2)}$.
The residual LP is~\eqref{eq:sig-lp-cov} with the same tiny-mass slabs and the same raised-layer covering inequalities as in that proof.
Lemma~\ref{lem:lp-lift} raises at most $F_{\max}$ fractionals.
Each non-raised residual mean increases by at most $F_{\max}\eta_{\mathrm{LP}}\le\eps^2/2$, so $\lvert\mu_r(R)-\mu_r(S^*\setminus H)\rvert\le 4\eps^2$ after the same quantization.
Raised layers only gain mass and remain at least $\Lambda$.
Lemma~\ref{lem:occ-robust} yields $\lvert\E[\Topk(S)]-\E[\Topk(S^*)]\rvert\le O(\eps)\,W$ on the folded laws.
After $\eps\leftarrow\Theta(\eps_{\mathrm{out}}/k)$ the error is $O(\eps_{\mathrm{out}})\,\OPT_{\min}$ and the $n$-exponent is $f(d,1/\eps_{\mathrm{out}})$.

In both cases, after the same internal-accuracy substitution as in the proof of Corollary~\ref{cor:min-allk}, the folded value $U$ of the returned set satisfies $U\le\OPT_{\min}+2\eps W\le(1+4\eps)W$ for $\eps\le 1/20$.
Lemma~\ref{lem:fold-back} therefore yields
\[
\E[\Topk(S;X)]
\le
\frac{U}{1-4\eps}
\le
(1+5\eps)\,U
\le
(1+O(\eps))\,\OPT_{\min}.
\]
\end{proof}

\section{Proofs for Section~\ref{sec:exactsum}}
\label{app:exactsum}
\label{app:runtime}

\begin{proof}[Proof of Theorem~\ref{thm:main}]
Let $S^*$ be optimal and let $S$ be a set with $\Sg(S)=\Sg(S^*)$ returned by the enumeration (exact-sum finds one whenever $S^*$ is feasible).
Theorems~\ref{thm:disc} and~\ref{thm:sig} with $c=2$ give
\[
\E[\Topk(S)]
\ge\E[\Topk(S^*)]-O((4+k)\eps)\,W
\ge\bigl(1-O((k+1)\eps)\bigr)\OPT.
\]
On an unsaturated layer the number of big items is at most $\Lambda/\eta=k\Lambda^2/\eps^2$, so each of the $B\le 2k\Lambda^2/\eps^4$ bucket counts is an integer in $\{0,\dots,k\Lambda^2/\eps^2\}$.
Saturated layers store one bit.
The tiny coordinates contribute $n^{O(h)}=n^{O(1/\eps^2)}$ encodings.
The histogram encodings contribute
\[
\bigl(k\Lambda^2/\eps^2\bigr)^{O(hB)}
=(k\Lambda/\eps)^{O(k\Lambda^2/\eps^6)}.
\]
The product is the stated running time, and each exact-sum instance has the same magnitude.
The proof of Corollary~\ref{cor:ptas} is the substitution $\eps\leftarrow\eps/(C(k+1))$ in these bounds: $h$ becomes $\Theta(k^2/\eps^2)$ and $hB$ becomes $O(k^7\Lambda'^2/\eps^6)$.
On $k\le C/\eps^2$ the exponent of $n$ is $O(1/\eps^6)$.
\end{proof}

\begin{proof}[Proof of Theorem~\ref{thm:quantile}]
Let $S^*$ be optimal for the original instance.
The fold changes $\E[\Topk]$ by $O(\eps)\,W$ (Lemma~\ref{lem:fold} with $c=2$), and the multiplicative rounding changes it by a further $O(\eps)\,\OPT$ (Lemma~\ref{lem:mult-round} with the same $c$).
Work with the folded, rounded laws, and write $S^*$ also for an optimal set of those laws.
If $\lvert S^*\rvert\le k$, then $\Topk(S^*)=\sum_{i\in S^*}X_i=A_0(S^*)$.
At $T=0$ one has $\Pr[X_i\ge 0]=1$, so the second coordinate of the triple determines $\lvert S\rvert$ exactly.
The triple of $S^*$ is among the enumerated targets, and any $S$ with the same triple satisfies $\lvert S\rvert=\lvert S^*\rvert\le k$ and $\lvert\Phi_S(0)-\Phi_{S^*}(0)\rvert\le\eps W$, hence $\E[\Topk(S)]=\E[A_0(S)]\ge\E[\Topk(S^*)]-O(\eps)\,W$.

Now assume $\lvert S^*\rvert\ge k$, and let $T^*$ be a mixture quantile of $S^*$ (Definition~\ref{def:mix-q}); after Lemma~\ref{lem:mult-round} one may take $T^*\in G\cup\{0\}$.
Exact-sum returns a set $S$ whose quantized triple at $T^*$ equals that of $S^*$.
Then $\lvert\mu_S^>(T^*)-\mu_{S^*}^>(T^*)\rvert\le O(\eps)$, $\lvert\mu_S^\ge(T^*)-\mu_{S^*}^\ge(T^*)\rvert\le O(\eps)$, and $\lvert\Phi_S(T^*)-\Phi_{S^*}(T^*)\rvert\le O(\eps)\,W$
(each coordinate of the triple is rounded to units of size $\Theta(\eps)$ after summing at most $n$ terms).
In particular $T^*$ is an $O(\eps)$-approximate mixture quantile of $S$, so Lemma~\ref{lem:c1} applies to both $S$ and $S^*$ with slack $\delta=O(\eps)$.
The hypothesis $k\ge C/\eps^2$ with $C\ge 2$ makes $\eta_k+O((\eps+1)/k)=O(\eps)$.
Therefore
\begin{align*}
\E[\Topk(S)]
&\ge
(1-O(\eps))\,\E[A_{T^*}(S)]\\
&\ge
(1-O(\eps))\bigl(\E[A_{T^*}(S^*)]-O(\eps)\,W\bigr)\\
&\ge
(1-O(\eps))\,\E[\Topk(S^*)]-O(\eps)\,W
\ge
(1-O(\eps))\,\E[\Topk(S^*)],
\end{align*}
using $\E[\Topk(S^*)]\ge W$.
Evaluating the true objective on every realized set can only help.
The running time is $O(\eps^{-1}\log(n/\eps))$ candidate thresholds, and $\poly(n,1/\eps)$ mixed-radix targets at each threshold, hence $\poly(n,1/\eps)$ exact-sum calls of magnitude $\poly(n,1/\eps)$.
(Enumerating every distinct input atom as a candidate $T$ also hits $T^*$ exactly, with no multiplicative rounding.)
\end{proof}

A signature stores, at each of $h=1/\eps^2$ layers, a tiny count in $\{0,\dots,n\cdot\eta/\gamma\}$ and a histogram over $B\le 1/\beta=2k\Lambda^2/\eps^4$ buckets, where $\Lambda=2k+C_0\log(k/\eps)$ and $\eta/\gamma=n/(k\Lambda)$.
Thus $hB=2k\Lambda^2/\eps^6$.
Unsaturated layers have at most $\Lambda/\eta=k\Lambda^2/\eps^2$ big items, so each bucket count lies in $\{0,\dots,k\Lambda^2/\eps^2\}$.
Saturated layers store a single bit.
Mixed radix therefore produces
\[
n^{O(1/\eps^2)}\,(k\Lambda/\eps)^{O(k\Lambda^2/\eps^6)}
\]
integer targets, each of bit length
\[
O\bigl((1/\eps^2)\log n+(k\Lambda^2/\eps^6)\log(k\Lambda/\eps)\bigr).
\]
Substituting $\eps\leftarrow\eps/(C(k+1))$ as in Corollary~\ref{cor:ptas} replaces the exponent of $n$ by $O(k^2/\eps^2)$.
On the range $k\le C/\eps^2$ of Corollary~\ref{cor:allk} that exponent is $O(1/\eps^6)$.
Theorem~\ref{thm:quantile} uses a different encoding, of magnitude $\poly(n,1/\eps)$, and does not pass through this substitution.

\section{Hardness reductions}
\label{app:hardness}

The hardness statements below are proved at $k=1$ (hence they apply to every $k\ge 1$).
A $\rho$-approximation of a maximization problem means $\mathrm{ALG}\ge\rho\cdot\OPT$ with $\rho\in(0,1]$.
The reductions go from deterministic max-sum to $\E[\max]$ on the same family: an algorithm for the stochastic problem is used as a black box.
Theorems~\ref{thm:transfer}(i) and~\ref{thm:noptas} are uniform (no single algorithm for every $\F$ in the class); the witness families also have a per-instance lower bound.
Theorem~\ref{thm:limits} is uniform on the packing class: two-dimensional knapsack is a packing family.
Theorem~\ref{thm:min-limits} is the FPTAS lower bound on two-dimensional covering knapsack; there is no EPTAS claim.

\begin{lemma}
\label{lem:lottery-sand}
Let $w_i\ge 0$ and $q\in(0,1)$, and let $X_i=B(w_i,q)$ be independent.
For every $S\subseteq[n]$ write $w(S)=\sum_{i\in S}w_i$ and $\E[\max(S)]=\E[\max_{i\in S}X_i]$ (value $0$ if $S=\emptyset$).
Then
\[
q(1-q)^{|S|-1}\,w(S)
\;\le\;
\E[\max(S)]
\;\le\;
q\,w(S),
\]
and therefore $q(1-q)^{n-1}w(S)\le\E[\max(S)]\le q\,w(S)$.
\end{lemma}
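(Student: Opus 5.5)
The plan is to bound $\max_{i\in S}X_i$ pathwise from above by a union-type sum and from below by the event that exactly one coordinate fires, then take expectations. Nonnegativity of the weights and independence of the firing events are all that is needed.

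For the upper bound, note that $\max_{i\in S}X_i\le\sum_{i\in S}X_i$ pathwise, because every $X_i\ge 0$. Taking expectations gives $\E[\max(S)]\le\sum_{i\in S}\E[X_i]=q\,w(S)$. The empty set is covered trivially, since both sides are $0$.

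For the lower bound, write $E_i$ for the event that $X_i$ fires and all other $X_j$ with $j\in S\setminus\{i\}$ do not. These events are pairwise disjoint. On $E_i$ the maximum equals $w_i$, and elsewhere the maximum is nonnegative. Hence
\[
\max_{i\in S}X_i\ge\sum_{i\in S}w_i\mathbf{1}_{E_i}
\]
pathwise. By independence $\Pr[E_i]=q(1-q)^{|S|-1}$. Taking expectations yields $\E[\max(S)]\ge q(1-q)^{|S|-1}w(S)$.

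For the final display, use $|S|\le n$ and $1-q\in(0,1)$, so that $(1-q)^{|S|-1}\ge(1-q)^{n-1}$; this is the uniform form.

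No step is a real obstacle. The only point needing care is to use disjoint exactly-one events rather than a single-coordinate bound, so that the $w_i$ add up instead of being replaced by $\max_i w_i$.
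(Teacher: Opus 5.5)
Your proof is correct and follows the paper's argument essentially verbatim: the upper bound comes from $\max_{i\in S}X_i\le\sum_{i\in S}X_i$, the lower bound from the disjoint exactly-one-fires events of probability $q(1-q)^{|S|-1}$, and the uniform form from $|S|\le n$. Phrasing the events through the firing coins, rather than through $X_i=w_i$ as the paper does, even handles zero weights slightly more cleanly.
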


\begin{proof}
The upper bound uses $\max_{i\in S}X_i\le\sum_{i\in S}X_i$ and does not need independence.
For the lower bound let $A_i$ be the event that $X_i=w_i$ and $X_j=0$ for every $j\in S\setminus\{i\}$.
The events $A_i$ are disjoint, $\max=w_i$ on $A_i$, and $\Pr[A_i]=q(1-q)^{|S|-1}$ by independence, so
$\E[\max(S)]\ge\sum_{i\in S}w_i\Pr[A_i]=q(1-q)^{|S|-1}w(S)$.
If $S\neq\emptyset$ then $|S|-1\le n-1$ and $0<1-q<1$, hence $(1-q)^{|S|-1}\ge(1-q)^{n-1}$.
\end{proof}

\begin{lemma}
\label{lem:bern-ineq}
For $q\in[0,1]$ and an integer $n\ge 1$, $(1-q)^{n-1}\ge 1-(n-1)q$.
\end{lemma}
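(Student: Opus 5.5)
We prove the inequality $(1-q)^{n-1}\ge 1-(n-1)q$ by induction on $n$; this is Bernoulli's inequality in the form needed for the lottery transfer. For $n=1$ both sides equal $1$, so equality holds.

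For the inductive step, assume $(1-q)^{m-1}\ge 1-(m-1)q$ for some $m\ge 1$. Since $q\in[0,1]$, the factor $1-q$ is nonnegative, so multiplying both sides by it preserves the inequality:
\[
(1-q)^m\ge(1-q)\bigl(1-(m-1)q\bigr)=1-mq+(m-1)q^2\ge 1-mq .
\]
The last step drops the term $(m-1)q^2$, which is nonnegative because $m\ge 1$.

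The only point requiring care is that the multiplication step needs $1-q\ge 0$. This is exactly why the lemma is stated for $q\in[0,1]$. It does not matter whether the right-hand side $1-(m-1)q$ is negative, since the inductive hypothesis is used as stated.

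As an alternative, one can view $g(q)=(1-q)^{n-1}$ as a convex function on $[0,1]$. Its tangent line at $q=0$ is $1-(n-1)q$, and a convex function lies above its tangent lines.

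In the hardness reductions, this lemma combines with Lemma~\ref{lem:lottery-sand} at $q=\Theta(\eps/n)$. Together they give $(1-q)^{n-1}\ge 1-O(\eps)$, so the two sides of the lottery sandwich are within a factor $1-O(\eps)$ of each other.
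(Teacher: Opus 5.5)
Your proof is correct and takes the same approach as the paper: induction on the exponent, multiplying by $1-q\ge 0$ and dropping the nonnegative $(m-1)q^2$ term. The paper's proof states only ``induction on $t=n-1$'', and you have filled in those steps; your convexity/tangent-line remark is also valid but not needed.
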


\begin{proof}
The bound $(1-q)^t\ge 1-tq$ holds for $t=n-1\ge 0$ by induction on $t$.
\end{proof}

Write $\OPT_w=\max_{S\in\F}w(S)$ and $\OPT_E=\max_{S\in\F}\E[\max(S)]$.
Lemma~\ref{lem:lottery-sand} yields $q(1-q)^{n-1}\OPT_w\le\OPT_E\le q\,\OPT_w$.

\begin{lemma}
\label{lem:lottery-transfer}
If an algorithm returns $S\in\F$ with $\E[\max(S)]\ge\rho\,\OPT_E$, then
$w(S)\ge\rho(1-q)^{n-1}\OPT_w$.
\end{lemma}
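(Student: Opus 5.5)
The plan is to chain the two sides of the lottery sandwich of Lemma~\ref{lem:lottery-sand}, once for the returned set $S$ and once for an optimum of the deterministic problem, with the approximation hypothesis in between.

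\emph{Upper bound at $S$.} Applying the right-hand inequality of Lemma~\ref{lem:lottery-sand} to the output $S$ gives $\E[\max(S)]\le q\,w(S)$. This direction uses only $\max\le\sum$ and needs no independence.

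\emph{Lower bound at the optimum.} Let $S_w\in\F$ attain $\OPT_w$. By the left-hand inequality of Lemma~\ref{lem:lottery-sand} (with the uniform exponent $n-1$),
\[
\OPT_E\ge\E[\max(S_w)]\ge q(1-q)^{n-1}\,w(S_w)=q(1-q)^{n-1}\OPT_w .
\]
If $S_w=\emptyset$, then $\OPT_w=0$ and the claim is trivial because $w(S)\ge 0$.

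\emph{Chaining.} Combining the two bounds with the hypothesis,
\[
q\,w(S)\ge\E[\max(S)]\ge\rho\,\OPT_E\ge\rho\,q(1-q)^{n-1}\OPT_w .
\]
Dividing by $q>0$ yields $w(S)\ge\rho(1-q)^{n-1}\OPT_w$. There is no real obstacle here. The only point needing care is that the lower bound must be applied to a max-sum optimum rather than to $S$, and that $q\in(0,1)$ is strictly positive so the division is legitimate. Lemma~\ref{lem:bern-ineq} is not needed for this statement; it enters only later, when choosing $q=\Theta(\eps/n)$ so that $(1-q)^{n-1}\ge 1-O(\eps)$.
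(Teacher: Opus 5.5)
Your proposal is correct and follows essentially the same route as the paper. Both arguments apply the upper half of Lemma~\ref{lem:lottery-sand} to the returned set $S$, combine it with the bound $\OPT_E\ge q(1-q)^{n-1}\OPT_w$ obtained from the lower half at a max-sum optimum, and divide by $q>0$; your explicit handling of the empty optimum and your observation that Lemma~\ref{lem:bern-ineq} is not needed here are both accurate.
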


\begin{proof}
The upper bound of Lemma~\ref{lem:lottery-sand} gives $w(S)\ge\E[\max(S)]/q$.
Substitute $\E[\max(S)]\ge\rho\,\OPT_E$ and $\OPT_E\ge q(1-q)^{n-1}\OPT_w$.
\end{proof}

\begin{lemma}
\label{lem:lottery-transfer-min}
Write $\OPT_w^{\min}=\min_{S\in\F}w(S)$ and $\OPT_E^{\min}=\min_{S\in\F}\E[\max(S)]$.
Lemma~\ref{lem:lottery-sand} yields $q(1-q)^{n-1}\OPT_w^{\min}\le\OPT_E^{\min}\le q\,\OPT_w^{\min}$.
If an algorithm returns $S\in\F$ with $\E[\max(S)]\le\rho\,\OPT_E^{\min}$, then
$w(S)\le\rho\,(1-q)^{1-n}\,\OPT_w^{\min}$.
\end{lemma}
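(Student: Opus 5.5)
The plan is to mirror the proof of Lemma~\ref{lem:lottery-transfer}, with the inequalities reversed. The claimed sandwich on $\OPT_E^{\min}$ comes first. Take a min-sum optimum $S_w$ with $w(S_w)=\OPT_w^{\min}$. Feasibility of $S_w$ and the upper bound of Lemma~\ref{lem:lottery-sand} give
\[
\OPT_E^{\min}\le\E[\max(S_w)]\le q\,\OPT_w^{\min}.
\]
For the lower bound, take a minimizer $S_E$ of $\E[\max(\cdot)]$ over $\F$. The lower bound of Lemma~\ref{lem:lottery-sand} gives
\[
\OPT_E^{\min}=\E[\max(S_E)]\ge q(1-q)^{n-1}w(S_E)\ge q(1-q)^{n-1}\OPT_w^{\min}.
\]
If $S_E=\emptyset$, both sides are $0$ (then $\emptyset\in\F$ and $\OPT_w^{\min}=0$), so the inequality holds trivially. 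This is the only place where the convention $(1-q)^{|S|-1}\ge(1-q)^{n-1}$ for nonempty $S$ from Lemma~\ref{lem:lottery-sand} is needed.

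For the transfer, let $S\in\F$ satisfy $\E[\max(S)]\le\rho\,\OPT_E^{\min}$. I apply the lower bound of Lemma~\ref{lem:lottery-sand} to the returned set $S$ itself, not to the optimum:
\[
q(1-q)^{n-1}w(S)\le\E[\max(S)]\le\rho\,\OPT_E^{\min}\le\rho\,q\,\OPT_w^{\min},
\]
where the last step uses the upper half of the sandwich just proved. Dividing by $q(1-q)^{n-1}>0$ yields $w(S)\le\rho(1-q)^{1-n}\OPT_w^{\min}$. The case $S=\emptyset$ is trivial because $w(\emptyset)=0$.

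There is no real obstacle here. The only point to watch is the direction of each inequality. In the maximization lemma, the lossy factor $(1-q)^{n-1}$ entered through the optimum. Here it enters through the algorithm's output set, which is why it appears as $(1-q)^{1-n}\ge1$ multiplying $\rho$. I would also note, for later use in Theorem~\ref{thm:min-limits}, that Lemma~\ref{lem:bern-ineq} with $q=\Theta(\eps/n)$ makes $(1-q)^{1-n}\le 1+O(\eps)$. So a $(1+\eps)$-approximation for the stochastic minimum transfers to a $(1+O(\eps))$-approximation for min-sum.
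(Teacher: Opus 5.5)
Your proposal is correct and matches the paper's proof: apply the lower bound of Lemma~\ref{lem:lottery-sand} to the returned set $S$, chain $\E[\max(S)]\le\rho\,\OPT_E^{\min}\le\rho\,q\,\OPT_w^{\min}$, and divide by $q(1-q)^{n-1}>0$. Your explicit derivation of the sandwich through the minimizers $S_w$ and $S_E$ is a harmless addition; the paper leaves that step implicit.
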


\begin{proof}
The lower bound of Lemma~\ref{lem:lottery-sand} gives $w(S)\le\E[\max(S)]/\bigl(q(1-q)^{n-1}\bigr)$.
Substitute $\E[\max(S)]\le\rho\,\OPT_E^{\min}$ and $\OPT_E^{\min}\le q\,\OPT_w^{\min}$.
\end{proof}

\begin{proof}[Proof of Theorem~\ref{thm:transfer}(i)]
Let $G=(V,E)$ be an undirected graph, $n=|V|$, and let $\F$ be the family of independent sets of $G$ (membership is polynomial).
Set $w_v=1$ and $X_v=B(1,q)$ with $q=1/n^2$ (the case $n=1$ is trivial).
Then $\OPT_w=\alpha(G)$.
\citet{zuckerman2006}: for every $\delta>0$, unless $P=NP$, maximum independent set has no $n^{1-\delta}$-approximation, and in particular no constant-factor approximation.

Suppose a single polynomial algorithm $\rho$-approximates $\E[\max]$ on every such Bernoulli instance, for a constant $\rho\in(0,1]$.
Lemmas~\ref{lem:lottery-transfer} and~\ref{lem:bern-ineq} give
\[
|S|
\ge
\rho(1-q)^{n-1}\alpha(G)
\ge
\rho\bigl(1-(n-1)/n^2\bigr)\alpha(G)
\ge
\rho(1-1/n)\alpha(G).
\]
For $n\ge 2$ this is a $(\rho/2)$-approximation of independent set, a contradiction.
(Hamiltonian path does not substitute: the gap $(n-1)$ versus $(n-2)$ is $1-O(1/n)$ and does not rule out a constant factor.)
\end{proof}

\begin{proof}[Proof of Theorem~\ref{thm:noptas}]
Let $G=(V,E)$ be an undirected graph, write $n=|E|$, and let $\F=\{\delta(U):U\subseteq V\}$ (membership is polynomial).
Set $w_e=1$ and $X_e=B(1,q)$ with $q=1/n^2$.
Then $\OPT_w$ is the unweighted max-cut value.
Max-cut is $\mathrm{APX}$-complete~\citep{papadimitriou1991}; unless $P=NP$ it has no polynomial approximation better than $16/17$~\citep{hastad2001}, and therefore no PTAS.

If a single PTAS for $\E[\max]$ existed on every cut family (in particular on this $\F$), the choice $\eps=1/100$ would return a cut with
\[
w(\delta(U))
\ge
0.99\,(1-q)^{n-1}\OPT_w
\ge
0.99(1-1/n)\OPT_w
\]
by Lemmas~\ref{lem:lottery-transfer} and~\ref{lem:bern-ineq}.
For $n\ge 20$ the ratio exceeds $16/17$, a contradiction.

Deterministic edge weights would make $\max_{e\in\delta(U)}X_e$ the heaviest edge of the cut, which does not encode cut weight.
The sparse lotteries are essential.
The same family still has a constant-factor algorithm by Theorem~\ref{thm:transfer}(ii) and Goemans--Williamson.
\end{proof}

\begin{proof}[Proof of Theorem~\ref{thm:limits}]
Let $\F$ be the feasible sets of a two-dimensional $0$-$1$ knapsack instance with profits $v_i$ and capacities $(A,B)$.
This $\F$ is a packing family with $d=2$, so Theorem~\ref{thm:packing} applies; the DP in time $O(nAB)$ is not query-weight when $A$ and $B$ are binary.
Unless $P=NP$, the deterministic profit problem has no FPTAS~\citep{magazine1984}.
Unless $W[1]=FPT$, it has no EPTAS~\citep{kulik2010}.
(The deterministic problem does have a PTAS~\citep{frieze1984}.)

Suppose $\E[\max]$ on this $\F$ has an FPTAS $\mathcal{A}$.
Given a knapsack instance $I$ and $\eps\in(0,1)$, form $I'$ by setting $X_i=B(v_i,q)$ with $q=\eps/(2n)$ (bits $O(\log n+\log(1/\eps))$) and run $\mathcal{A}(I',\eps/2)$.
Lemma~\ref{lem:bern-ineq} gives $(1-q)^{n-1}\ge 1-\eps/2$, and Lemma~\ref{lem:lottery-transfer} yields
\[
v(S)
\ge
\bigl(1-\eps/2\bigr)(1-q)^{n-1}\OPT_w
\ge
\bigl(1-\eps/2\bigr)^2\OPT_w
\ge
(1-\eps)\OPT_w.
\]
The running time is polynomial in $|I|$ and $1/\eps$, so two-dimensional knapsack would have an FPTAS.

If instead $\mathcal{A}$ is an EPTAS, the same instance $I'$ and the same call $\mathcal{A}(I',\eps/2)$ run in time $f(2/\eps)\,|I'|^{O(1)}$.
The extra $O(n\log(n/\eps))$ bits in $I'$ keep the time of EPTAS form, and two-dimensional knapsack would have an EPTAS.

The firing probability must depend on $\eps$.
A fixed $q=1/n^2$ leaves a $\Theta(1/n)$ relative gap, which swallows the target $\eps$ once $\eps\ll 1/n$.
The definition of an FPTAS or an EPTAS allows the reduced instance to depend on $\eps$.
Do not apply the Garey--Johnson integer-OPT criterion directly to $\E[\max]$: the objective is not integer.
One-dimensional knapsack and cardinality cannot be used here, because deterministic max-sum already has an FPTAS (and cardinality $\E[\max]$ has an EPTAS).
\end{proof}

\begin{lemma}
\label{lem:cover-nofptas}
Unless $P=NP$, deterministic min-sum on two-dimensional $0$-$1$ covering knapsack admits no FPTAS, already when every cost equals $1$.
\end{lemma}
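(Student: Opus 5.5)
The plan is the standard Garey--Johnson argument for integer-valued objectives, supported by an explicit NP-hardness reduction. With every cost equal to $1$, the min-sum value of any feasible set is its cardinality, an integer in $\{0,\dots,n\}$. Suppose an FPTAS existed. Running it with accuracy $\eps:=1/(n+1)$ takes time polynomial in the input and in $n$, and it returns a set $S$ with $|S|\le(1+\eps)\OPT_w^{\min}<\OPT_w^{\min}+1$. Integrality then forces $|S|=\OPT_w^{\min}$. So an FPTAS would decide exactly, in polynomial time, whether some feasible set has cardinality at most $m$. It therefore suffices to show that this exact question is NP-hard for unit-cost two-dimensional covering knapsack.

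For the hardness I would reduce from cardinality-constrained Subset Sum: given positive integers $s_1,\dots,s_n$, a target $t$ and an integer $m\le n$, decide whether some $S$ with $|S|=m$ has $\sum_{i\in S}s_i=t$. This problem is NP-hard, for instance by padding Partition. Let $s_{\max}:=\max_i s_i$ and choose $L:=n\,s_{\max}+1$. Define two nonnegative coefficient vectors and two thresholds:
\[
a_i:=L+s_i,\qquad b_i:=L+s_{\max}-s_i,\qquad A:=mL+t,\qquad B:=m(L+s_{\max})-t.
\]
The covering family is $\F=\{S:a(S)\ge A,\ b(S)\ge B\}$, and all numbers have polynomial bit length.

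The analysis has three steps. First, every $S\in\F$ satisfies $|S|\ge m$. If $|S|\le m-1$, then
\[
a(S)\le(m-1)L+n\,s_{\max}<mL\le A,
\]
so $S\notin\F$. Second, if $|S|=m$, then $a(S)+b(S)=m(2L+s_{\max})=A+B$. Both covering inequalities can then hold only if both are equalities, and $a(S)=A$ is exactly $s(S)=t$. Conversely, any $m$-subset with $s(S)=t$ meets both constraints with equality. Third, combining the two steps, $\OPT_w^{\min}\le m$ holds if and only if the Subset Sum instance is a yes-instance. I also need $\F\neq\emptyset$ so that the minimum is defined, and one should check whether $[n]$ itself is feasible. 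If this edge case causes trouble, add dummy items (for instance one item with $a=A$, $b=B$) so that $\F$ is nonempty while items of that kind never lower the optimum below $m$. Handling these dummies without breaking the first step is the fiddly part.

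The main obstacle is calibrating the padding $L$ so that the first step (no feasible set smaller than $m$) holds together with the tightness $a(S)+b(S)=A+B$ at cardinality exactly $m$; the complementary choice $b_i=L+s_{\max}-s_i$ is what produces that tightness. A second point to confirm is the NP-hardness of the fixed-cardinality Subset Sum variant. The rest, namely the integrality argument turning an FPTAS into an exact algorithm, is routine.
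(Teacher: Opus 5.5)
Your proposal is correct and is essentially the paper's construction: both pad each item by a large constant and use a complementary second coefficient so that $a(S)+b(S)$ depends only on $|S|$, which forces the subset-sum equality at cardinality exactly $m$, and both then turn an FPTAS into an exact algorithm by integrality. The one difference is the source problem: the paper starts from plain Subset Sum and tries every $m\in\{1,\dots,n\}$, so it never needs the separate NP-hardness of the fixed-cardinality variant.

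Drop the dummy items. The one you suggest, with $a=A$ and $b=B$, is itself a feasible singleton and would destroy your first step. They are also unnecessary: once $t\le\sum_i s_i<L$ (which you may assume, since otherwise the instance is trivially a no-instance), every set of size greater than $m$ satisfies both inequalities. Hence $[n]\in\F$ whenever $n>m$, exactly as the paper observes.
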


\begin{proof}
The source problem is Subset Sum~\citep{garey1979}: given nonnegative integers $y_1,\ldots,y_n$ and a target $T$, all encoded in binary, decide whether some $S\subseteq[n]$ has $\sum_{i\in S}y_i=T$.
If $T=0$ the answer is yes.
If $T>\sum_i y_i$ the answer is no.
Assume $1\le T\le\sum_i y_i$ and write $W:=1+\sum_i y_i$.
For each cardinality $m\in\{1,\ldots,n\}$ form a covering instance $I^{(m)}$ with $n$ items of unit cost, covering coefficients $(y_i+W,\,W-y_i)$, and demands $(T+mW,\,mW-T)$.
The second demand is positive because $T<W$.

Let $S\subseteq[n]$ have size $s$.
The two covering inequalities are $y(S)+sW\ge T+mW$ and $sW-y(S)\ge mW-T$, i.e.
\[
y(S)\ge T+(m-s)W
\qquad\text{and}\qquad
y(S)\le T+(s-m)W.
\]
If $s<m$ then $T+(m-s)W\ge T+W>W-1\ge y(S)$, so $S$ is infeasible.
If $s=m$ the two bounds collapse to $y(S)=T$.
If $s>m$ both inequalities hold for every $S$, since $0\le y(S)\le W-1$.
Hence $I^{(m)}$ is feasible whenever $n>m$, and when $n=m$ if and only if $\sum_i y_i=T$.
Its optimum is at most $m$ if and only if some $m$-set sums to $T$.

Subset Sum is a yes-instance if and only if some $I^{(m)}$ has optimum at most $m$.
On a feasible $I^{(m)}$ the optimum is an integer in $\{1,\ldots,n\}$.
An FPTAS run at $\eps=1/(2n+2)$ therefore returns a value strictly less than $\mathrm{OPT}+1$ and decides each $I^{(m)}$ exactly, in time polynomial in the bit length of $(y,T)$.
Both inequalities are covering inequalities on the original items: the construction does not complement a packing feasible set.
\end{proof}

\begin{proof}[Proof of Theorem~\ref{thm:min-limits}]
Let $\F$ be the feasible sets of a two-dimensional $0$-$1$ covering knapsack instance with costs $v_i$ and demands $(A,B)$.
This $\F$ is a covering family with $d=2$, so Theorem~\ref{thm:covering} applies.
Deterministic min-sum on this $\F$ has a PTAS by the covering form of~\citet{frieze1984}, and has no FPTAS unless $P=NP$ by Lemma~\ref{lem:cover-nofptas}.
The paper of~\citet{kulik2010} is not used.

Suppose $\min\E[\max]$ on this $\F$ has an FPTAS $\mathcal{A}$.
Given a covering instance $I$ and $\eps\in(0,1)$, form $I'$ by setting $X_i=B(v_i,q)$ with $q=\eps/(2n)$ and run $\mathcal{A}(I',\eps/2)$.
Lemma~\ref{lem:bern-ineq} gives $(1-q)^{n-1}\ge 1-\eps/2$, and Lemma~\ref{lem:lottery-transfer-min} with $\rho=1+\eps/2$ yields
\[
v(S)
\le
\frac{1+\eps/2}{(1-q)^{n-1}}\,\OPT_w^{\min}
\le
\frac{1+\eps/2}{1-\eps/2}\,\OPT_w^{\min}
\le
(1+O(\eps))\,\OPT_w^{\min}.
\]
The running time is polynomial in $|I|$ and $1/\eps$, so two-dimensional covering knapsack would have an FPTAS, contradicting Lemma~\ref{lem:cover-nofptas}.
One-dimensional covering knapsack cannot be used: deterministic min-sum already has an FPTAS.
\end{proof}

\end{document}